\pgfplotsset{
    compat=newest,
    table/header=false,
    title style={font=\small},
    tick label style={font=\scriptsize},
    label style={font=\scriptsize},
    legend style={font=\scriptsize},
    legend cell align=left
}
\newcolumntype{L}{>{\raggedright\arraybackslash}X}
\newcolumntype{C}{>{\centering\arraybackslash}X}
\newcolumntype{R}{>{\raggedleft\arraybackslash}X}
\journal{arXiv}
\newcommand{\rz}{\mathbb{R}}
\newcommand{\gz}{\mathbb{Z}}
\newcommand{\cz}{\mathbb{C}}
\newcommand{\nz}{\mathbb{N}}
\newcommand{\bfc}{{\bf c}}
\newcommand{\bfe}{{\bf e}}
\newcommand{\bfp}{{\bf p}}
\newcommand{\bfs}{{\bf s}}
\newcommand{\bfx}{{\bf x}}
\newcommand{\bfy}{{\bf y}}
\newcommand{\bfI}{{\bf I}}
\newcommand{\bfR}{{\bf R}}
\newcommand{\xihat}{\hat{\xi}}
\newcommand{\beq}{\begin{equation}}
\newcommand{\eeq}{\end{equation}}
\newcommand{\beqs}{\begin{eqnarray}}
\newcommand{\eeqs}{\end{eqnarray}}
\newcommand{\beql}{\begin{equation} \label}
\newcommand{\half}{\frac{1}{2}}
\newcommand{\calA}{{\cal A}}
\newcommand{\calB}{{\cal B}}
\newcommand{\calC}{{\cal C}}
\newcommand{\calD}{{\cal D}}
\newcommand{\calF}{{\cal F}}
\newcommand{\calG}{{\cal G}}
\newcommand{\calH}{{\cal H}}
\newcommand{\calI}{{\cal I}}
\newcommand{\calK}{{\cal K}}
\newcommand{\calL}{{\cal L}}
\newcommand{\calN}{{\cal N}}
\newcommand{\calO}{{\cal O}}
\newcommand{\calP}{{\cal P}}
\newcommand{\calS}{{\cal S}}
\newcommand{\calU}{{\cal U}}
\newcommand{\calV}{{\cal V}}
\newcommand{\calZ}{{\cal Z}}
\newcommand{\hamil}{\mathfrak{H}}
\newtheorem{theorem}{Theorem}[section]
\newtheorem{lemma}[theorem]{Lemma}
\newtheorem{proposition}[theorem]{Proposition}
\theoremstyle{definition}
\theoremstyle{definition}
\theoremstyle{remark}
\DeclareMathAlphabet{\mathcalligra}{T1}{calligra}{m}{n}
\DeclareFontShape{T1}{calligra}{m}{n}{<->s*[2.2]callig15}{}
\newcommand{\scriptr}{\mathcalligra{r}\,}
\newcommand{\abs}[1]{\lvert#1\rvert}
\newcommand{\norm}[2]{\lVert#1\rVert_{#2}}
\newcommand{\innprod}[3]{\langle#1,#2\rangle_{#3}}
\newcommand{\Lpspc}[3]{\textsf{L}^{#1}_{#2}(#3)}
\newcommand{\pd}[2]{\frac{\partial#1}{\partial#2}}
\newcommand{\hpd}[3]{\frac{\partial^{#3}#1}{\partial#2^{#3}}}
\let\oldFootnote\footnote
\newcommand\nextToken\relax
\renewcommand\footnote[1]{%
    \oldFootnote{#1}\futurelet\nextToken\isFootnote}
\newcommand\isFootnote{%
    \ifx\footnote\nextToken\textsuperscript{,}\fi}
\DeclareMathOperator{\arctantwo}{arctan2}
\begin{document}


\begin{frontmatter}
\title{{Ab initio} framework for systems with helical symmetry: theory, numerical implementation and applications to torsional deformations in nanostructures}

\author[asb_ucla]{Amartya S.\ Banerjee}\corref{cor1}
\ead{asbanerjee@ucla.edu}
\cortext[cor1]{Corresponding author}
\address[asb_ucla]{Department of Materials Science and Engineering, University of California, Los Angeles, CA 90095, U.S.A}
\begin{abstract}
{We formulate and implement Helical Density Functional Theory (Helical DFT) --- a self-consistent first principles simulation method for nanostructures with helical symmetries. Such materials are well represented in all of nanotechnology, chemistry and biology, and prominent examples include  nanotubes, nanosprings, nanowires, miscellaneous chiral structures and important proteins.  The overwhelming preponderance of such \textit{helical structures} in all of science and engineering and the likelihood of these systems being associated with exotic materials properties, provides the motivation to develop systematic and predictive tools for their study.}

{Following this line of thought, we develop a mathematical and computational framework in this contribution, that allows helical structures to be studied \textit{ab initio}, using Kohn-Sham theory. We first show that the electronic states in helical structures can be characterized by means of special solutions to the single electron problem called \textit{helical Bloch waves}. We rigorously demonstrate the existence and completeness of such solutions, and then describe how they can be used to reduce the Kohn-Sham Density Functional Theory (KS-DFT) equations for helical structures to a suitable fundamental domain. Next, we develop a symmetry-adapted finite-difference strategy in helical coordinates to discretize the governing equations, and obtain a working realization of our proposed approach. We verify the accuracy and convergence properties of our numerical implementation through examples. Finally, we employ Helical DFT to study the properties of zigzag and chiral single wall black phosphorus (i.e., phosphorene) nanotubes. Specifically, we use our simulations to evaluate the torsional stiffness of a zigzag nanotube \textit{ab initio}. Additionally, we observe an insulator-to-metal-like transition in the electronic properties of this nanotube as it is subjected to twisting. We also find that a similar transition can be effected in chiral phosphorene nanotubes by means of axial strains. The strong dependence of the band gap of these materials on various modes of strain suggests their possible use as nanomaterials with tunable electronic and transport properties. Notably, self-consistent ab initio simulations of this nature are unprecedented and well outside the scope of any other systematic first principles method in existence. We end with a discussion on various future avenues and applications.}
\end{abstract}
 
\begin{keyword}
Kohn-Sham density functional theory, helical symmetry, phosphorene, nanotube,
torsional deformations. 
\end{keyword}

\end{frontmatter}
\interfootnotelinepenalty=10000
\section{Introduction}
\label{sec:introduction}
The discovery and characterization of novel nanomaterials and nanostructures constitutes one of the principal areas of scientific research today \citep{bhushan2017springer, cao2004nanostructures}. Such materials and structures hold the promise of unlocking remarkable and unprecedented material properties that are otherwise unavailable in the bulk phase (i.e., crystalline materials). In recent years, the discovery of novel nanostructures has garnered much attention and acclaim \citep{geim_nobel_lecture, smalley_nobel_lecture}, and the unusual properties of these new materials have led to ground breaking applications in almost every branch of science and engineering \citep{nanotech_applications_1,nanotech_gov_website}.

Nanostructures appear in various morphologies (including  fullerenes, nanotubes, nanoclusters  and two-dimensional materials), and are usually associated with non-periodic symmetries.\footnote{Atomistic and molecular structures with non-periodic symmetries have been termed as \textit{Objective Structures} in the mechanics literature \citep{James_OS}. First principles calculations for such structures was the topic of investigation of \citep{My_PhD_Thesis} and the current contribution continues and extends that line of work, i.e., it can be viewed as a particular flavor of \textit{Objective Density Functional Theory}.} The mathematical framework for  classifying nanostructures \citep{James_OS, Hahn,DEJ_ObjForm} shows that a vast class of these materials can be described as being \emph{helical}, i.e., their spatial atomic arrangement possesses helical symmetries. Helical structures include important technological materials such as nanotubes (of any chirality), nanoribbons, nanowires and nanosprings; miscellaneous chiral structures encountered in chemistry; and examples from biology, including tail sheaths of viruses and many common proteins \citep{James_OS, ren2014review}. Figure \ref{fig:collage} shows instances of helical structures that have been actively investigated in the literature.
\begin{figure}[ht]
\centering
\subfloat[Nanotubes of different chirality.]{\includegraphics[trim={7cm 7cm 4cm 5.5cm}, clip, width=0.49\textwidth]{./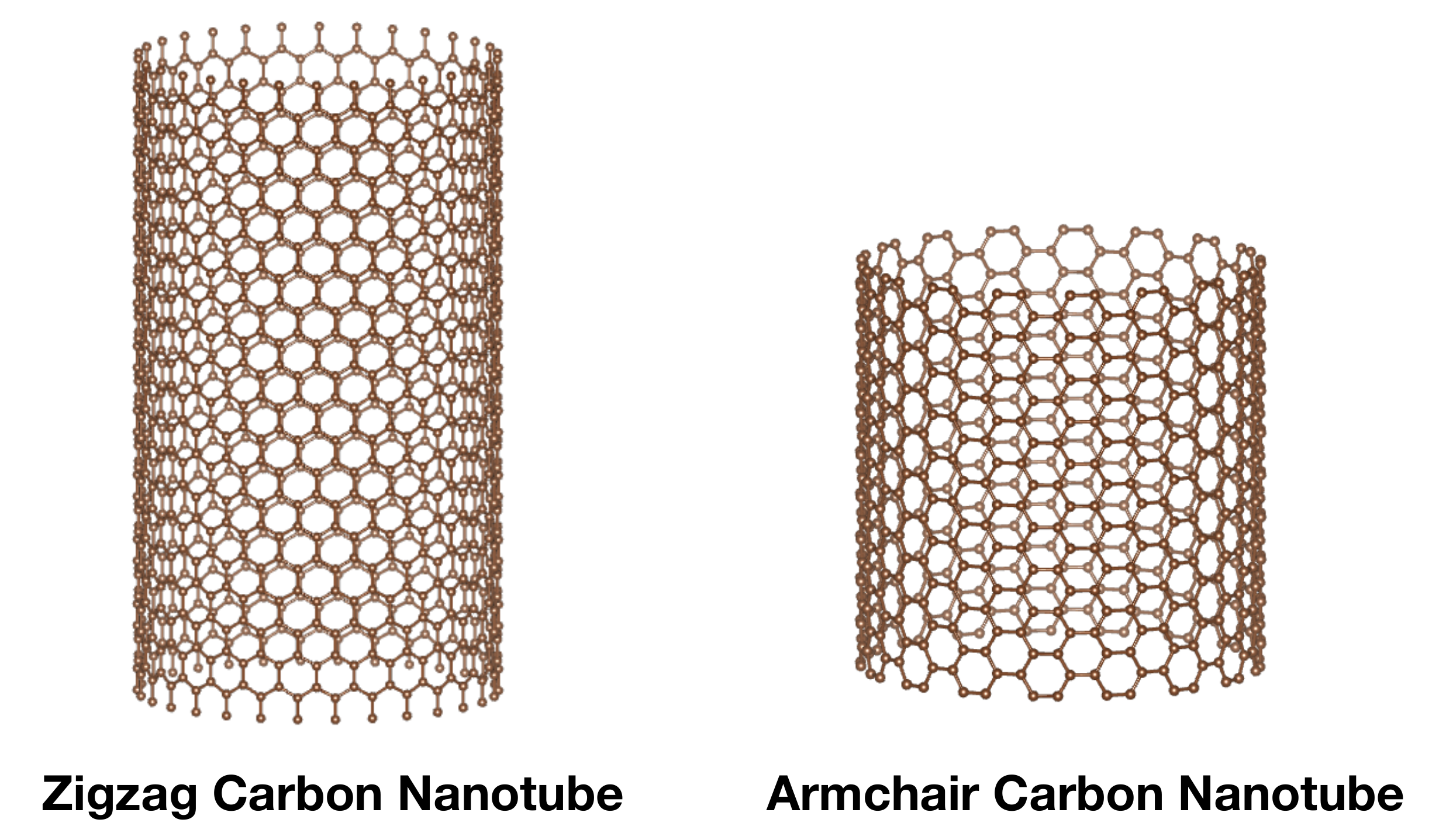}}$\;$
\subfloat[Nanoribbons of different 2D materials.]{\includegraphics[trim={7cm 3cm 5.5cm 5cm}, clip, width=0.49\textwidth]{./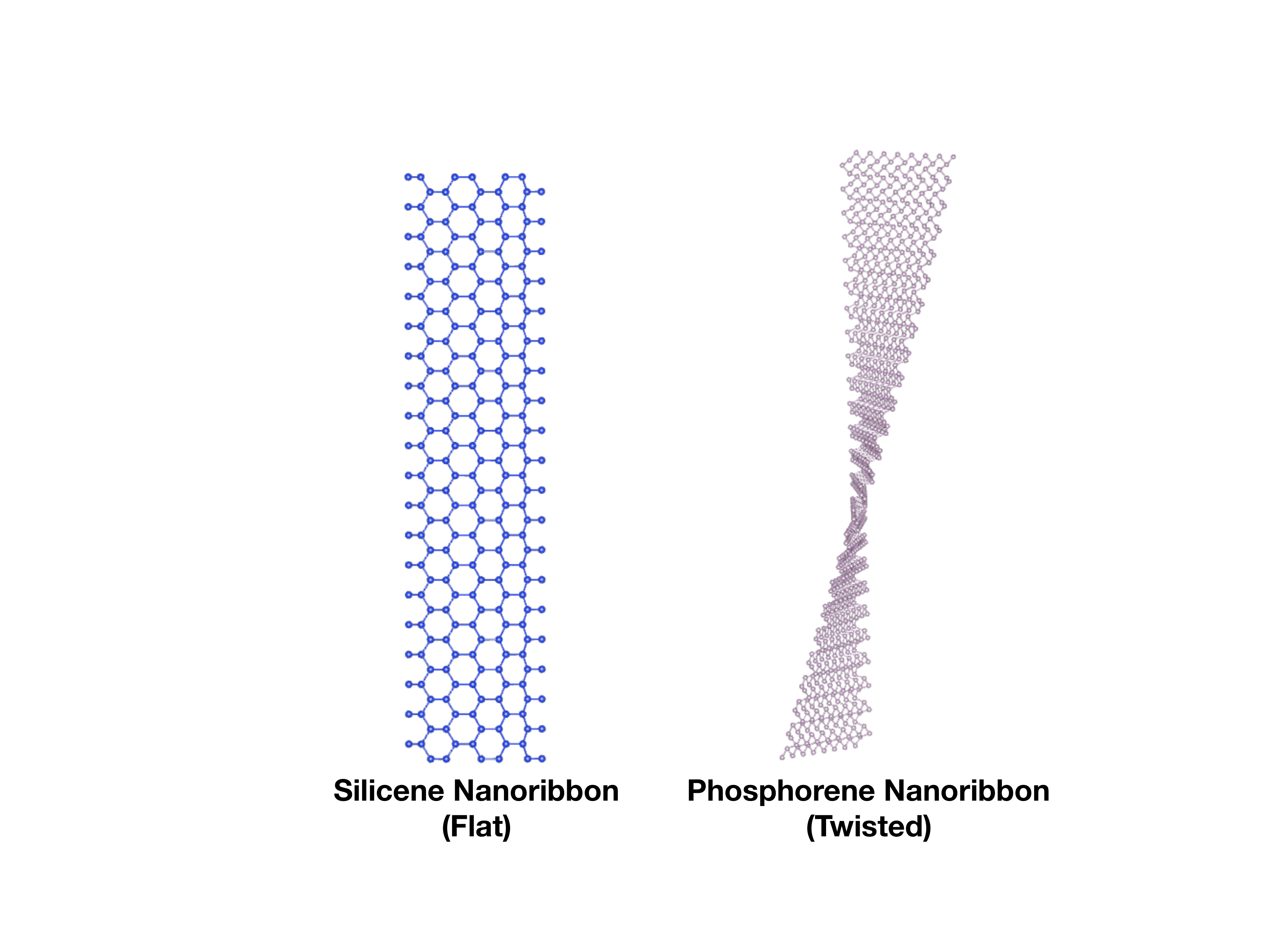}}\\
\subfloat[Molecules of biological origin.]{\includegraphics[trim={7cm 5cm 6cm 5cm}, clip, width=0.50\textwidth]{./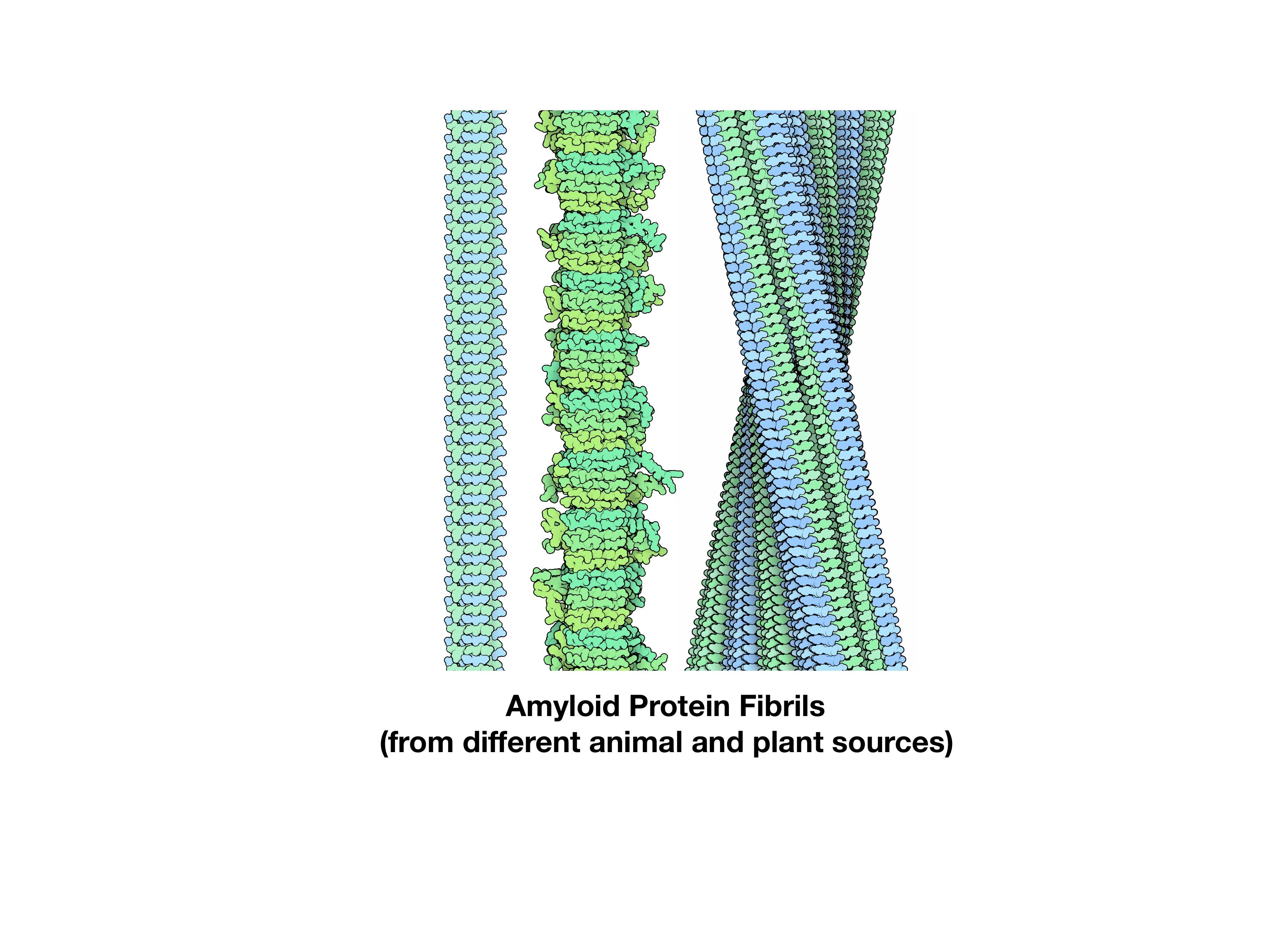}}
\caption{Examples of helical structures from nanotechnology and biology. Images of the Amyloid protein were obtained courtesy of The Protein Data Bank \citep{PDB_1, PDB_2,PDB_3}.}
\label{fig:collage}
\end{figure}

Helical structures have been conjectured to be a fertile source of novel materials with unusual and attractive properties \citep{James_OS}. This is due to the fact that atoms in such structures find themselves in locally similar environments \citep{James_OS}. Coupled with the quasi-one-dimensional nature of these systems,  as well as the presence of symmetries in  the underlying governing equations, this makes it likely that collective or correlated electronic effects (such as those leading to ferromagnetism, ferroelectricity and superconductivity) can emerge in these materials \citep{shimada2016polar}. On the other hand, helical structures are also inherently chiral and can therefore serve as natural examples of materials systems in which certain forms of symmetry breaking in the governing equations can lead to unconventional transport phenomena \citep{naaman2015spintronics, dalum2019theory, medina2015continuum, aiello2020chirality}.

Given the relative abundance of helical nanostructures in existing materials, their likelihood of being associated with hitherto undiscovered forms of matter displaying exotic materials properties, and their overall scientific and technological importance, there appears to be a pressing need for reliable and efficient computational tools for studying such systems. The broad goal of the present contribution is to take important foundational steps in addressing the above scientific issue. Specifically, we present here the mathematical formulation and numerical implementation of a novel computational method called Helical DFT, that can simulate helical structures \textit{ab initio}. We also obtain a practical working realization of this (density functional theory based) self-consistent first principles technique,  and illustrate some of its capabilities through the study of an emergent nanotube material with interesting properties.

To put our work into perspective, we remark that the use of first principles (i.e., quantum mechanical) techniques to design and study materials is a very active area of scientific endeavor today, and it forms the bulk of computational materials science research \citep{giustino2014materials, hafner2008ab, ziegler1991approximate, jain2016computational, hafner2006toward}. Among the wide array of first principles methods available, Kohn-Sham Density Functional Theory (KS-DFT) \citep {KohnSham_DFT} enjoys widespread usage since it offers a good balance between computational cost and physical accuracy as compared to other techniques \citep{LeBris_ReviewBook}. The pseudopotential plane-wave method, also called \textit{Plane-wave DFT}, is the most widely used implementation of Kohn-Sham theory \citep{Kresse_abinitio_iterative, CASTEP_1, Quantum_Espresso_1, Gonze_ABINIT_1}, and it involves expanding the unknowns into linear combinations of plane-waves.  Since plane-waves are naturally associated with periodic symmetries (they are in fact eigenfunctions of translational symmetry operators), Plane-wave DFT is ideally suited for studying bulk (i.e. periodic or crystalline) systems, and is often found to be fundamentally inadequate for studying systems with non-periodic symmetries. In particular, using a Plane-wave DFT code for studying a helical structure such as a chiral nanotube can require the use of large periodic unit cells often containing many hundreds (or even thousands) of atoms.\footnote{In contrast to plane-waves, the use of real space techniques based on finite differences \citep{Chelikowsky_Saad_1, Octopus_1, ghosh2017sparc_1, ghosh2017sparc_2} or finite elements \citep{Pask_FEM_review_1, Pask_FEM_review_2,Gavini_higher_order} allow for non-periodic boundary conditions to be imposed in a straight-forward manner. However there does not appear to be any prior work on using these techniques for self-consistent first principles calculations of helical systems.}  In contrast, a computational method which is faithful to the underlying helical symmetry of such a structure would require a small helical unit cell, containing far fewer number of atoms. Since ground state electronic structure calculations using density functional theory (DFT) scale as the cube of the number of atoms in the unit cell, while excited state calculations scale as the fourth power, the difference in simulation run times for such calculations, in these two scenarios (i.e., the correct use of helical symmetry vs. incorrect use of periodic symmetry) can be drastically different in practice.

The above considerations form our point of departure from a conventional formulation and implementation of KS-DFT, to one that is adapted for helical systems. In order to formulate the equations of KS-DFT for a helical unit cell, an appropriate version of the Bloch Theorem  \citep{Ashcroft_Mermin, Kittel} is required. We establish this result rigorously in this work, and use it to set up an electronic band theory for helical structures. Subsequently, we develop the notion of helical Bloch states, and use their properties to derive of the equations of KS-DFT, as they apply to helical systems. A key component in our mathematical treatment is the definition and use of a helical Bloch-Floquet transform to perform a block-diagonalization of the Hamiltonian in the sense of direct integrals. Our use of rigorous mathematical arguments and appropriate mathematical tools\footnote{Due to the infinite nature of helical groups, the mathematical arguments presented here are of somewhat different and more subtle nature as compared to the ones that can be employed for cyclic groups \citep{banerjee2016cyclic}. However, they can be seen as being broadly connected in the sense that they both deal with Fourier analysis of the respective symmetry groups \citep{Barut_Reps, Folland_Harmonic}.} is one of the highlights of our framework, and it allows the governing equations to be obtained systematically, and without recourse to an excessive amount of intuition.\footnote{Since a rigorous thermodynamic limit theory for the Kohn-Sham problem is unknown \citep{le2005atoms, Defranceschi_LeBris}, a derivation of the equations of the theory, as it applies to condensed matter systems often makes use of physical intuition. This process is prone to conceptual errors however, and we are aware of literature that lists certain terms of the equations incorrectly. In any case, the final form of the  equations appear to be well known in the electronic structure community at large, since DFT codes routinely make use of them for simulating the crystalline phase.} As far as we are aware, our work is the first in presenting such a derivation, and also in expressing the detailed form   of the equations of Kohn-Sham theory for helical structures. The final form of the equations are such that they are readily suited for implementation within systematically convergent electronic structure methods such as those based on finite differences \citep{Chelikowsky_Saad_1, Octopus_1, ghosh2017sparc_1, ghosh2017sparc_2}, finite elements \citep{Pask_FEM_review_1, Pask_FEM_review_2,Gavini_higher_order} or spectral basis functions \citep{My_PhD_Thesis, Banerjee2015spectral, My_Shivang_HelicES_paper}. We choose a symmetry adapted finite difference method in helical coordinates for discretizing the governing equations in this work, and set up a computational framework for numerically solving the discretized equations in a self-consistent manner. This gives us a working realization of an \textit{ab initio} computational tool  --- called Helical DFT --- that can be used to perform predictive simulations of helical systems in a systematic and efficient manner. It can therefore aid in the discovery, synthesis and characterization of helical structures. Subsequently, the remainder of this work focuses on illustrating various numerical and application oriented aspects of this novel computational tool through examples based on nanotube systems. To the best of our knowledge, Helical DFT is the first computational method for helical systems that is based on first principles, and one that also behaves systematically with respect to convergence properties. This, among other reasons, is made possible by our use of the aforementioned helical coordinate system.  To the best of our knowledge, this has not been employed in electronic structure calculations before.

While the study of helical structures has much scientific and technological merit in of itself, the development of a computational method for studying such systems also brings with it the added benefit of being able to simulate the behavior of nanomaterials under torsional deformations. As explained in \citep{James_OS, banerjee2016cyclic}, homogeneous deformation modes are commensurate with periodic symmetries (i.e., applying a homogeneous deformation to a periodic structure results in another periodic structure), while certain inhomogeneous deformation modes can be associated with non-periodic symmetries. An attempt to study such inhomogeneous deformations while using a periodic method is likely to involve various uncontrolled approximations, complications and computational inefficiencies \citep{wei2012bending, naumov2011gap}. This issue appears to have been recognized for some time in the nanomechanics and materials literature, leading to a considerable body of work centered around suggestions presented in \citep{James_OS}, whereby pure bending deformations in atomistic systems are simulated using cyclic symmetries, while helical symmetries are used to simulate torsion \citep{Dumitrica_James_OMD,cai2008torsion, mukherjee2020symmetry, Dumitrica_Bending_Graphene, CNT_Dumitrica, ma2015thermal, Pekka_Efficient_Approach, Pekka_CNT_Bending, Pekka_GNR_Bending, Pekka_Revised_Periodic}. A persistent issue with the simulations in these studies however, is that they have all been carried out using interatomic potentials or tight binding methods. Due to the well known deficiencies of these techniques in simulating real materials \citep{ismail2000ab,hauch1999dynamic, cocco2010gap, koskinen2009density, naumov2011gap}, true first principles simulation methods that behave systematically, and also take into account cyclic and/or helical symmetries have been deemed highly desirable \citep{Pekka_Revised_Periodic, James_OS, My_PhD_Thesis}. There has been recent progress on this very issue with regard to cyclic symmetries \citep{banerjee2016cyclic, ghosh2019symmetry}, and the resulting computational methods have been used to study the bending behavior of nanoribbons and sheets of two dimensional materials \textit{ab initio}. In this sense, the current contribution follows up on this line of work by making a first principles simulation framework for torsional deformations available. Consequently, through the use of this framework, we are able to extract the behavior of nanotubes of black phosphorus (i.e., phosphorene nanotubes) and study their mechanical and electronic response as they are subjected to twisting.\footnote{Exploitation of helical symmetries in ab initio calculations has also been considered in the chemistry literature in the context of Linear Combination of Atomic Orbitals (LCAO) methods  \citep{d2009single, dovesi2017crystal17, Mintmire_White1, CNT_1, CRYSTAL, CNT_4}. However, these methods differ in their perspective from the current contribution in that they concentrate on using symmetry-adapted basis functions for reducing the computational cost of the multi-center integrals and the Hamiltonian matrix elements, whereas our focus is on the formulation of symmetry-adapted cell problems (in helical coordinates), and a systematically convergent numerical treatment of these cell problems. Due to basis incompleteness and superposition errors, it is often non-trivial to systematically improve the quality of the numerical solutions obtained via LCAO  methods, in contrast to the techniques presented here. Finally, the connection of helical symmetries with torsional deformations, as well as the effect of such deformations on other material properties does not appear to have been considered in the chemistry literature.} The coupling of these responses leads to some interesting electronic transitions in this material that is likely to make it an attractive candidate for sensing, modulation and actuation applications. 

The rest of this work is organized as follows. Section \ref{sec:formulation} establishes the mathematical framework for a systematic formulation of the governing equations, and also derives the relevant expressions explicitly. Section \ref{sec:Numerical_Implementation} discusses formulation of a numerical scheme based on finite differences in helical coordinates, and Section \ref{sec:Simulation_Results_Discussion} presents simulation studies. Section \ref{sec:Conclusion} summarizes the work and suggests avenues for future research. The appendices contain additional information and discussions on mathematical tools and results that allow for this work to be self-contained.
\section{Formulation}
\label{sec:formulation}
In this section, we describe the key aspects of Helical DFT. We begin with a formal discussion of helical groups and helical structures in Section \ref{subsec:helical_groups_helical_structures}, and then discuss Kohn-Sham DFT, as it applies to such systems in  Section \ref{subsec:Kohn_Sham_Helical}. The atomic unit system with $m_{\text{e}}=1,e=1,\hbar=1,\frac{1}{4\pi\epsilon_0}=1$, is chosen for the rest of the work, unless otherwise mentioned.
\subsection{Helical symmetry groups, fundamental domains and helical structures}
\label{subsec:helical_groups_helical_structures}
A helical structure (i.e.~a structure with helical symmetries) can be defined through the action of a helical group on a set of non-degenerate points in space. This definition makes it necessary for us to make the notion of a helical group precise. Following standard practice in the literature \citep{Hahn, James_OS, Dumitrica_James_OMD, DEJ_ObjForm, feng2019phase, feng2018phase, My_PhD_Thesis}, we introduce helical groups as subgroups of the Euclidean group in three dimensions. This requires us to introduce some relevant notation and basic rules regarding operations with isometries, as we now do.
\subsubsection{Helical symmetry groups}
\label{subsubsec:helical_groups}
Let $\bfe_1,\bfe_2,\bfe_3$ denote the standard orthonormal basis\footnote{We will use the following notation in what follows:  $f(\cdot)$ will be used to denote a function when we do not wish to highlight the dependence of the function on its arguments. $\norm{\cdot}{}$ will be used to denote the norm of a function or vector and $\innprod{\cdot}{\cdot}{}$ will be used to denote the inner product. Often, we will attach a subscript to these symbols to denote the specific space in which the norm or inner product is being considered. Vectors and matrices in $\rz^2$ or $\rz^3$ will be denoted in boldface, with lower case letters reserved for vectors and uppercase letters used for matrices. We will sometimes use the $\cdot$ symbol between vectors in $\rz^2$ or $\rz^3$, to denote the inner product. {If a function has dependence on multiple arguments, we may choose to separate the arguments using `$;$' to emphasize a parametrized dependence of the function on the arguments following `$;$'.}} of $\rz^3$ and let $(x_1,x_2,x_3)$ denote the Cartesian coordinates of a generic point $\bfx \in \rz^3$. An isometry (or rigid body motion) in $\rz^3$ will be denoted using the notation $\Upsilon = (\bfR | \bfc)$, with $\bfR \in \text{SO}(3)$ denoting the rotation part of the rigid body motion, and $\bfc \in  \rz^3$ denoting the translation part. The action\footnote{As the name suggests, isometries preserve distances (and hence, also angles), i.e.,$\forall\,\bfx,\bfy \in \rz^3$, and a generic isometry ${\Upsilon}$, it holds that $\norm{{\Upsilon}(\bfx) - {\Upsilon}(\bfy)}{\rz^3} = \norm{\bfx -\bfy}{\rz^3}$.} of $\Upsilon : \rz^3 \to \rz^3$ on a point $\bfx \in \rz^3$ is written as $\Upsilon \circ \bfx = \bfR \bfx + \bfc$.  Given a collection of points $S \subset \rz^3$, we will use the notation $\Upsilon \circ S$ to denote the action of the isometry on each of the points in $S$, i.e.,
\begin{align}
\label{eq:isometry_action_set}
\Upsilon \circ S := \bigcup_{\bfx\in S}\Upsilon \circ \bfx\,.
\end{align}

There is a natural multiplicative operation associated with isometries (denoted as $\bullet$ here) that arises as a composition of their maps. Specifically, given isometries $\Upsilon_1 = (\bfR_1 | \bfc_1), \Upsilon_2 = (\bfR_2 | \bfc_2)$, we may define a third isometry $\Upsilon_3 = \Upsilon_1 \bullet \Upsilon_2$ such that $\Upsilon_3 \circ \bfx = (\Upsilon_1 \bullet \Upsilon_2)  \circ \bfx :=  \Upsilon_1 \circ( \Upsilon_2 \circ \bfx)$. It follows that $\Upsilon_3 = (\bfR_1  \bfR_2| \bfR_1 \bfc_2 + \bfc_1)$, and that in general the operation $\bullet$ is not commutative (due to non-commutativity of finite rotations about arbitrary axes). The $\bullet$ operation also allows the definition of whole number powers of $\Upsilon$, i.e., for $n = 1,2, \ldots$, we may define $\Upsilon^n := \Upsilon \bullet  \Upsilon \bullet \Upsilon \ldots (n\;\text{times})$. It is then easy to check that  $\Upsilon^n$ admits the expression $\displaystyle \Upsilon^n =\big(\bfR^n \big| (\sum_{j=0}^{n-1}\bfR^j)\,\bfc \big)$, where the notation $\bfR^0$ is used to denote the identity matrix. 

The identity isometry leaves every $\bfx \in \rz^3$ invariant and can be written as $(\bfI| \textbf{0})$, with $\bfI$ denoting the identity matrix and $\textbf{0}$ denoting the null vector in $ \rz^3$. Given the isometry $\Upsilon = (\bfR | \bfc)$, we can form the isometry $\Upsilon' = (\bfR^{-1} | -\bfR^{-1}\bfc)$, which satisfies $\Upsilon \bullet \Upsilon' = \Upsilon' \bullet \Upsilon = (\bfI| \textbf{0})$. Hence, we will denote $\Upsilon'$ as $\Upsilon^{-1}$ --- i.e.,  the inverse isometry to $\Upsilon$. The set of all isometries so defined, i.e., $\mathcal{E} = \{ \Upsilon = (\bfR | \bfc): \bfR \in \text{SO}(3), \bfc \in  \rz^3 \}$, together with the operation $\bullet$ and the inverse element defined above, form a group \citep{Miller_Symmetry}.\footnote{Since only pure rotations are included, this is the so called Euclidean group of \textit{direct isometries} in three dimensions \citep{Miller_Symmetry}. The full Euclidean group also includes improper rotations. }
 
Let $\alpha$ and $\tau$ be real numbers\footnote{Most of the discussion in this work naturally also extends to the case when $-1 < \alpha < 0$. However, we will not be considering that case here.} such that $0\leq\alpha <1$ and $\tau  > 0$, and let $\bfR_{2\pi\alpha}$ denote a rotation around axis $\bfe_3$ by angle $2\pi\alpha$. Then, the rigid body motion $\Upsilon_\mathsf{h} = (\bfR_{2\pi\alpha} | \tau\bfe_3)$ will be called a \textit{helical isometry}\footnote{Alternately referred to as a \textit{screw transformation} in the crystallography literature \citep{Hahn}.} about axis $\bfe_3$. The action of $\Upsilon_\mathsf{h}$ on a point $\bfx \in \rz^3$ is to  rotate it by angle $2\pi\alpha$ about axis $\bfe_3$, while also translating it by $\tau$ along the same axis.\footnote{A simple way to see this is to resolve $\bfx$ along and perpendicular to $\bfe_3$, i.e., $\bfx = x_3 \bfe_3 + x^{\perp}\bfe_3^{\perp}$, where $\innprod{\bfe_3}{\bfe^{\perp}}{\rz^3} = 0$ and $\norm{\bfe_3^{\perp}}{\rz^3}=1$. Then, $\Upsilon_\mathsf{h} \circ \bfx = (x_3+\tau)\,\bfe_3 + x^{\perp} (\bfR_{2\pi\alpha} \bfe_3^{\perp})$.} Furthermore, applying the formulae for the powers of isometries and their inverses shown above, we see that for  $m=1,2,\ldots$, $\Upsilon_\mathsf{h}^m = (\bfR_{2\pi m \alpha} | m\tau\bfe_3)$ and $\Upsilon_\mathsf{h}^{-1} = (\bfR_{- 2\pi \alpha} | -\tau\bfe_3)$. Combining these, we may define $\Upsilon_\mathsf{h}^{m}$ for any $m \in \gz$ as $\Upsilon_\mathsf{h}^m = (\bfR_{2\pi m \alpha} | m\tau\bfe_3)$, with the $m=0$ case automatically resulting in the identity isometry $(\bfI| \textbf{0})$. We may therefore state:
\begin{proposition}[\textbf{Helical group generated by a single element}]
\label{Prop:single_generator_group}
The set of isometries 
\begin{align}
\label{eq:Group_G1}
\calG_1 = \big\{\Upsilon_\mathsf{h}^m = (\bfR_{2\pi m \alpha} | m\tau\bfe_3): m\in \gz \big\}\,,
\end{align} forms a {discrete} group under the operation $\bullet$. 
\end{proposition}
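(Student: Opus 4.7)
The plan is to verify the group axioms (closure, associativity, identity, inverses) directly using the composition and inversion formulas for isometries already established in the preceding discussion, and then argue discreteness by exhibiting an open neighborhood of the identity in the Euclidean group whose intersection with $\calG_1$ is trivial.

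First I would establish closure. Given $\Upsilon_\mathsf{h}^m, \Upsilon_\mathsf{h}^n \in \calG_1$, I would apply the composition rule $(\bfR_1|\bfc_1) \bullet (\bfR_2|\bfc_2) = (\bfR_1\bfR_2\,|\,\bfR_1\bfc_2+\bfc_1)$ to get the rotation part $\bfR_{2\pi m\alpha}\bfR_{2\pi n\alpha}$ and the translation part $\bfR_{2\pi m\alpha}(n\tau\bfe_3) + m\tau\bfe_3$. The key observation is that rotations about the common axis $\bfe_3$ commute and compose additively, so $\bfR_{2\pi m\alpha}\bfR_{2\pi n\alpha} = \bfR_{2\pi(m+n)\alpha}$, and that $\bfe_3$ is fixed by every such rotation, so $\bfR_{2\pi m\alpha}(n\tau\bfe_3) = n\tau\bfe_3$. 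Combining these yields $\Upsilon_\mathsf{h}^m \bullet \Upsilon_\mathsf{h}^n = (\bfR_{2\pi(m+n)\alpha}\,|\,(m+n)\tau\bfe_3) = \Upsilon_\mathsf{h}^{m+n} \in \calG_1$, since $m+n\in\gz$.

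Next I would dispatch the remaining axioms quickly. Associativity is inherited from the ambient Euclidean group $\mathcal{E}$, of which $\calG_1$ is a subset. The identity element is $\Upsilon_\mathsf{h}^0 = (\bfI|\mathbf{0})$, obtained from the $m=0$ case of the general expression for powers stated just before the proposition. For inverses, the closed-form expression $\Upsilon_\mathsf{h}^{-1} = (\bfR_{-2\pi\alpha}\,|\,-\tau\bfe_3)$ generalizes to $\Upsilon_\mathsf{h}^{-m} = (\bfR_{-2\pi m\alpha}\,|\,-m\tau\bfe_3)\in \calG_1$ for every $m\in\gz$, and a direct application of the composition rule confirms $\Upsilon_\mathsf{h}^m \bullet \Upsilon_\mathsf{h}^{-m} = \Upsilon_\mathsf{h}^{-m} \bullet \Upsilon_\mathsf{h}^m = (\bfI|\mathbf{0})$.

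Finally, for discreteness, I would endow $\mathcal{E}$ with its natural topology (e.g., the product topology induced by the operator norm on $\mathrm{SO}(3)$ and the Euclidean norm on $\rz^3$) and show that $\{(\bfI|\mathbf{0})\}$ is open in the relative topology on $\calG_1$. It suffices to consider the open set $U = \mathrm{SO}(3) \times \{\bfc\in\rz^3 : \norm{\bfc}{\rz^3}<\tau\}$, which contains the identity; any non-identity $\Upsilon_\mathsf{h}^m \in \calG_1$ has translation part $m\tau\bfe_3$ with $|m|\geq 1$, so $\norm{m\tau\bfe_3}{\rz^3} = |m|\tau\geq \tau$, whence $\Upsilon_\mathsf{h}^m\notin U$. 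Since left translations are homeomorphisms of a topological group, the singleton at any element is similarly isolated, yielding discreteness.

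I do not anticipate any genuine obstacle here; the entire argument is an exercise in bookkeeping with the composition formula already derived. The only mildly subtle point is the discreteness step, which requires being explicit about the topology on $\mathcal{E}$ rather than treating ``discrete'' as a merely verbal qualifier; isolating the identity via a neighborhood whose translational component has diameter less than $\tau$ provides a clean, coordinate-free way to make that precise.
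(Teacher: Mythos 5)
Your proof is correct and follows essentially the same route as the paper, which establishes the group structure in the discussion immediately preceding the proposition (the composition rule $(\bfR_1|\bfc_1)\bullet(\bfR_2|\bfc_2)=(\bfR_1\bfR_2\,|\,\bfR_1\bfc_2+\bfc_1)$, the power formula, the identity at $m=0$, and the explicit inverse), rather than in a separate proof environment. The one place you go beyond the paper is discreteness: the paper treats it as evident (deferring a fuller derivation to a cited reference), whereas you make it precise by isolating the identity with a neighborhood whose translational part has norm less than $\tau$, an argument that is sound because $\tau>0$ forces every non-identity element to have translation of norm at least $\tau$, and which correctly does not rely on the rotation parts (important when $\alpha$ is irrational, since those are then dense in the rotations about $\bfe_3$).
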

Additionally, let $\mathfrak{N} \in \nz$, let $\displaystyle {\Theta} = \frac{2\pi}{\mathfrak{N}}$ and for $n=0,1,\ldots,\mathfrak{N}-1$, let $\bfR_{n\Theta}$ denote a rotation around axis $\bfe_3$ by angle $n\Theta$. Then the set of isometries endowed with the operation $\bullet$
\begin{align}
\label{eq:Group_Cyclic}
\mathfrak{C}=\big\{\Upsilon_\mathsf{c}^n = (\bfR_{n\Theta}|\textbf{0}): n=0,1,\ldots,\mathfrak{N}-1 \big\}\,,
\end{align}
forms a cyclic group \citep{banerjee2016cyclic} of order $\mathfrak{N}$. Note that since the rotational parts of the isometries in group $\calG_1$ and  $\mathfrak{C}$ all share $\bfe_3$ as the common axis of rotation, the elements of $\calG_1$ and $\mathfrak{C}$ commute (i.e., for any $\Upsilon_\mathsf{h}^m \in \calG_1$ and $\Upsilon_\mathsf{c}^n \in \mathfrak{C}$, $\Upsilon_\mathsf{c}^n \bullet \Upsilon_\mathsf{h}^m = \Upsilon_\mathsf{h}^m \bullet \Upsilon_\mathsf{c}^n $ holds.) . We may now consider the direct product of the groups $\calG_1$ and $\mathfrak{C}$ defined above to obtain a new helical group\footnote{While the discussion presented here already makes it evident that both $\calG_1$ and $\calG_2$ are groups (and are in fact Abelian groups), see \citep{DEJ_ObjForm} for a more complete derivation of these groups, as well as other types of helical groups not considered in this work.}\footnote{Note that the groups $\calG_1$ and $\calG_2$ contain a group of translations as a normal subgroup if $\alpha$ is a rational number. In certain terminology \citep{Hahn, DEJ_ObjForm}, such cases would be identified as \textit{rod groups} and the term \textit{helical group} would be reserved only for cases for which $\alpha$ is an irrational number (i.e., when the group is not equivalent to a periodic group generated by a single translation.) However, we will not make this distinction here.}:
\begin{proposition}[\textbf{Helical group generated by two elements}]
\label{Prop:two_generator_group}
The set of isometries 
\begin{align}
\label{eq:Group_G2}
\calG_2 = \big\{\Upsilon_\mathsf{h}^m \bullet  \Upsilon_\mathsf{c}^n= (\bfR_{2\pi m \alpha + n\Theta} | m\tau\bfe_3): m\in \gz; n=0,1,\ldots,\mathfrak{N}-1  \big\}\,,
\end{align} forms a {discrete} group under the operation $\bullet$. 
\end{proposition}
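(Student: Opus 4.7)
The plan is to realize $\calG_2$ as the direct product of the groups $\calG_1$ and $\mathfrak{C}$ already introduced, which are both individually groups (Proposition \ref{Prop:single_generator_group} and the cyclic-group statement immediately preceding the proposition). Since every element of $\calG_2$ is by definition of the form $\Upsilon_\mathsf{h}^m\bullet\Upsilon_\mathsf{c}^n$, the main task is to verify the four group axioms and then discreteness, using the composition rule $(\bfR_1|\bfc_1)\bullet(\bfR_2|\bfc_2)=(\bfR_1\bfR_2|\bfR_1\bfc_2+\bfc_1)$ stated earlier in the excerpt.

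The key ingredient I will invoke at the outset is commutativity of $\calG_1$ and $\mathfrak{C}$. Since each $\Upsilon_\mathsf{h}^m$ and each $\Upsilon_\mathsf{c}^n$ is a screw/rotation about the single common axis $\bfe_3$, the rotation parts commute (they lie in the abelian subgroup $\text{SO}(2)$ of rotations about $\bfe_3$) and the translation $m\tau\bfe_3$ is fixed by any rotation about $\bfe_3$. Applying the composition rule directly then yields $\Upsilon_\mathsf{c}^n\bullet\Upsilon_\mathsf{h}^m=\Upsilon_\mathsf{h}^m\bullet\Upsilon_\mathsf{c}^n$, and moreover
\beq
\big(\Upsilon_\mathsf{h}^{m_1}\bullet\Upsilon_\mathsf{c}^{n_1}\big)\bullet\big(\Upsilon_\mathsf{h}^{m_2}\bullet\Upsilon_\mathsf{c}^{n_2}\big)\;=\;\Upsilon_\mathsf{h}^{m_1+m_2}\bullet\Upsilon_\mathsf{c}^{n_1+n_2}\,.
\eeq
Closure then follows after reducing the second index modulo $\mathfrak{N}$: since $\bfR_{\mathfrak{N}\Theta}=\bfR_{2\pi}=\bfI$, one has $\Upsilon_\mathsf{c}^{\mathfrak{N}}=(\bfI|\mathbf{0})$, so writing $n_1+n_2=q\mathfrak{N}+r$ with $0\le r<\mathfrak{N}$ gives $\Upsilon_\mathsf{c}^{n_1+n_2}=\Upsilon_\mathsf{c}^r$, and the product lies in $\calG_2$. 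The identity is recovered by $(m,n)=(0,0)$, inverses are given by $(\Upsilon_\mathsf{h}^m\bullet\Upsilon_\mathsf{c}^n)^{-1}=\Upsilon_\mathsf{c}^{\mathfrak{N}-n}\bullet\Upsilon_\mathsf{h}^{-m}$ (which, again by commutativity, lives in $\calG_2$), and associativity is inherited from the ambient Euclidean group $\mathcal{E}$.

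For discreteness, I will show that every element of $\calG_2$ is isolated in $\mathcal{E}$ with respect to the natural topology. Consider two distinct elements $g=\Upsilon_\mathsf{h}^m\bullet\Upsilon_\mathsf{c}^n$ and $g'=\Upsilon_\mathsf{h}^{m'}\bullet\Upsilon_\mathsf{c}^{n'}$. If $m\ne m'$, their translation parts differ by $(m-m')\tau\bfe_3$, whose norm is at least $\tau>0$, providing a uniform separation. If $m=m'$ and $n\ne n'$, the translation parts coincide but the rotation parts are $\bfR_{2\pi m\alpha+n\Theta}$ and $\bfR_{2\pi m\alpha+n'\Theta}$, which differ by a rotation through the nonzero angle $(n-n')\Theta\bmod 2\pi$; since $n,n'\in\{0,1,\ldots,\mathfrak{N}-1\}$ this angular separation is at least $\Theta=2\pi/\mathfrak{N}$, again uniform. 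In either case each element has an open neighborhood in $\mathcal{E}$ containing no other element of $\calG_2$, establishing discreteness.

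The proof is largely mechanical once commutativity of the two generators is exploited; the only point that requires genuine care is the closure computation, specifically the modular reduction of $n_1+n_2$, and the attendant check that the resulting element is indexed by parameters in the prescribed ranges. The rest is bookkeeping with the composition rule.
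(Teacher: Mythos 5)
Your proof is correct and follows essentially the same route as the paper, which also treats $\calG_2$ as the direct product of $\calG_1$ and $\mathfrak{C}$ after noting that the two generators commute about the common axis $\bfe_3$ (the paper leaves the routine axiom checks and discreteness to a cited reference, whereas you spell them out, including the uniform $\tau$ and $\Theta$ separations). The only cosmetic point is that for $n=0$ your inverse formula produces the exponent $\mathfrak{N}-n=\mathfrak{N}$, which lies outside the stated index range but is harmless since $\Upsilon_\mathsf{c}^{\mathfrak{N}}=(\bfI|\mathbf{0})$.
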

Since $\calG_1$ and $\mathfrak{C}$ are generated by single elements, they are Abelian groups. Furthermore, since  $\calG_2$ is generated by two elements (i.e., the generators of $\calG_1$ and $\mathfrak{C}$) which commute among themselves, it is an Abelian group as well. 

The action of the groups $\calG_1$ and $\calG_2$ on points in space are easily described using cylindrical coordinates: if $\bfx\in\rz^3$ is point with cylindrical coordinates $(r,\vartheta,z)$, then the action of the group element $\Upsilon_h^m \in \calG_1$ is to send it to a point with cylindrical coordinates $(r,\vartheta + 2\pi m\alpha,z+m\tau)$, while the action of $\Upsilon_h^m \bullet \Upsilon_c^n \in \calG_2$ is to send it to the point with cylindrical coordinates $(r,\vartheta + 2\pi m\alpha + n\Theta,z+m\tau)$. In what follows, we will use the notation $\widetilde{\Upsilon}$ to denote a generic isometry from  $\calG_1$ or $\calG_2$. 
\subsubsection{Fundamental domains}
\label{subsubsec:Fundamental_Domains}
Given a point $\bfx \in \rz^3$, and a group of isometries $\calG$ (which could be the helical groups $\calG_1$ or $\calG_2$ described above, for instance), the \textit{orbit of} $\bfx$ under the group is the set
\begin{align}
\label{eq:group_orbit_point}
\calG \circ \bfx := \{\widetilde{\Upsilon} \circ \bfx: \widetilde{\Upsilon} \in \calG\}\,.
\end{align}
Given a collection of points $S \subset \rz^3$ and a group of isometries $\calG$,  we will use the notation $\calG \circ S$ to denote the orbits of each of the points in $S$ under the group:
\begin{align}
\label{eq:group_orbit_Set}
\calG \circ S := \bigcup_{\bfx \in S}\calG \circ \bfx \;\;\bigg(= \bigcup_{\widetilde{\Upsilon} \in \calG}\!\widetilde{\Upsilon}\circ S = \bigcup_{\substack{\widetilde{\Upsilon} \in \calG,\\ \bfx \in S}}\!\!\widetilde{\Upsilon} \circ \bfx \bigg)\,.
\end{align}
Let $\calO \subset \rz^3$ be a domain with regular boundary that is invariant under a given helical group\footnote{With these hypotheses, the boundary of $\calO$, denoted $\partial \calO$, can be shown to be invariant under the group as well.} $\calG$, i.e., $\calG \circ \calO = \calO$. The \textit{symmetry cell} or \textit{fundamental domain} of $\calG$ in $\calO$ is a set $D \subset \calO$ such that\footnote{In practice, we will require the fundamental domain to have some regularity properties in addition to the conditions in eq.~\ref{eq:FD_def_1} and \ref{eq:FD_def_2}, e.g. it should be connected and have compact closure.}:
\begin{align}
\label{eq:FD_def_1}
\bigcup_{\widetilde{\Upsilon} \in \calG}\!\widetilde{\Upsilon} \circ D = \calO\,,
\end{align}
and for $\widetilde{\Upsilon}_1, \widetilde{\Upsilon}_2 \in  \calG$:
\begin{align}
\label{eq:FD_def_2}
(\widetilde{\Upsilon}_1 \circ D)\,\bigcap\,(\widetilde{\Upsilon}_2 \circ D) =\,\text{a set of Lebesgue measure}\;0\;\text{for}\;\widetilde{\Upsilon}_1 \neq \widetilde{\Upsilon}_2.
\end{align}

To see concrete examples of the sets $\calO$ and ${D}$, let $\mathscr{D}_R$ denote an open disk of radius $R$ on the $\bfe_1,\bfe_2$ plane, i.e., 
\begin{align} 
\mathscr{D}_R = \{\bfp \in \rz^2: \norm{\bfp}{\rz^2} < R \}\,,
\label{eq:Disk}
\end{align}
and let $\calC$ denote the infinite cylinder obtained by translating $\mathscr{D}_R$ along $\bfe_3$, i.e.:
\begin{align}
\label{eq:Infinite_Cylinder}
\calC = \mathscr{D}_R \times \{x_3\,\bfe_3: x_3\in \rz\}\,.
\end{align}
Then, the cylinder $\calC$ has all the properties required of the domain $\calO$. Furthermore, we observe that the finite cylinder $\calD_{\calG_1} = \mathscr{D}_R \times \{x_3\,\bfe_3: 0 \leq x_3 <  \tau \}$  serves as the fundamental domain of $\calG_1$ in $\calC$. Finally, the sector with slanted walls, described in cylindrical coordinates as:
\begin{align}
\label{eq:Cylindrical_Sector}
\calD_{\calG_2} = \big\{(r,\vartheta,z):0 \leq r<R , \frac{2\pi\alpha z}{\tau} \leq \vartheta < \frac{2\pi}{\mathfrak{N}} + \frac{2\pi\alpha z}{\tau} , 0 \leq z < \tau\big\}\,,
\end{align}
serves as the fundamental domain of $\calG_2$ in $\calC$.
\subsubsection{Helical Structures}
\label{subsubsec:Helical_Structures}
A helical structure i.e., an atomic/molecular structure with helical symmetries is simply the orbit of a set of non-degenerate points under the action of one of the helical groups $\calG_1$ or $\calG_2$. More precisely, let $\calP_{\calG_1} \subset \calD_{\calG_1}$ \big(or $\calP_{\calG_2} \subset \calD_{\calG_2}$ in case of $\calG_2$\big) be a finite collection of distinct points labeled $\big \{\bfx_k \big\}_{k=1}^{M_{\calG_1}}$ \big(or $\big \{\bfx_k \big\}_{k=1}^{M_{\calG_2}}$ in case of $\calG_2$\big). These points are representative of atomic positions within the fundamental domain and we will refer to them as \textit{simulated points} or \textit{simulated atoms}. The (valence) nuclear charges corresponding to these atoms will be denoted as $\big \{Z_k \big\}_{k=1}^{M_{\calG_1}}$ \big(or $\big \{Z_k \big\}_{k=1}^{M_{\calG_2}}$ in case of $\calG_2$\big). A helical structure is simply a set of the form:
\begin{align}
\label{eq:Helical_Structure_1}
\calS_{\calG_1,\calP_{\calG_1}} &= \calG_1 \circ \calP_{\calG_1} =  \bigcup_{\substack{\widetilde{\Upsilon} \in \calG_1,\\ k =1,\ldots, M_{\calG_1}}}\!\!\!\widetilde{\Upsilon}\circ \bfx_k\,,\\
\text{or}\;\;\calS_{\calG_2,\calP_{\calG_2}} &= \calG_2 \circ \calP_{\calG_2} = \bigcup_{\substack{\widetilde{\Upsilon} \in \calG_2,\\  k =1,\ldots, M_{\calG_2}}}\!\!\!\widetilde{\Upsilon} \circ \bfx_k\,.
\label{eq:Helical_Structure_2}
\end{align}
Additionally, for any $\widetilde{\Upsilon} \in \calG_1$, the atom at the location $\widetilde{\Upsilon}\circ\bfx_k$ is taken to be of the same species as the atom at $\bfx_k \in \calP_{\calG_1}$ (similarly also for $\widetilde{\Upsilon} \in \calG_2$ and $\bfx_k \in \calP_{\calG_2}$), and so it is associated with the same (valence) nuclear charge $Z_k$.
\subsection{Kohn-Sham Problem for Helical Structures}
\label{subsec:Kohn_Sham_Helical}
The Kohn-Sham equations, as they apply to finite structures can be found in numerous references \citep{LeBris_ReviewBook, ghosh2019symmetry, ghosh2017sparc_1}.  In order to formulate an appropriate version of the Kohn-Sham equations for helical structures however, we need to keep in mind a few typical features of such a structure. In what follows, for the sake of simplicity, we will consider in detail the case of a structure associated with a helical group generated by a single element (i.e., the group $\calG_1$ described above). {We will comment on modifications to the above case that need to be considered while dealing with a structure associated with a helical group generated two elements (i.e., the group $\calG_2$ described above), and present the final expressions/equations for this case in \ref{appendix:two_element_group_expressions}.} A more detailed discussion of the modifications and the application of resulting equations is the scope of ongoing and future work \citep{Helical_DFT_Paper_2}.

Helical structures are essentially quasi-one-dimensional in nature. This implies that they have limited spatial extent in the $\bfe_1,\bfe_2$  plane, while being infinitely extended along the $\bfe_3$ direction. Consequently, it is appropriate to set up the Kohn-Sham equations for such a structure in a computational domain which is of limited spatial extent in the $\bfe_1,\bfe_2$ plane, while being infinite in extent along $\bfe_3$. This, along with the requirement that a symmetry adapted formulation of the Kohn-Sham equations needs to be solved on a domain that is also invariant with respect to the symmetry operations of the helical structure, suggests the cylinder $\calC$ as being a natural choice for the computational domain (for a helical structure generated by a single element). The radius of this cylinder has to be consistent with the requirements that the all the atoms of the helical structure should be located sufficiently away from the lateral surface of the cylinder so as to allow sufficient decay of various fields that appear in the Kohn-Sham problem.

The quasi-one-dimensional nature of the systems under study results in additional complications. Specifically, due to the infinite extent of the system along the $\bfe_3$ direction, the system is associated with an infinite  number of electronic states\footnote{In general, these states would be expected to be delocalized over the entire volume of the cylinder $\calC$.} as well as an infinite number of nuclei. This potentially poses divergence issues while computing the electrostatics terms in the Kohn-Sham problem \citep{Dumitrical_Ewald, ghosh2019symmetry} and it is dealt with in this work by solving an appropriate symmetry adapted Poisson problem involving a neutral charge distribution --- such a charge distribution arises as a combination of the electron density and the nuclear pseudocharges associated with the structure. Additionally, the infinitely many electronic states have to be incorporated into the Kohn-Sham problem in a manner that is consistent with the Pauli exclusion principle and the Aufbau principle \citep{LeBris_Defranceschi_ReviewBook,LeBris_ReviewBook}. Taking cue from the solid state/condensed matter physics literature --- specifically, ab initio calculations of crystalline solids \citep{Martin_ES, LeBris_Defranceschi_ReviewBook} --- we address this issue here by formulating a band theory of electronic states for helical structures. This allows the Kohn-Sham problem for the entire helical structure, as posed on the cylinder $\calC$, to be reduced to computations on the fundamental domain when augmented with appropriate boundary conditions.\footnote{{Note that  we are not attempting to solve the thermodynamic limit problem associated with the helical structure in this work. Instead, we are postulating the form of the governing equations at the thermodynamic limit (i.e., for the infinite helical structure which is under study) and expressing them in a mathematically rigorous manner. This is necessary so that we can then numerically solve these equations and extract physical properties of systems of interest. In contrast, the thermodynamic limit problem would involve the passage from a finite (truncated) helical structure to the infinite one keeping various energetic contributions in mind, and is far beyond the scope of the current contribution.}}

A key ingredient of the band theory for helical structures is an appropriate version of the Bloch theorem \citep{Bloch, Martin_ES, Ashcroft_Mermin} for such systems. The form of this mathematical result can be guessed by looking at the analogous case of the Bloch Theorem for one dimensional periodic systems\footnote{See e.g.\! equations $25,26$ in \citep{ghosh2019symmetry}.} and the result appears to have been made use of by earlier researchers in various contexts \citep{My_PhD_Thesis, Pekka_1,Pekka_2,Mintmire_White1, Serbian_Group1, Serbian_Group2, CNT_1, CNT_2, CNT_3, CNT_Dumitrica, CNT_4, CNT_5}. However, a rigorous mathematical derivation of the result does not seem to appear anywhere in the literature --- other than in  \citep{My_PhD_Thesis}, where a proof of the existence of Helical Bloch waves was sketched by using tools from the theory of linear elliptic partial differential equations. In what follows, we address this gap in the literature and follow up on \citep{My_PhD_Thesis}, by establishing the existence and completeness of helical Bloch waves, and then use this to gain insight into the spectrum of the single electron Hamiltonian associated with helical systems (i.e., to set up an electronic band theory for such systems). This information is subsequently used to set up the governing equations of the system. Our mathematical treatment closely follows the techniques presented in references \citep{Odeh_Keller, My_PhD_Thesis, dorfler2011photonic, Wilcox, Reed_Simon4}.
\subsubsection{Analysis of the single electron problem for helical structures - helical Bloch waves}
\label{subsubsec:Helical_Bloch_Waves}
As a starting point, we consider the single electron Hamiltonian:
\begin{align}
\hamil = -\half \Delta + V(\bfx) = -\half\bigg(\hpd{}{x_1}{2}+\hpd{}{x_2}{2}+\hpd{}{x_3}{2}\bigg) +V(x_1,x_2,x_3) \,,
\label{eq:single_electron_Hamiltonian}
\end{align}
with the real valued continuous potential $V(\bfx)$ invariant under the helical group $\calG_1$, i.e., 
\begin{align}
V(\widetilde{\Upsilon} \circ \bfx) = V(\bfx),\forall\,\widetilde{\Upsilon} \in \calG_1\,.
\label{eq:potential_invariance}
\end{align}
This operator naturally arises during each self-consistent field iteration cycle in Kohn-Sham calculations\footnote{Within the setting of the local density approximation and the use of local pseudopotentials for example, $V(\bfx)$ can be identified as the total effective potential appearing in the Kohn-Sham equations and can be written as the sum of electrostatic and exchange-correlation terms, i.e., $V(\bfx) = V_{\text{es}}(\rho(\bfx)) + V_{\text{xc}}(\rho(\bfx))$.} and in that scenario, the invariance of the potential automatically follows from the invariance of the electron density \citep{My_PhD_Thesis}. 

We are interested in functions $\psi$ that satisfy the equation $\hamil  \psi = \lambda \psi$ within the region $\calC$ in an appropriate manner. Additionally, to model the decay of the eigenstates as one moves away from the axis of the cylinder to infinity \citep{wavefunc_decay1, wavefunc_decay2}, we will enforce Dirichlet boundary conditions on the lateral surfaces of the cylinder\footnote{This ``wire'' boundary condition is commonly employed in the literature for studying quasi-1D systems \citep{han2008real, ghosh2017sparc_2}. This boundary condition allows the operator $\hamil$ to have some convenient properties without having to enforce any specific decay conditions on $V(\bfx)$ as one moves away from the axis of the cylinder.}, i.e., $\psi(\bfx) = 0$ for $\bfx \in \partial\calC$.\footnote{In what follows, we will use the following notation: if $\mathsf{A}$ is a measure space with measure $\mu$, then for $1 \leq p < \infty$, we will use $\mathsf{L}^{p}(\mathsf{A},\mathsf{B}, \mu)$ to denote Lebesgue measurable functions $f:\mathsf{A} \to \mathsf{B}$ which satisfy $\displaystyle \int_{\mathsf{A}} \norm{f}{\mathsf{B}}^p\,d\mu < \infty$, and we will use $\mathsf{L}^{\infty}(\mathsf{A},\mathsf{B}, \mu)$ to denote functions for which $\displaystyle \textsf{ess.\,sup.}_{x\in \mathsf{A}} \norm{f(x)}{\mathsf{B}} < \infty$. In particular, if $\mathsf{A}$ is a domain in $\rz^3$, we will use $\Lpspc{2}{}{A}$ to denote the usual Hilbert space of complex valued functions on $\mathsf{A}$ which are square integrable (using the Lebsegue measure). The inner product of two functions on this space will be expressed as:
\begin{align}
\innprod{f_1}{f_2}{\Lpspc{2}{}{\mathsf{A}}} = \int_{\mathsf{A}} f_1(\bfx)\overline{f_2(\bfx)}\,d\bfx
\end{align}
Furthermore, $\mathsf{H}^k(\mathsf{A})$ will denote the Sobolev space of tempered distributions whose $k^{\text{th}}$ weak derivative lies in $\Lpspc{2}{}{\mathsf{A}}$, while $\mathsf{H}^1_{0}(\mathsf{A})$ will denote the subspace of functions in $\mathsf{H}^1(\mathsf{A})$ which vanish at the boundary of $\mathsf{A}$ in the trace sense. {Finally, the rank one operator created as the tensor product of two functions $f_1,f_2 \in \Lpspc{2}{}{\mathsf{A}}$, i.e., $f_1 \otimes \overline{f_2}$ will act on a generic function $f \in \Lpspc{2}{}{\mathsf{A}}$ to yield $\innprod{f}{f_2}{\Lpspc{2}{}{\mathsf{A}}}f_1$.}}\footnote{We may view $\hamil$ as an unbounded operator on $\Lpspc{2}{}{\calC}$ with the function space $\text{Dom.}(\hamil)=\mathsf{H}^2_{}(\calC) \cap \mathsf{H}^1_{0}(\calC)$ as the domain of the operator. The operator $\hamil$ is formally symmetric (or, in linear algebra terminology, \textit{Hermitian} since the underlying function spaces are complex): if $f_1,f_2$ are Schwartz functions in $\calC$ which obey the boundary condition $f_1(\bfx) = f_2(\bfx) = 0$ for $\bfx \in \partial\calC$, we have:
\begin{align}
\label{eq:symmetric_hamiltonian_1}
\innprod{\hamil f_1}{f_2}{\Lpspc{2}{}{\calC}} = -\half \int_{\calC} \Delta f_1 \overline{f_2}\,d\bfx + \int_{\calC} V f_1 \overline{f_2}\,d\bfx\,.
\end{align}
 On using integration by parts \citep{Evans_PDE} and the decay of $f_1$ and $f_2$ as $x_3 \to \infty$, we get:
\begin{align}
\label{eq:symmetric_hamiltonian_2}
-\half \int_{\calC} \overline{f_2}\Delta f_1 \,d\bfx = \half \bigg(\int_{\calC} \nabla \overline{f_2}\cdot \nabla f_1 \,d\bfx - \int_{\partial \calC}\overline{f_2} \nabla f_1 \cdot d\bfs\bigg)\,.
\end{align}
Here $d\bfs$ denotes the oriented surface measure. The second term on the right-hand side above vanishes due to the boundary conditions obeyed by $f_1,f_2$ on $\partial\calC$ and so, this leaves us with:
\begin{align}
\label{eq:symmetric_hamiltonian_3}
\innprod{\hamil f_1}{f_2}{\Lpspc{2}{}{\calC}} = \half  \int_{\calC} \nabla \overline{f_2}\cdot \nabla f_1 \,d\bfx + \int_{\calC} V f_1 \overline{f_2}\,d\bfx\,.
\end{align}
In a similar manner, we get:
\begin{align}
\label{eq:symmetric_hamiltonian_4}
\innprod{f_1}{\hamil f_2}{\Lpspc{2}{}{\calC}} &= \half \int_{\calC} \nabla {f_1} \cdot \nabla{ \overline{f_2}} \,d\bfx + \int_{\calC} f_1 \overline{V f_2} \,d\bfx \\
&= \innprod{\hamil f_1}{f_2}{\Lpspc{2}{}{\calC}} \,,
\label{eq:symmetric_hamiltonian_5}
\end{align}
as the potential $V(\bfx)$ is real. Since Schwartz functions are dense in the domain of $\hamil$, the result follows. The direct integral decomposition of $\hamil$ (\ref{appendix:direct_integrals}) makes it easy to appreciate that $\hamil$ is in fact self-adjoint.} 
Helical Bloch waves (or helical Bloch states) are solutions to the above equation which have the ansatz: 
\begin{align}
\label{eq:Bloch_ansatz_1}
\psi(\bfx) = e^{-i2\pi \eta \frac{x_3}{\tau}}\phi(\bfx;\eta),
\end{align}
Here $\phi(\bfx; \eta)$ group invariant i.e., 
\begin{align}
\label{eq:Bloch_ansatz_2}
\phi(\bfx; \eta) = \phi(\widetilde{\Upsilon} \circ \bfx; \eta), \forall\widetilde{\Upsilon} \in \calG_1\,.
\end{align}
and obeys the boundary condition:
\begin{align}
\label{eq:Bloch_ansatz_3}
\phi(\bfx; \eta) = 0\,\text{for}\,\bfx \in \partial\calC,
\end{align}
commensurate with the boundary condition on $\psi$. The parameter $\eta$ serves a role that is analogous to k-points in periodic calculations and as shown later, it can be chosen such that $\eta \in [-\half,\half)$. {In what follows, we first show the existence of such solutions and then demonstrate their completeness. In essence, these results together give us information that  certain special electronic states (i.e., helical Bloch states) can be \textit{always} found to be associated with the single electron Hamiltonian of a helical structure, and they further inform us that such special states can be used to characterize \textit{all} of the possible electronic states of the system (within the single electron model). Therefore, it is sufficient for us to restrict our attention to these states while discussing the spectrum of the single electron Hamiltonian associated with a helical structure. Our derivation of these results follows techniques employed in classic references on the mathematical theory of Bloch waves in crystals \citep{Odeh_Keller, Wilcox} and builds the theory in a ``bottom up'' manner using standard tools from functional analysis and the theory of linear elliptic operators (see \citep{Renardy_Rogers, Evans_PDE, Folland_Real} for relevant background material). In subsequent sections (Section \ref{subsubsec:Governing_Equations}, \ref{appendix:direct_integrals}), we use techniques presented in \citep{Reed_Simon4} to use helical Bloch waves for ``block-diagonalizing'' the single electron Hamiltonian through the apparatus of direct integrals, and then use this formalism to derive governing equations.}

First, to demonstrate the existence of these special solutions, we have:
\begin{theorem}[\textbf{Existence theorem for helical Bloch waves}]
\label{Thm:Existence_Helical_Bloch}
Let $V(\bfx)$ be a real valued continuous potential that is invariant under the helical group $\calG_1$ and let $\hamil$ denote the operator $-\half \Delta + V(\bfx)$. For any $\eta \in \rz$ there exist a countable number of solutions of the equation $\hamil \psi = \lambda \psi$ in $\calC$ which are expressible in terms of the helical Bloch ansatz (eqs.~\ref{eq:Bloch_ansatz_1}-\ref{eq:Bloch_ansatz_3}).
\end{theorem}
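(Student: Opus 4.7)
The plan is to convert the eigenvalue equation $\hamil\psi=\lambda\psi$ on the unbounded cylinder $\calC$ into a self-adjoint elliptic eigenvalue problem for an auxiliary operator $\hamil_\eta$ on the \emph{bounded} fundamental domain $\calD_{\calG_1}$, and then invoke the standard spectral theorem for operators with compact resolvent.

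First, I would substitute $\psi(\bfx) = e^{-i2\pi\eta x_3/\tau}\phi(\bfx;\eta)$ into $\hamil\psi=\lambda\psi$. Computing $\Delta\psi$ and cancelling the common exponential factor reduces the original equation to
\begin{align}
\hamil_\eta\phi := -\frac{1}{2}\Delta\phi + \frac{2\pi i\eta}{\tau}\pd{\phi}{x_3} + \Bigl(\frac{2\pi^2\eta^2}{\tau^2} + V(\bfx)\Bigr)\phi = \lambda\phi.
\end{align}
Because $V$ is $\calG_1$-invariant, the modified operator $\hamil_\eta$ preserves the subspace of $\calG_1$-invariant functions. A $\calG_1$-invariant function $\phi$ on $\calC$ is completely determined by its restriction to $\calD_{\calG_1}$ together with the twisted matching condition
\begin{align}
\phi(r,\vartheta,0;\eta) = \phi(r,\vartheta+2\pi\alpha,\tau;\eta)
\end{align}
on the top and bottom circular faces (in cylindrical coordinates), read off directly from the action of the generator $\Upsilon_\mathsf{h}$; the Dirichlet condition (eq.~\ref{eq:Bloch_ansatz_3}) is inherited on the lateral wall $\{r=R\}$.

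Next, I would set up the Hilbert-space framework by letting $\calV_\eta$ denote the closed subspace of $\mathsf{H}^1(\calD_{\calG_1})$ consisting of functions that vanish in the trace sense on the lateral wall and satisfy the twisted matching condition on the top and bottom faces, and defining the sesquilinear form
\begin{align}
a_\eta(u,v) = \frac{1}{2}\innprod{\nabla u}{\nabla v}{\Lpspc{2}{}{\calD_{\calG_1}}} + \frac{2\pi i\eta}{\tau}\innprod{\pd{u}{x_3}}{v}{\Lpspc{2}{}{\calD_{\calG_1}}} + \innprod{\Bigl(\frac{2\pi^2\eta^2}{\tau^2}+V\Bigr)u}{v}{\Lpspc{2}{}{\calD_{\calG_1}}}
\end{align}
on $\calV_\eta\times\calV_\eta$. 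The form is Hermitian: the only non-trivial point is that the boundary terms produced by integrating the first-order term $\frac{2\pi i\eta}{\tau}\partial_3$ by parts on the top and bottom faces cancel exactly, a consequence of the twisted matching condition together with the orthogonality of $\bfR_{2\pi\alpha}$. Continuity of $V$ on the compact closure of $\calD_{\calG_1}$ plus a G{\aa}rding-type estimate then show that $a_\eta$ is continuous and bounded below on $\calV_\eta$, so the Friedrichs construction (cf.~\citep{Reed_Simon4,Renardy_Rogers}) realises $\hamil_\eta$ as a self-adjoint operator on $\Lpspc{2}{}{\calD_{\calG_1}}$.

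Finally, since $\calD_{\calG_1}$ is bounded, the Rellich--Kondrachov theorem gives the compact embedding $\calV_\eta \hookrightarrow \Lpspc{2}{}{\calD_{\calG_1}}$, so the resolvent of $\hamil_\eta$ is compact. The spectral theorem then produces a countable sequence of eigenvalues $\lambda_1(\eta)\leq\lambda_2(\eta)\leq\cdots\to+\infty$ with eigenfunctions $\{\phi_n(\cdot;\eta)\}_{n\geq 1}$ forming an orthonormal basis of $\Lpspc{2}{}{\calD_{\calG_1}}$; extending each $\phi_n$ to $\calC$ by $\calG_1$-invariance and multiplying by $e^{-i2\pi\eta x_3/\tau}$ yields the asserted countable family of helical Bloch waves. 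The main obstacle I anticipate is the rigorous handling of the twisted matching condition, especially when $\alpha$ is irrational: the identification non-trivially mixes the azimuthal and axial coordinates, and one must verify carefully that $\calV_\eta$ is a genuine closed subspace of $\mathsf{H}^1(\calD_{\calG_1})$ on which the cancellation of boundary terms actually holds. The cleanest workaround, which I would use to double-check the argument, is to pass beforehand to helical coordinates $(r,\theta_1,\theta_2)=(r,\vartheta-2\pi\alpha x_3/\tau,\,x_3/\tau)$, in which $\calG_1$ acts as pure integer translation in $\theta_2$ and the twisted matching becomes ordinary periodicity in $\theta_2$, reducing the argument to textbook elliptic spectral theory on a compact manifold with boundary.
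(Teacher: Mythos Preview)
Your proposal is correct and follows essentially the same route as the paper: substitute the Bloch ansatz to obtain an auxiliary operator on the bounded fundamental domain, verify symmetry by checking that the boundary contributions on the top and bottom faces cancel via the twisted matching condition and the orthogonality of $\bfR_{2\pi\alpha}$ (the paper relegates this computation to an appendix), invoke G\aa rding-type coercivity to obtain a self-adjoint realisation, and then use Rellich--Kondrachov to get a compact resolvent and hence a discrete spectrum. The only cosmetic difference is that you phrase the self-adjointness step via the sesquilinear form and the Friedrichs extension, whereas the paper works directly with the symmetric operator and cites G\aa rding's inequality for the unique self-adjoint extension; your closing remark about passing to helical coordinates to turn the twisted matching into ordinary periodicity is exactly the coordinate system the paper introduces later for the numerical discretisation.
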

\begin{proof}
We fix $\eta \in \rz$ and substitute the helical Bloch wave ansatz in the equation $\hamil \psi = \lambda \psi$ to find that $\phi(\bfx; \eta)$ should obey the following auxiliary equation in the region $\calC$:
\begin{align}
\mathfrak{h}_{\eta}^{\textsf{aux}}\phi =-\half \bigg( \Delta \phi -i \frac{4\pi \eta}{\tau} \pd{\phi}{x_3} - \frac{4\pi^2\eta^2}{\tau^2}{\phi}\bigg) + V\phi=  \lambda \phi\,.
\label{eq:phi_equation_1}
\end{align}
Additionally, $\phi$ should be group invariant and obey the zero Dirichlet boundary condition on $\partial \calC$. Let $\calD$ denote the interior of the fundamental domain $\calD_{\calG_1}$, i.e., it is the open set described in cylindrical coordinates as $\calD = \{(r,\vartheta,z): 0 \leq r < R, 0 < z < \tau\}$. The boundary of $\calD$ includes the lateral surface $\partial\calD^{\,r=R} =  \{(r,\vartheta,z):  r = R, 0 < z < \tau\}$ that is shared with $\calC$, as well as the discs $\partial\calD^{\,z=0} =  \{(r,\vartheta,z):  0 \leq r \leq R,z = 0\}$ and $\partial\calD^{\,z=\tau} =  \{(r,\vartheta,z):  0 \leq r \leq R,z = \tau\}$, which are both parallel to the $\bfe_1-\bfe_2$ plane. The group operation $\Upsilon_\mathsf{h}$ (i.e., the generator of the group $\calG_1$) maps $\partial\calD^{\,z=0}$ to $\partial\calD^{\,z=\tau}$ and conversely, $\Upsilon_\mathsf{h}^{-1}$ maps $\partial\calD^{\,z=\tau}$ to $\partial\calD^{\,z=0}$.

We now restrict the auxiliary eigenvalue problem $\mathfrak{h}_{\eta}^{\textsf{aux}}\phi =  \lambda \phi$ as outlined in eq.~\ref{eq:phi_equation_1}, to the region $\calD$ by imposing the boundary conditions $\phi(\bfx, \eta)=\phi(\Upsilon_\mathsf{h} \circ \bfx, \eta)$, $\nabla\phi(\bfx, \eta)=\bfR_{2\pi\alpha}^{-1}\nabla\phi(\Upsilon_\mathsf{h} \circ \bfx, \eta)$ for $\bfx \in \partial\calD^{\,z=0}$ and (as before), $\phi(\bfx, \eta) = 0$ for $\bfx \in \partial\calD^{\,r=R}$. The operator $\mathfrak{h}_{\eta}^{\textsf{aux}}$ on $\Lpspc{2}{}{\calD}$ is uniformly elliptic, and as shown in \ref{appendix:h_eta_symmetric}, it is also symmetric with the above boundary conditions.  Since $\calD$ is a bounded domain and $V(\bfx) \in \Lpspc{\infty}{}{\calD}$, the operator $\mathfrak{h}_{\eta}^{\textsf{aux}}$ satisfies the conditions of G\aa rding's Inequality (Theorems 9.17, 9.18 in \citep{Renardy_Rogers}; Section 6.2 in \citep{Evans_PDE}). This guarantees that $\mathfrak{h}_{\eta}^{\textsf{aux}}$ has a unique self-adjoint extension in $\Lpspc{2}{}{\calD}$, which we also denote as $\mathfrak{h}_{\eta}^{\textsf{aux}}$ here. Furthermore, as a consequence of the Rellich-Kondrachov Compactness Theorem  (Theorem 7.29 in \citep{Renardy_Rogers};  Section 5.7 in \citep{Evans_PDE}), $\mathfrak{h}_{\eta}^{\textsf{aux}}$ can be shown to have a compact resolvent (Lemma 9.20 in \citep{Renardy_Rogers}). Consequently, $\mathfrak{h}_{\eta}^{\textsf{aux}}$ has a discrete set of eigenvalues $\lambda_j({\eta})$ and corresponding eigenfunctions $\phi_j(\bfx; \eta)$ (Theorem 6.29 in \citep{Kato}; Theorem 9.22 in \citep{Renardy_Rogers}). Each eigenvalue is of finite multiplicity and such that $\lambda_j({\eta}) \to \infty$ as $j \to \infty$. Results from elliptic regularity theory (Sections 9.5, 9.6 in \citep{Renardy_Rogers}; Section 6.3 in \citep{Evans_PDE}) imply that $\phi_j(\cdot; \eta) \in \mathsf{H}^2(\calD)$. We now use the boundary conditions on $\phi$ outlined above to extend the eigenfunctions $\phi_j(\bfx; \eta)$ to all of $\calC$, noting that  these boundary conditions are meaningful in the trace sense since the eigenfunctions are in $\mathsf{H}^2(\calD)$. Thereafter, defining $\psi_j(\bfx, \eta) = e^{-i2\pi \eta \frac{x_3}{\tau}}\phi_j(\bfx,\eta)$, for $j\in \nz$ and $\bfx \in \calC$, establishes the theorem.
\end{proof}
We define $\Lambda = \{\lambda_j({\eta}): \eta \in \rz, j\in \nz\}$ and $\Psi = \{\psi_j(\cdot;\eta): \eta \in \rz, j \in \nz\}$ as the collection of generalized eigenvalues and generalized eigenfunctions\footnote{The real numbers $\lambda_j({\eta})$ are generalized eigenvalues of $\hamil$ since (as discussed later) they are part of the essential spectrum of $\hamil$ and not its discrete spectrum.
On a similar note, the functions $\psi_j(\cdot;\eta)$ do not belong in $\Lpspc{2}{}{\calC}$ and therefore, they are not eigenfunctions of $\hamil$ in the usual sense. However, as discussed above, they do satisfy an equation of the form $\hamil\,\psi_j(\cdot; \eta) = \lambda(\eta)\,\psi_j(\cdot; \eta)$, thus suggesting their similarity to conventional eigenvalues and eigenfunctions.} associated with $\hamil$. The first observation we make is that the sets $\Lambda$ and $\Psi$ are unchanged upon restricting $\eta \in [-\half,\half)$. To see this, we recall that $\lambda_j({\eta})$ and $\psi_j(\cdot;\eta)$ are obtained by computing the spectrum of $\hamil$ when subjected to the conditions\footnote{The Dirichlet boundary condition in eq.~\ref{eq:Bloch_ansatz_3} is also  obeyed equivalently by $\psi$ and does not need to be further considered here.}  in eqs.~\ref{eq:Bloch_ansatz_1}, \ref{eq:Bloch_ansatz_2}. However, these equations can be equivalently recast as the following condition on $\psi$: 
\begin{align}
\label{eq:Bloch_ansatz_4}
\psi(\Upsilon_\mathsf{h} \circ \bfx) = e^{-i2\pi \eta \frac{x_3 + \tau}{\tau}}\phi(\Upsilon_\mathsf{h}  \circ \bfx;\eta) = e^{-i2\pi \eta}e^{-i2\pi \eta \frac{x_3}{\tau}}\phi(\bfx;\eta) = e^{-i2\pi \eta}\psi(\bfx)\,
\end{align}
or more generally, for $m\in \gz$:
\begin{align}
\label{eq:Bloch_ansatz_5}
\psi(\Upsilon^m_\mathsf{h} \circ \bfx) =  e^{-i2\pi m\eta}\psi(\bfx)\,.
\end{align}
In other words, solving $\hamil \psi = \lambda \psi$ while imposing the condition $\psi(\Upsilon_\mathsf{h} \circ \bfx) =e^{-i2\pi \eta}\psi(\bfx)$ on $\psi$ also gives us the sets $\Lambda$ and $\Psi$. Since $e^{-i2\pi \eta} = e^{-i2\pi (\eta + n)}$ for any $n \in \gz$, we see that the boundary conditions on $\psi$ do not change upon translating the value $\eta$ by an integer. Thus, it suffices to restrict $\eta \in [-\half,\half)$.  In what follows, we will denote $\mathfrak{I} = [-\half,\half)$, and we will re-define $\Lambda = \{\lambda_j(\eta): \eta \in \mathfrak{I}, j \in \nz\}$ and $\Psi = \{\psi_j(\cdot;\eta): \eta \in \mathfrak{I}, j \in \nz\}$. In keeping with solid state physics terminology, we will refer to the set $\mathfrak{I}$ as \textit{reciprocal space} (or more specifically, the Brillouin zone of the  reciprocal space). Consequently, the dependence of a quantity on $\eta$ will be termed as reciprocal space dependence while its dependence on usual  physical space will be termed as real space dependence.

For a given $j \in \nz$, we will refer to the set $\Lambda_j =  \{\lambda_j({\eta}): \eta \in \mathfrak{I})\}$ as a \textit{helical band}. Results from the theory of regular perturbations of self-adjoint problems \citep{rellich1969perturbation, Kato} imply that (for a fixed $j  \in \nz$) the map $\eta \mapsto \lambda_j(\eta)$ is analytic. Therefore, the set $\Lambda_j$ is connected and compact.\footnote{\label{footnote:physics_proof} In contrast to the rigorous proof presented above, a formal derivation of the Bloch theorem for a helical structure, inspired by the solid state physics literature \citep{Ashcroft_Mermin, Kittel} is as follows: We observe that since the Laplacian commutes with all isometry operations -- including those that constitute the group $\calG_1$, and further, since the potential  $V(\bfx)$ is group invariant  (eq.~\ref{eq:potential_invariance}), the operator $\hamil$ must commute with the symmetry operations in the group $\calG_1$. Specifically, for any continuous function $f$ defined over $\calC$, we may define the operators:
\begin{align}
\mathcal{T} = \big\{T_{\widetilde{\Upsilon}}: T_{\widetilde{\Upsilon}}f(\bfx) = f(\widetilde{\Upsilon}^{-1}\circ\bfx)\big\}_{\widetilde{\Upsilon} \in \calG_1}\,.
\end{align}
Then, for any function $f$ in the domain of $\hamil$, the relationship $T_{\widetilde{\Upsilon}} \hamil f = \hamil T_{\widetilde{\Upsilon}} f$ holds for any $T_{\widetilde{\Upsilon}} \in \mathcal{T}$. This commutation property can be used to infer that the unitary representations of $\calG_1$ and the operator $\hamil$ can be ``simultaneously diagonalized'' in a suitable basis of common ``eigenstates''. Since $\calG_1$ is an Abelian group, its irreducible representations are all one-dimensional \citep{McWeeny, Folland_Harmonic}.  Furthermore, these irreducible representations can be used to decompose any unitary representation of the group \citep{Folland_Harmonic, Barut_Reps}. This suggests therefore that the eigenstates associated with $\hamil$ transform under the group  in a manner similar to the irreducible representations of $\calG_1$, which then implies the helical Bloch theorem. While the above argument is perhaps  correct in spirit and variants of the argument appear often in the physics literature (in the context of periodic systems) it has a number of technical deficiencies owing to the fact that $\hamil$ is an unbounded operator and the group $\calG_1$ is infinite. These issues prevent heuristic arguments like the one above -- which are more suited to representations of finite groups on finite dimensional spaces -- from being applied in the current context. In particular e.g., Bloch states are not eigenfunctions in the usual sense since they are not square integrable.} 

We will refer to the set $\Psi = \{\psi_j(\cdot;\eta): \eta \in \mathfrak{I}, j \in \nz\}$ as the  collection of \textit{helical Bloch states} corresponding to the helical bands. If we fix $\eta \in \mathfrak{I}$, then the set $\Psi_{\eta} = \{\psi_j(\cdot;\eta):  j \in \nz\}$ has the property that it is orthonormal and complete in $\Lpspc{2}{}{\calD}$. This follows directly from the properties of the group invariant functions $\phi_j(\bfx; \eta)$ defined above. Specifically, for $j,j' \in \nz$:
\begin{align}
\nonumber
&\innprod{\psi_j(\cdot; \eta)}{\psi_{j'}(\cdot; \eta)}{\Lpspc{2}{}{\calD}} = \int_{\calD}\psi_j(\bfx,\eta)\,\overline{\psi_{j'}(\bfx,\eta)}\,d\bfx \\&= \int_{\calD} e^{-i2\pi\eta\frac{x_3}{\tau}}\phi_j(\bfx, \eta)\,e^{i2\pi\eta\frac{x_3}{\tau}}\overline{\phi_{j'}(\bfx, \eta)}\,d\bfx = \int_{\calD}\phi_j(\bfx, \eta)\,\overline{\phi_{j'}(\bfx, \eta)}\,d\bfx = \delta_{j,j'}\,.
\label{eq:innprod_psi_j}
\end{align}
Furthermore, if $h \in \Lpspc{2}{}{\calD}$ such that $\innprod{h(\cdot)}{\psi_{j'}(\cdot; \eta)}{\Lpspc{2}{}{\calD}} = 0$ for every $j\in \nz$, then we must have:
\begin{align}
\int_{\calD} h(\bfx)\,\overline{\psi_{j}(\bfx,\eta)}\,d\bfx = \int_{\calD} h(\bfx)\,e^{i2\pi\eta\frac{x_3}{\tau}}\overline{\phi_{j}(\bfx,\eta)}\,d\bfx = 0,\forall j\in \nz.
\end{align}
Due to the completeness of the functions $\phi_j(\cdot; \eta)$ it then follows that $h(\bfx)\,e^{i2\pi\eta\frac{x_3}{\tau}} = 0$, i.e., $h(\bfx) = 0$ almost everywhere in $\calD$. Thus, the set $\Psi_{\eta}$ is complete in $\Lpspc{2}{}{\calD}$.

Due to the completeness of the set $\Psi_{\eta}$ for each $\eta \in \mathfrak{I}$, it actually follows that the set of helical Bloch states (i.e., the set $\Psi$) is complete in $\Lpspc{2}{}{\calC}$. {To prove this important result, we first need to establish a few preliminaries related to the so-called helical Bloch-Floquet transform, i.e., an analogue of the classical Bloch-Floquet transform \citep{Reed_Simon4, dorfler2011photonic}, as extended to the case of helical symmetries. Specifically, we show that there is a one-to-one correspondence between functions in $\Lpspc{2}{}{\calC}$ and $\Lpspc{2}{}{{\calD} \times \mathfrak{I}}$ (this is the content of Lemmas \ref{Lemma:completeness_1} and  \ref{Lemma:completeness_2}), and we then identify the helical Bloch-Floquet transform as an operator which maps between these spaces\footnote{{Throughout this paper, we will often write functions in $\Lpspc{2}{}{{\calD} \times \mathfrak{I}}$ as $f(\bfx,\eta)$ as well as $f(\bfx;\eta)$. The latter notation is meant to emphasize such functions as being  $\eta$-parametrized members of $\Lpspc{2}{}{{\calD}}$. However, as pointed out by an anonymous reviewer, it is perhaps not always possible to make this distinction  consistently.}}. Thereafter, the completeness of $\Psi_{\eta}$ in $\Lpspc{2}{}{\calD}$ for each $\eta \in \mathfrak{I}$, in conjunction with the use of the helical Bloch-Floquet transform can be used to demonstrate the completeness of helical Bloch states in $\Lpspc{2}{}{\calC}$ (this being the content of Theorem \ref{Thm:Completeness_Helical_Bloch}). The completeness result of the helical Bloch states is intimately connected to the direct integral decomposition of the Hamiltonian, which we use for deriving the governing equations in the next section.} 

We have:
\begin{lemma}
Let ${f} \in \Lpspc{2}{}{\calC}$, $\eta \in \mathfrak{I}$ and $m \in \gz$. We define:
\begin{align}
g(\bfx, \eta) := \sum_{m\in \gz} {f}(\Upsilon^m_\mathsf{h} \circ \bfx)\,e^{i 2 \pi m\eta}\,.
\label{eq:lemma_2_1}
\end{align}
Then $g$ is defined almost everywhere in ${\calD}$ and further, $g \in \Lpspc{2}{}{{\calD} \times \mathfrak{I}}$.
\label{Lemma:completeness_1}
\end{lemma}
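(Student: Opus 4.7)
The plan is to identify the series defining $g$ as a fibred Fourier series on $\mathfrak{I}$ whose coefficients are the $\mathcal{G}_1$-translates of $f$ evaluated at $\bfx$, and then to control everything by Plancherel together with the disjoint tiling of $\calC$ by translates of $\calD$.

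First I would use the defining properties of the fundamental domain (eqs.~\ref{eq:FD_def_1}, \ref{eq:FD_def_2}) to write $\calC = \bigcup_{m\in\gz} \Upsilon_\mathsf{h}^m\circ \calD$ as a disjoint union up to a set of Lebesgue measure zero. Since each $\Upsilon_\mathsf{h}^m$ is an isometry and the Lebesgue measure on $\rz^3$ is invariant under isometries, a change of variables in each piece gives
\begin{equation}
\norm{f}{\Lpspc{2}{}{\calC}}^2 \;=\; \sum_{m\in\gz}\int_{\Upsilon_\mathsf{h}^m\circ\calD}\abs{f(\bfy)}^2\,d\bfy \;=\; \int_{\calD}\sum_{m\in\gz}\abs{f(\Upsilon_\mathsf{h}^m\circ\bfx)}^2\,d\bfx,
\end{equation}
where the interchange of sum and integral is justified by Tonelli's theorem. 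Since $f\in\Lpspc{2}{}{\calC}$, the right-hand side is finite, so the integrand is finite for almost every $\bfx\in\calD$. In other words, the sequence $\{f(\Upsilon_\mathsf{h}^m\circ\bfx)\}_{m\in\gz}$ lies in $\ell^2(\gz)$ for a.e.\ $\bfx\in\calD$.

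Next I would recognise the sum in eq.~\ref{eq:lemma_2_1} as a Fourier series on $\mathfrak{I}=[-\half,\half)$ whose $m$-th coefficient is $f(\Upsilon_\mathsf{h}^m\circ\bfx)$. For a.e.\ fixed $\bfx$ the partial sums $g_N(\bfx,\eta)=\sum_{\abs{m}\le N} f(\Upsilon_\mathsf{h}^m\circ\bfx)\,e^{i2\pi m\eta}$ therefore converge in $\Lpspc{2}{}{\mathfrak{I}}$ to a limit, which I define to be $g(\bfx,\eta)$; this is the precise meaning of ``defined almost everywhere in $\calD$.'' Plancherel's theorem for the trigonometric system $\{e^{i2\pi m\eta}\}_{m\in\gz}$ on $\mathfrak{I}$ then gives, for those same $\bfx$,
\begin{equation}
\int_{\mathfrak{I}}\abs{g(\bfx,\eta)}^2\,d\eta \;=\; \sum_{m\in\gz}\abs{f(\Upsilon_\mathsf{h}^m\circ\bfx)}^2.
\end{equation}

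Finally I would integrate this identity over $\calD$ and combine it with the first display to obtain $\norm{g}{\Lpspc{2}{}{\calD\times\mathfrak{I}}}^2 = \norm{f}{\Lpspc{2}{}{\calC}}^2 < \infty$, giving both measurability of $g$ on $\calD\times\mathfrak{I}$ (via Fubini applied to the nonnegative integrand) and membership in $\Lpspc{2}{}{\calD\times\mathfrak{I}}$. The only subtle point, which I expect to be the main conceptual hurdle rather than a technical obstacle, is that the series in eq.~\ref{eq:lemma_2_1} should not be read as a pointwise sum in $\eta$ but as an $\Lpspc{2}{}{\mathfrak{I}}$-limit of partial sums for a.e.\ $\bfx$; once this is acknowledged, the argument is an essentially verbatim transcription of the classical Bloch--Floquet isometry, with the $\gz$-action by $\Upsilon_\mathsf{h}$ replacing the usual action by lattice translations.
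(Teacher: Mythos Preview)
Your proof is correct and follows essentially the same route as the paper's own argument: use the tiling $\calC=\bigcup_{m}\Upsilon_\mathsf{h}^m\circ\calD$ together with isometry invariance and Tonelli to deduce that $\{f(\Upsilon_\mathsf{h}^m\circ\bfx)\}_{m\in\gz}\in\ell^2(\gz)$ for a.e.\ $\bfx\in\calD$, then apply Parseval/Plancherel for the Fourier series in $\eta$ and integrate over $\calD$. Your explicit remark that the sum in eq.~\ref{eq:lemma_2_1} is to be understood as an $\Lpspc{2}{}{\mathfrak{I}}$-limit rather than a pointwise sum is a welcome clarification that the paper itself notes (via a referee comment) could be sharpened in exactly this way.
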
 
\begin{proof}
We denote $\tilde{f}_m(\bfx) = {f}(\Upsilon^m_\mathsf{h} \circ \bfx)$. Then, $\displaystyle g(\bfx, \eta) = \sum_{m\in \gz}\tilde{f}_m(\bfx)\,e^{i 2 \pi m\eta}$. {By use of the Fubini-Tonelli theorem, we now observe \citep{Folland_Real} that:\footnote{{We would like to thank the anonymous reviewers for their comments which helped clarify and fix certain technical aspects of this proof, including the suggestion that Plancharel's Theorem \citep{Folland_Real} can be used to make certain statements in the above proof more precise.}}
\begin{align}
\nonumber
 \int_{\bfx \in \calD} \sum_{m\in \gz}  \abs{\tilde{f}_m(\bfx)}^2 \,d\bfx  &= \sum_{m\in \gz}  \int_{\bfx \in \calD} \abs{\tilde{f}_m(\bfx)}^2 \,d\bfx = \sum_{m\in \gz} \int_{\bfx \in \calD}\!\abs{f(\Upsilon^m_\mathsf{h} \circ \bfx)}^2 \,d\bfx \\
&= \sum_{m\in \gz} \int_{\bfx\in \Upsilon^m_\mathsf{h} \circ \calD}\!\abs{{f}(\bfx)}^2 d\bfx = \int_{\bfx \in \calC}\!\abs{f(\bfx)}^2 \,d\bfx < \infty\,.
\label{eq:lemma_2_2}
\end{align}
This establishes that the function $h(\bfx) = \displaystyle \sum_{m\in \gz}  \abs{\tilde{f}_m(\bfx)}^2$ is finite for almost every $\bfx \in \calD$, since $h(\bfx) = \infty$ on a set of non-zero measure would violate eq.~\ref{eq:lemma_2_2}. Thus, for almost every $\bfx \in {\calD}$, the sequence $\displaystyle \big\{ \tilde{f}_m(\bfx) \big\}_{m \in \gz}$ is square summable, and the expression for $g$ in eq.~\ref{eq:lemma_2_1} can be interpreted as a Fourier expansion (in the $\eta$ variable). We may now use Parseval's identity  \citep{Folland_Real} and eq.~\ref{eq:lemma_2_1} to obtain:
\begin{align}
\int_{\mathfrak{I}} \abs{g(\bfx, \eta)}^2\,d\eta = \sum_{m\in \gz} \abs{\tilde{f}_m(\bfx)}^2\,.
\label{eq:lemma_2_3}
\end{align}
Integrating both sides of this expression for $\bfx \in \calD$ and using the steps in eq.~\ref{eq:lemma_2_2} establishes that $g \in \Lpspc{2}{}{{\calD} \times \mathfrak{I}}$, as required.
}
\end{proof}
The following result is the converse of Lemma \ref{Lemma:completeness_1} and is established using the same tools as above:
\begin{lemma}
Let $g \in \Lpspc{2}{}{{\calD} \times \mathfrak{I}}$ and for $\bfx \in \calD, m \in \gz$, let:
\begin{align}
\tilde{f}_m(\bfx) = \int_{\mathfrak{I}} g(\bfx,\eta)\,e^{-i2\pi m \eta}\,d\eta\,.
\label{eq:lemma_3_1}
 \end{align}
Furthermore, let the function ${f}$ be an extension of $\tilde{f}$ from the domain $\calD$ to the domain $\calC$ in the sense that for $\bfx\in\calD$,
\begin{align}
{f}(\Upsilon^m_\mathsf{h} \circ \bfx) :=   \tilde{f}_{m}(\bfx)\,.
\label{eq:lemma_3_2}
\end{align}
Then, ${f} \in \Lpspc{2}{}{\calC}$.
\begin{proof}
By Tonelli's theorem \citep{Folland_Real}, since $g \in \Lpspc{2}{}{{\calD} \times \mathfrak{I}}$, it holds that $g(\bfx,\cdot) \in \Lpspc{2}{}{\mathfrak{I}}$ for almost every $\bfx \in \calD$. Then, we may interpret eq.~\ref{eq:lemma_3_1} as a Fourier transform in $\eta$. By Parseval's identity  \citep{Folland_Real}, we have:
\begin{align}
\int_{\mathfrak{I}} \abs{g(\bfx, \eta)}^2\,d\eta = \sum_{m\in \gz} \abs{\tilde{f}_m(\bfx)}^2\,.
\label{eq:lemma_3_3}
\end{align}
Integrating both sides over $\bfx \in \calD$ and using $g \in \Lpspc{2}{}{{\calD} \times \mathfrak{I}}$, we get:
\begin{align}
\nonumber
\infty &> \int_{\bfx\in\calD}\!\bigg(\int_{\mathfrak{I}} \abs{g(\bfx, \eta)}^2\,d\eta\bigg)d\bfx = \int_{\bfx\in\calD}\!\bigg( \sum_{m\in \gz} \abs{\tilde{f}_m(\bfx)}^2 \bigg)d\bfx = \sum_{m\in \gz} \int_{\bfx\in\calD} \sum_{m\in \gz} \abs{\tilde{f}_m(\bfx)}^2 d\bfx\\
&=\sum_{m\in \gz} \int_{\bfx\in\calD} \abs{{f}(\Upsilon^m_\mathsf{h} \circ \bfx)}^2 d\bfx = \sum_{m\in \gz} \int_{\bfx\in \Upsilon^m_\mathsf{h}\circ \calD} \abs{{f}(\bfx)}^2 d\bfx =  \int_{\bfx \in \calC}\abs{f(\bfx)}^2 \,d\bfx\,.
\label{eq:lemma_3_4}
\end{align}
This shows that  ${f} \in \Lpspc{2}{}{\calC}$, as required. Note that the interchange of the summation and the integral in the calculations above can be justified using the Fubini-Tonelli Theorem \citep{Folland_Real}.
\end{proof}
\label{Lemma:completeness_2}
\end{lemma}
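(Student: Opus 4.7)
The plan is to mirror Lemma \ref{Lemma:completeness_1} by exploiting the $\mathsf{L}^2$-duality between Fourier series on the interval $\mathfrak{I} = [-\half,\half)$ and square-summable sequences indexed by $\gz$. The hypothesis $g \in \Lpspc{2}{}{\calD \times \mathfrak{I}}$ together with Tonelli's theorem immediately gives $g(\bfx,\cdot) \in \Lpspc{2}{}{\mathfrak{I}}$ for almost every $\bfx \in \calD$. For each such $\bfx$, I would interpret eq.~\ref{eq:lemma_3_1} as producing the $m$-th Fourier coefficient of the function $\eta \mapsto g(\bfx,\eta)$ on $\mathfrak{I}$, and then apply Parseval's identity to obtain the pointwise (in $\bfx$) identity $\sum_{m\in\gz} \abs{\tilde{f}_m(\bfx)}^2 = \int_{\mathfrak{I}}\abs{g(\bfx,\eta)}^2\,d\eta$.

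Next, I would integrate both sides over $\bfx \in \calD$. Because the summands are non-negative, the Tonelli theorem justifies interchanging the sum and the integral, yielding
\begin{align*}
\sum_{m\in\gz}\int_{\calD}\abs{\tilde{f}_m(\bfx)}^2\,d\bfx = \int_{\calD}\int_{\mathfrak{I}}\abs{g(\bfx,\eta)}^2\,d\eta\,d\bfx < \infty\,.
\end{align*}
To connect the left-hand side with an integral of $\abs{f}^2$ over $\calC$, I would invoke the fact that each isometry $\Upsilon_\mathsf{h}^m$ preserves the Lebesgue measure on $\rz^3$, so the change of variable $\bfy = \Upsilon_\mathsf{h}^m \circ \bfx$ gives $\int_{\calD}\abs{\tilde{f}_m(\bfx)}^2\,d\bfx = \int_{\Upsilon_\mathsf{h}^m \circ \calD}\abs{f(\bfy)}^2\,d\bfy$ by virtue of the defining relation $f(\Upsilon_\mathsf{h}^m \circ \bfx) = \tilde{f}_m(\bfx)$ in eq.~\ref{eq:lemma_3_2}.

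Finally, I would use the fact that $\calC$ decomposes as the essentially disjoint union $\bigcup_{m\in\gz}\Upsilon_\mathsf{h}^m \circ \calD$ (cf.\ eqs.~\ref{eq:FD_def_1}--\ref{eq:FD_def_2} applied to the helical group $\calG_1$ acting on $\calC$), so that summing the above identity over $m\in\gz$ produces $\sum_{m\in\gz}\int_{\Upsilon_\mathsf{h}^m\circ\calD}\abs{f(\bfy)}^2\,d\bfy = \int_{\calC}\abs{f(\bfy)}^2\,d\bfy$. Combined with the earlier finite bound, this establishes $f \in \Lpspc{2}{}{\calC}$.

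The step I expect to require the most care is the consistent extension in eq.~\ref{eq:lemma_3_2}: one must verify that $f$ is well-defined and measurable on $\calC$, which follows from the disjointness (up to measure zero) of the translates $\Upsilon_\mathsf{h}^m\circ\calD$ and the measurability of each $\tilde{f}_m$ on $\calD$ (the latter being a consequence of Fubini applied to $g$). The interchange of summation and integration in the displayed calculation is then the routine application of Tonelli that the author flags at the end of the statement, and no further subtleties arise beyond those already encountered in Lemma \ref{Lemma:completeness_1}.
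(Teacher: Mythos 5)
Your proposal is correct and follows essentially the same route as the paper's own proof: Tonelli to get $g(\bfx,\cdot)\in\Lpspc{2}{}{\mathfrak{I}}$ a.e., Parseval pointwise in $\bfx$, integration over $\calD$ with Tonelli justifying the interchange, and the measure-preserving change of variables together with the essentially disjoint decomposition $\calC = \bigcup_{m\in\gz}\Upsilon^m_\mathsf{h}\circ\calD$. Your added remark on the well-definedness and measurability of the extension $f$ is a welcome but minor elaboration of a point the paper leaves implicit.
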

Lemma \ref{Lemma:completeness_1} establishes the existence of an operator $\calU: \Lpspc{2}{}{\calC} \to  \Lpspc{2}{}{{\calD} \times \mathfrak{I}}$ defined as:
\begin{align}
\label{eq:operatorU}
(\calU f)(\bfx,\eta)=\sum_{m\in \gz} {f}(\Upsilon^m_\mathsf{h} \circ \bfx)\,e^{i 2 \pi m\eta}\,,
\end{align}
while Lemma \ref{Lemma:completeness_2} establishes the existence of its inverse  $\calU^{-1}:\Lpspc{2}{}{{\calD} \times \mathfrak{I}} \to  \Lpspc{2}{}{\calC}$ defined as:
\begin{align}
\label{eq:operatorU_inverse}
(\calU^{-1}g)(\Upsilon^m_\mathsf{h} \circ \bfx)=\int_{\mathfrak{I}} g(\bfx,\eta)\,e^{-i2\pi m \eta}\,d\eta\,.
\end{align}
To verify that eq.~\ref{eq:operatorU_inverse} indeed defines the inverse of the operator in eq.~\ref{eq:operatorU}, we consider $f \in \Lpspc{2}{}{\calC}$ and $g \in  \Lpspc{2}{}{{\calD} \times \mathfrak{I}}$ such that $g = \calU f$, i.e., 
\begin{align}
\label{eq:operatorU_again_1}
g(\bfx,\eta)=\sum_{m\in \gz} {f}(\Upsilon^m_\mathsf{h} \circ \bfx)\,e^{i 2 \pi m\eta}\,.
\end{align}
We now multiply the above by $e^{-i 2 \pi m'\eta}$ for $m'\in\gz$ and integrate over $\eta$, to arrive at:
\begin{align}
\nonumber
\int_{\mathfrak{I}} g(\bfx,\eta)\,e^{-i2\pi m' \eta}\,d\eta &= \int_{\mathfrak{I}} \sum_{m\in \gz} {f}(\Upsilon^m_\mathsf{h} \circ \bfx)\,e^{i 2 \pi (m - m') \eta}\,d\eta \\
&=  \sum_{m\in \gz} \int_{\mathfrak{I}} {f}(\Upsilon^m_\mathsf{h} \circ \bfx)\,e^{i 2 \pi (m - m') \eta}\,d\eta = {f}(\Upsilon^{m'}_\mathsf{h} \circ \bfx)\,.
\label{eq:operatorU_again_2}
\end{align}
Thus  $f = \calU^{-1}g$ in accordance with eq.~\ref{eq:operatorU_inverse}. 

We also observe, based on the calculations in eq.~\ref{eq:lemma_3_4} that:
\begin{align}
\norm{f}{ \Lpspc{2}{}{\calC}} = \norm{\calU f}{\Lpspc{2}{}{{\calD} \times \mathfrak{I}}}\,,
\label{eq:operatorU_isometry}
\end{align}
and therefore, the operator $\calU$ is an isometric-isomorphism\footnote{Eq.~\ref{eq:operatorU_isometry} shows that $\calU$ is an isometry and Lemma \ref{Lemma:completeness_2} shows that it has a well defined inverse. Therefore, it is a unitary operator \citep{Folland_Real}.} between the spaces $\Lpspc{2}{}{\calC}$ and ${\Lpspc{2}{}{{\calD} \times \mathfrak{I}}}$. In analogy to the  Bloch-Floquet transform in the literature used for studying periodic problems \citep{Reed_Simon4, dorfler2011photonic}, we will refer to the operator $\calU$ as the \textit{helical Bloch-Floquet transform}\footnote{This operator is closely related to the so-called Zak transform \citep{justel2018zak, justel2014radiation} associated with the group.}\footnote{\label{footnote:U_translation} By use of the definition in eq.~\ref{eq:operatorU}, it is easy to observe that the operator behaves in the following manner with respect to the action of the group: 
\begin{align}
(\calU f)(\Upsilon^n_\mathsf{h} \circ \bfx,\eta) = e^{-i2\pi n\eta}(\calU f)(\bfx,\eta)\,,
\end{align} for any $n \in \gz$.} 
This operator allows us to demonstrate the completeness of the helical Bloch waves in  $\Lpspc{2}{}{\calC}$. As mentioned earlier, the basic idea behind  this proof is to map a given $f \in \Lpspc{2}{}{\calC}$ to its counterpart in ${\Lpspc{2}{}{{\calD} \times \mathfrak{I}}}$ and to then use the completeness of the set $\Psi_{\eta}$ for each $\eta \in \mathfrak{I}$.
\begin{theorem}[\textbf{Completeness theorem for helical Bloch waves}]
\label{Thm:Completeness_Helical_Bloch}
Let $f \in \Lpspc{2}{}{\calC}$, and for $\ell \in \nz$, $\bfx \in \calC$, let:
\begin{align}
\label{eq:f_l_expression}
f_{\ell}(\bfx) := \sum_{s=1}^{\ell} \int_{\mathfrak{I}}  \big\langle{(\calU f)(\cdot ;\eta), \psi_{s}(\cdot; \eta)}\big\rangle_{\Lpspc{2}{}{\calD}}\,\psi_{s}(\bfx, \eta)\,d\eta\,.
\end{align}
Then $f_{\ell} \to f$ in $\Lpspc{2}{}{\calC}$ as $\ell \to \infty$.
\end{theorem}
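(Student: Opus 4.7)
The plan is to leverage the helical Bloch--Floquet transform $\calU$, which has already been shown to be an isometric isomorphism between $\Lpspc{2}{}{\calC}$ and $\Lpspc{2}{}{{\calD}\times\mathfrak{I}}$, and reduce the completeness question to the (already-established) completeness of the sets $\Psi_{\eta}=\{\psi_{s}(\cdot;\eta):s\in\nz\}$ in $\Lpspc{2}{}{\calD}$ for each fixed $\eta\in\mathfrak{I}$. Concretely, I would let $g=\calU f\in \Lpspc{2}{}{\calD\times\mathfrak{I}}$ and, using Tonelli's theorem, observe that $g(\cdot;\eta)\in\Lpspc{2}{}{\calD}$ for almost every $\eta\in\mathfrak{I}$. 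For such $\eta$, the completeness of $\Psi_{\eta}$ gives the generalized Fourier expansion
\[
g(\bfx;\eta) \;=\; \sum_{s=1}^{\infty} c_{s}(\eta)\,\psi_{s}(\bfx;\eta), \qquad c_{s}(\eta):=\innprod{g(\cdot;\eta)}{\psi_{s}(\cdot;\eta)}{\Lpspc{2}{}{\calD}},
\]
with convergence in $\Lpspc{2}{}{\calD}$ for each such $\eta$.

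Next, I would introduce the truncated reconstruction $G_{\ell}(\bfx,\eta):=\sum_{s=1}^{\ell}c_{s}(\eta)\psi_{s}(\bfx;\eta)$ for $\bfx\in\calD$ and promote the $\eta$-pointwise $\Lpspc{2}{}{\calD}$-convergence of $G_{\ell}(\cdot,\eta)$ to $g(\cdot;\eta)$ into $\Lpspc{2}{}{{\calD}\times\mathfrak{I}}$-convergence. This is where Bessel's inequality does the work: $\norm{G_{\ell}(\cdot,\eta)-g(\cdot;\eta)}{\Lpspc{2}{}{\calD}}^{2}\le\norm{g(\cdot;\eta)}{\Lpspc{2}{}{\calD}}^{2}$ uniformly in $\ell$, and the right-hand side is integrable over $\mathfrak{I}$ since $g\in \Lpspc{2}{}{{\calD}\times\mathfrak{I}}$. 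The dominated convergence theorem then yields $\displaystyle\int_{\mathfrak{I}}\norm{G_{\ell}(\cdot,\eta)-g(\cdot;\eta)}{\Lpspc{2}{}{\calD}}^{2}d\eta\to 0$, i.e.\ $G_{\ell}\to g$ in $\Lpspc{2}{}{{\calD}\times\mathfrak{I}}$.

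The crux of the argument is then to identify $\calU^{-1}G_{\ell}$ with the function $f_{\ell}$ defined in the statement. For this I would use the quasi-periodicity $\psi_{s}(\Upsilon_{\mathsf{h}}^{m}\!\circ\!\bfx;\eta)=e^{-i2\pi m\eta}\psi_{s}(\bfx;\eta)$ derived from the helical Bloch ansatz (cf.\ eq.~\ref{eq:Bloch_ansatz_5}). Applying the inversion formula \ref{eq:operatorU_inverse} and Fubini to interchange the finite sum over $s$ with the $\eta$-integral gives
\[
(\calU^{-1}G_{\ell})(\Upsilon_{\mathsf{h}}^{m}\!\circ\!\bfx)=\sum_{s=1}^{\ell}\int_{\mathfrak{I}}c_{s}(\eta)\,e^{-i2\pi m\eta}\psi_{s}(\bfx;\eta)\,d\eta=\sum_{s=1}^{\ell}\int_{\mathfrak{I}}c_{s}(\eta)\,\psi_{s}(\Upsilon_{\mathsf{h}}^{m}\!\circ\!\bfx;\eta)\,d\eta,
\]
which is precisely $f_{\ell}(\Upsilon_{\mathsf{h}}^{m}\!\circ\!\bfx)$; since every point of $\calC$ has this form for a unique $m$ and $\bfx\in\calD$, we get $f_{\ell}=\calU^{-1}G_{\ell}$ throughout $\calC$. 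Finally, the isometry property \ref{eq:operatorU_isometry} of $\calU$ (equivalently, of $\calU^{-1}$) immediately transfers the convergence $G_{\ell}\to g$ in $\Lpspc{2}{}{\calD\times\mathfrak{I}}$ to $f_{\ell}\to f$ in $\Lpspc{2}{}{\calC}$, completing the proof.

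The main obstacle is the step that upgrades $\eta$-pointwise convergence in $\Lpspc{2}{}{\calD}$ to joint $\Lpspc{2}{}{{\calD}\times\mathfrak{I}}$-convergence --- without a domination argument (provided by Bessel's inequality) one cannot exchange the limit with the $\eta$-integration. The identification $f_{\ell}=\calU^{-1}G_{\ell}$ is conceptually routine but is where the helical structure genuinely enters: it is the quasi-periodicity of the Bloch waves under $\Upsilon_{\mathsf{h}}$ that makes the natural coefficient expansion on $\calD$ extend coherently to a function on the full cylinder $\calC$.
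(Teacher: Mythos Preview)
Your proposal is correct and follows essentially the same route as the paper's proof: define the truncated expansion $G_\ell$ (the paper's $g_\ell$), use completeness of $\Psi_\eta$ for pointwise-in-$\eta$ convergence, dominate via Bessel's inequality to pass to $\Lpspc{2}{}{\calD\times\mathfrak{I}}$-convergence, then identify $\calU^{-1}G_\ell=f_\ell$ through the quasi-periodicity \eqref{eq:Bloch_ansatz_5} and transfer convergence by the isometry of $\calU$. Your domination bound $\norm{G_\ell(\cdot,\eta)-g(\cdot;\eta)}{\Lpspc{2}{}{\calD}}^{2}\le\norm{g(\cdot;\eta)}{\Lpspc{2}{}{\calD}}^{2}$ is in fact slightly cleaner than the paper's, which uses $\norm{a-b}{}^2\le 2(\norm{a}{}^2+\norm{b}{}^2)$ together with Bessel to get a factor of $4$.
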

\begin{proof}
Since $\calU f \in {\Lpspc{2}{}{{\calD} \times \mathfrak{I}}}$, it follows from Fubini's theorem that $\calU f(\cdot; \eta) \in \Lpspc{2}{}{{\calD}}$ for almost every $\eta \in  \mathfrak{I}$. Therefore, it can be approximated using the functions in the set $\Psi_{\eta}$. Consequently, if we define:
\begin{align}
g_{\ell}(\bfx,\eta) := \sum_{s=1}^{\ell} \big\langle{(\calU f)(\cdot ;\eta), \psi_{s}(\cdot; \eta)}\big\rangle_{\Lpspc{2}{}{\calD}}\,\psi_{s}(\bfx, \eta)\,,
\label{eq:gl_def}
\end{align}
then $g_{\ell}(\cdot ;\eta) \to (\calU f)(\cdot ;\eta)$ in ${\Lpspc{2}{}{\calD}}$ as $\ell  \to \infty$ for almost every $\eta \in \mathfrak{I}$. In other words, the residual: 
\begin{align}
\label{eq:rl_def}
\scriptr_{\ell}(\eta) = \norm{(\calU f)(\cdot ;\eta) - g_{\ell}(\cdot ;\eta)}{\Lpspc{2}{}{\calD}}^2\,,
\end{align}
has the property that $\scriptr_{\ell}(\eta) \to 0$ for almost every $\eta \in \mathfrak{I}$, as $\ell \to \infty$. Furthermore using the identity $\norm{f_1 + f_2}{}^2 \leq 2(\norm{f_1}{}^2 + \norm{f_2}{}^2)$, as well as Bessel's inequality \citep{Folland_Real} on $g_{\ell} \to  \calU f$, we get: 
\begin{align}
\nonumber
\scriptr_{\ell}(\eta) &=  \norm{(\calU f)(\cdot ;\eta) - g_{\ell}(\cdot ;\eta)}{\Lpspc{2}{}{\calD}}^2
\leq 2\,\bigg(  \norm{(\calU f)(\cdot ;\eta)}{\Lpspc{2}{}{\calD}}^2 +  \norm{g_{\ell}(\cdot ;\eta)}{\Lpspc{2}{}{\calD}}^2 \bigg)\\
&\leq  2\,\bigg(  \norm{(\calU f)(\cdot ;\eta)}{\Lpspc{2}{}{\calD}}^2 +  \norm{(\calU f)(\cdot ;\eta)}{\Lpspc{2}{}{\calD}}^2 \bigg) = 4\,  \norm{(\calU f)(\cdot ;\eta)}{\Lpspc{2}{}{\calD}}^2\,.
\label{eq:rl_property}
\end{align}
However, $\norm{(\calU f)(\cdot ;\eta)}{\Lpspc{2}{}{\calD}}^2$ is in $\Lpspc{1}{}{\mathfrak{I}}$ based on the calculations in Lemma \ref{eq:lemma_2_1}. Therefore, by the Dominated Convergence Theorem \citep{Folland_Real}:
\begin{align}
\label{eq:rl_dominated_convergence}
\int_{\mathfrak{I}} \scriptr_{\ell}(\eta)\,d\eta \to 0\;\text{as}\; \ell \to \infty\,,
\end{align}
and consequently:
\begin{align}
\label{eq:gl_convergence}
\norm{\calU f - g_{\ell}}{\Lpspc{2}{}{{\calD} \times \mathfrak{I}}} \to 0\;\text{as}\; \ell \to \infty\,.
\end{align}
Since $\calU$ is an isometric isomorphism, this implies that $\calU^{-1}g_{\ell} \to f$ in $\Lpspc{2}{}{\calC}$ as $\ell \to \infty$. Now, using eq.~\ref{eq:operatorU_inverse}, we see that:
\begin{align}
\calU^{-1}g_{\ell}(\Upsilon^m_\mathsf{h}  \circ \bfx) = \int_{\mathfrak{I}}\sum_{s=1}^{\ell} \big\langle{(\calU f)(\cdot ;\eta), \psi_{s}(\cdot; \eta)}\big\rangle_{\Lpspc{2}{}{\calD}}\,\psi_{s}(\bfx, \eta)\,e^{-i2\pi m \eta}\,d\eta\,.
\label{eq:Uinv_gl}
\end{align}
On the other hand, evaluating eq.~\ref{eq:f_l_expression} at $\bfx = \Upsilon^{m}_\mathsf{h} \circ \bfy$, and using eq.~\ref{eq:Bloch_ansatz_5}, we see that:
\begin{align}
f_{\ell}(\Upsilon^{m}_\mathsf{h} \circ \bfy) &=  \sum_{s=1}^{\ell} \int_{\mathfrak{I}}  \big\langle{(\calU f)(\cdot ;\eta), \psi_{s}(\cdot; \eta)}\big\rangle_{\Lpspc{2}{}{\calD}}\,\psi_{s}(\Upsilon^{m}_\mathsf{h} \circ \bfy, \eta)\,d\eta\,\\
&=\sum_{s=1}^{\ell} \int_{\mathfrak{I}}  \big\langle{(\calU f)(\cdot ;\eta), \psi_{s}(\cdot; \eta)}\big\rangle_{\Lpspc{2}{}{\calD}}\,e^{-i2\pi m\eta}\,\psi_{s}(\bfy, \eta)\,d\eta\,.
\label{eq:gl_fl}
\end{align}
Comparing eqs.~\ref{eq:Uinv_gl} and \ref{eq:gl_fl}, it follows that $\calU^{-1}g_{\ell}  = f_{\ell}$ since $\bfy$ and $m$ are generic, and therefore, $f_{\ell} \to f$ in $\Lpspc{2}{}{\calC}$ when $\ell \to \infty$, as required.
\end{proof}
As mentioned earlier, the above results imply in essence that the spectral properties of $\hamil$ can be described completely in terms of helical bands and helical Bloch states (refer to \ref{appendix:direct_integrals} for further discussion along these lines).\footnote{An immediate consequence of the completeness theorem for Bloch states is that the spectrum of $\hamil$ is completely contained in the set of helical bands, i.e., more precisely, $\mathsf{spec.}(\hamil) \subseteq \mathsf{clos.}(\Lambda)$, with $\mathsf{clos.}(\cdot)$ denoting the (topological) closure. This is because, if $\kappa \in \rz$ is such that $\kappa \notin \mathsf{clos.}(\Lambda)$, then the action of $(\hamil - \kappa)^{-1}$ on $f \in \Lpspc{2}{}{\calC}$ can be computed formally using eq.~\ref{eq:f_l_expression} in Theorem \ref{Thm:Completeness_Helical_Bloch} as:
\begin{align}
\label{eq:formal_theorem_1}
(\hamil - \kappa)^{-1}f =  \sum_{s=1}^{\infty} \int_{\mathfrak{I}}  \big\langle{(\calU f)(\cdot ;\eta), \psi_{s}(\cdot; \eta)}\big\rangle_{\Lpspc{2}{}{\calD}}\,(\hamil - \kappa)^{-1}\psi_{s}(\cdot; \eta)\,d\eta\,.
\end{align}
Now, using $\hamil\,\psi_s(\cdot;\eta) = \lambda_s(\eta)\,\psi_s(\cdot;\eta)$, we have: 
\begin{align}
\label{eq:formal_theorem_2}
(\hamil - \kappa)^{-1}\psi_{s}(\cdot; \eta) = \frac{1}{\lambda_s(\eta) - \kappa}\psi_{s}(\cdot; \eta)\,,
\end{align}
so that:
\begin{align}
\label{eq:formal_theorem_3}
(\hamil - \kappa)^{-1}f =  \sum_{s=1}^{\infty} \int_{\mathfrak{I}}  \big\langle{(\calU f)(\cdot ;\eta), \psi_{s}(\cdot; \eta)}\big\rangle_{\Lpspc{2}{}{\calD}}\, \frac{1}{\lambda_s(\eta) - \kappa}\psi_{s}(\cdot; \eta)\,d\eta\,.
\end{align}
Since $\kappa \notin \mathsf{clos.}(\Lambda)$, the term $\frac{1}{\lambda_s(\eta) - \kappa}$ remains bounded even as $s \to \infty$. Therefore, the right-hand side of eq.~\ref{eq:formal_theorem_3} can be interpreted as a bounded operator on $f$, and so, $\kappa$ must belong to the resolvent set of $\hamil$. Conversely, based on the techniques presented in \citep{Odeh_Keller, dorfler2011photonic} it is also possible to directly demonstrate that $\Lambda \subseteq \mathsf{spec.}(\hamil)$, by constructing a suitable singular sequence of the form $\displaystyle u_l(\bfx) = \zeta \bigg(\frac{{\bfx}}{l}\bigg)\psi_j(\bfx,\eta)$, and using Weyl's criterion \citep{hislop2012introduction, Kato, rellich1969perturbation}. Here $\zeta(\cdot)$ is a carefully chosen smooth cutoff function.   Since, $\mathsf{spec.}(\hamil)$ is always a closed set \citep{Kato}, and by definition, $\mathsf{clos.}(\Lambda)$ is the smallest closed set containing $\Lambda$, it follows that $\mathsf{clos.}(\Lambda) = \mathsf{spec.}(\hamil)$.  Furthermore, if $\lambda \in \mathsf{spec.}(\hamil)$, it can be immediately seen to be part of the essential spectrum of $\hamil$. This is because if it were part of the point spectrum, then $\lambda$ would be associated with eigenfunctions of finite multiplicity. Due to the fact that $\hamil$ commutes with the operators in $\mathcal{T} = \big\{T_{\widetilde{\Upsilon}}: T_{\widetilde{\Upsilon}}f(\bfx) = f(\widetilde{\Upsilon}^{-1}\circ\bfx)\big\}_{\widetilde{\Upsilon} \in \calG_1}$, these eigenfunctions would be left invariant by the operators in $\mathcal{T}$ as well (also see footnote \ref{footnote:physics_proof}). However, this would contradict the requirement that these eigenfunctions belong to $\Lpspc{2}{}{\calC}$.
Note that these above results also follow from the direct integral decomposition of the Hamiltonian discussed in Section \ref{subsubsec:Governing_Equations} and \ref{appendix:direct_integrals}.} Additionally, since the behavior of any helical Bloch state over all of $\calC$ is completely specified based on its behavior over $\calD$ (once a value of $\eta \in \mathfrak{I}$ is chosen), the single electron problem posed on all of $\calC$ can be reduced to a set of problems (indexed by $\eta$) posed on the fundamental domain (illustrated in Figure \ref{fig:Helical_Bloch_Theorem}).  Consequently, by appropriate use of the helical Bloch states and the helical bands, quantities of interest in Kohn-Sham theory (which can be described using the solutions to the single electron problem), can be formulated entirely in terms of quantities specified on the fundamental domain. We now look at this procedure in more detail.
\begin{figure}[ht]
\centering
{\includegraphics[trim={8cm 5cm 2.5cm 4cm}, clip, width=0.9\textwidth]{./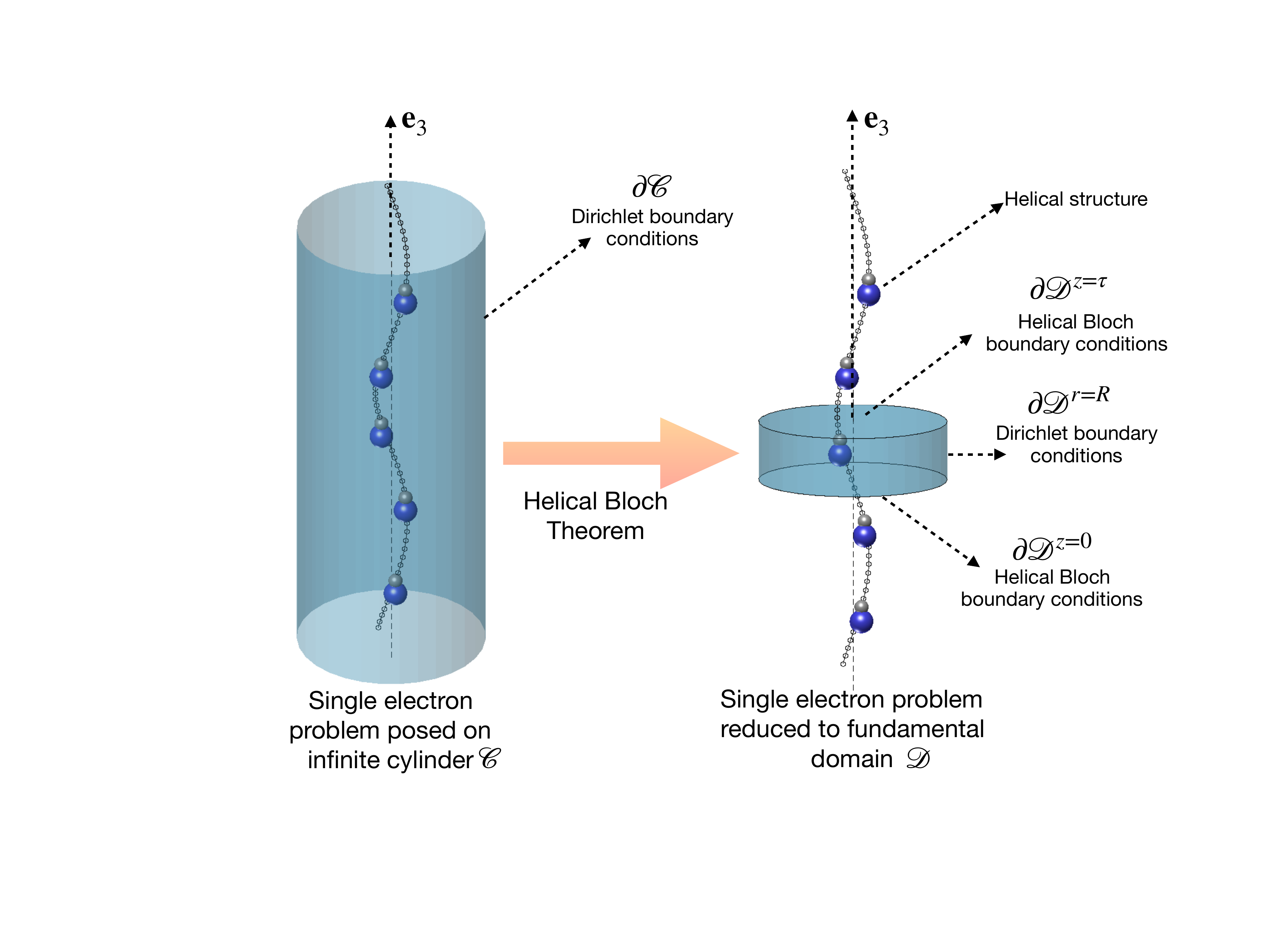}}
\caption{Illustration of the Helical Bloch Theorem (Theorem \ref{Thm:Existence_Helical_Bloch}) for the case when the single electron Hamiltonian obeys the symmetries of the helical group $\calG_1$.}
\label{fig:Helical_Bloch_Theorem}
\end{figure}
\subsubsection{Formulation of governing equations}
\label{subsubsec:Governing_Equations}
In what follows, we will consider the helical structure to be at finite electronic temperature $T_{\text{e}}$ Kelvin and we will ignore spin polarization effects. For the sake of clarity of presentation, we will itemize the formulation/derivation of the various terms and equations, as we go along.

\underline{Electron Density and Density Matrix:} A quantity of key importance in Kohn-Sham theory is the electron density. For a finite structure, such as a molecule or a cluster, this can be expressed in a straightforward manner in terms of the associated Kohn-Sham eigenstates and electronic occupations \citep{ghosh2017sparc_1, ghosh2019symmetry}. For a helical structure however,  care has to be taken to express this quantity due to the fact that there are effectively an infinite number of electrons associated with the structure. {In what follows, motivated by rigorous mathematical results related to the description of electronic states in crystalline systems \citep{catto2001thermodynamic, cances2020numerical, cances2008new}, we address this issue by
defining the single particle density operator \citep{goedecker1999linear, anantharaman2009existence, wang2016variational} in terms of the single electron Hamiltonian, and then expressing the electron density in terms of the diagonal of the density operator.}

{To clarify the above procedure, let us first consider a finite system (i.e., an isolated molecule or an atomic cluster, for example) in $\rz^3$, and let the single electron Hamiltonian, the single particle density operator (or density matrix), and the electron density for the system be denoted as $H$, $\mathfrak{D}$ and $\varrho(\bfx)$ respectively. Then, $H$ and $\mathfrak{D}$ are related as \citep{goedecker1999linear, anantharaman2009existence, wang2016variational}:}
{
\begin{align}
\label{eq:DM_Hamiltonian}
\mathfrak{D} =  f_{T_\text{e}}(H)\,,
\end{align}}{with $f_{T_e}(\cdot)$ denoting the Fermi-Dirac distribution function at electronic temperature $T_\text{e}$, i.e.:
\begin{align}
\label{eq:Fermi_Dirac}
f_{T_e}(y) = \frac{1}{1 + \exp\big({\frac{y\,-\,\lambda_{\text{F}}}{k_{\text{B}} T_{\text{e}}}}\big)}\,.
\end{align}
Here, $\lambda_{\text{F}}$ and $k_{\text{B}}$ denote the Fermi level and the Boltzmann constant respectively. Due to this definition, $\mathfrak{D}$ turns out to be a trace-class operator on $\Lpspc{2}{}{\rz^3}$, even though $H$ is (generically) an unbounded self-adjoint operator on the same space. Assuming $H$ has a pure point spectrum, denoting the eigenvalues of $H$ as $\omega_{j} \in \rz$, and the corresponding eigenvectors as $\upsilon_j \in \Lpspc{2}{}{\rz^3}$, we may express $H$ using its spectral representation as:}
{
\begin{align}
H = \sum_{j=1}^{\infty} \omega_{j}\,\upsilon_j \otimes \overline{\upsilon_j}\,.
\label{eq:finite_hamil_spectral}
\end{align}
}
{Using this form we see that the action of $H$ on any $f \in \text{Dom.}(H)$ is expressible as:
\begin{align}
Hf = \sum_{j=1}^{\infty} \omega_{j}\, \innprod{f}{\upsilon_j }{\Lpspc{2}{}{\rz^3}}\,\upsilon_j \,,
\end{align}}{and that $\mathfrak{D}$ can be expressed by means of spectral mapping \citep{Kato} as:}
{\begin{align}
\mathfrak{D} = \sum_{j=1}^{\infty} f_{T_e}(\omega_{j})\,\upsilon_j \otimes \overline{\upsilon_j}
\end{align}
}{Due to this definition, the action of $\mathfrak{D}$ on any $f \in \Lpspc{2}{}{\rz^3}$ can be expressed as:
\begin{align}
\mathfrak{D} f = \sum_{j=1}^{\infty} f_{T_e}(\omega_{j})\, \innprod{f}{\upsilon_j }{\Lpspc{2}{}{\rz^3}}\,\upsilon_j \,,
\end{align}
and it makes sense to write $\mathfrak{D}$ in coordinate form as:
\begin{align}
\mathfrak{D}(\bfx,\bfy) = \sum_{j=1}^{\infty} f_{T_e}(\omega_{j})\,\upsilon_j(\bfx) \otimes \overline{\upsilon_j(\bfy)}\,,
\end{align}
for $\bfx,\bfy \in \rz^3$. In this setting, the electron density $\varrho(\bfx)$ can be identified in terms of the diagonal of $\mathfrak{D}$, i.e.,
\begin{align}
\varrho(\bfx) = 2\,\mathfrak{D}(\bfx,\bfx)\,,
\label{eq:rho_DM_finite}
\end{align}
which leads to the well-known expression from Kohn-Sham theory \citep{KohnSham_DFT} (also see footnote \ref{footnote:diagonal_trace_class}):
\begin{align}
\varrho(\bfx) = 2\,\sum_{j=1}^{\infty} f_{T_e}(\omega_{j})\,\abs{\upsilon_j(\bfx)}^2\,.
\label{eq:electron_density_finite}
\end{align}
}
{Now, coming back to the case of the helical structure, we would analogously like to connect the single electron Hamiltonian $\hamil$, the single particle density operator ${\Gamma}$ and the electron density $\rho(\bfx)$. Accordingly, we define:
\begin{align}
{\Gamma} = f_{T_\text{e}}(\hamil)\,,
\label{eq:Gamma_def}
\end{align}
as an operator on $\Lpspc{2}{}{\calC}$. The issue however, is that $\hamil$ does not admit a representation similar to eq.~\ref{eq:finite_hamil_spectral}, and so the above definition does not immediately lead to transparent expressions for the electron density or the density operator in coordinate representation. To adress this, it is useful to first recast eq.~\ref{eq:Gamma_def} in terms of helical Bloch states.  The apparatus of \textit{direct integrals} \citep{Reed_Simon4, Garrett_Spectral_Theorem}, discussed in \ref{appendix:direct_integrals}, allows us to do this in a mathematically rigorous manner. The key result from the appendix is that the helical Bloch-Floquet transform allows the single electron Hamiltonian to be ``block-diagonalized'' into a set of problems associated with the helical Bloch states that are posed over the fundamental domain, i.e.:
\begin{align}
\label{eq:Bloch_Floquet_Block_Diagonal}
\calU\,\hamil\,\calU^{-1} =  \int_{\mathfrak{I}}^{\oplus} \hamil_{\eta}\,d\eta\,.
\end{align}
Here, as before, $\hamil$ represents the operator $-\half \Delta + V(\bfx)$ over the cylinder $\calC$ along with the boundary condition $\psi(\bfx) = 0$ for $\bfx \in \partial \calC$. The potential $V(\bfx)$ is group invariant (eq.~\ref{eq:potential_invariance}) and the unitary operator $\calU: \Lpspc{2}{}{{\calC}} \to \Lpspc{2}{}{{\calD} \times \mathfrak{I}}$ represents the helical Bloch-Floquet transform (eq.~\ref{eq:operatorU}). The operators $\{\hamil_{\eta}\}_{\eta \in \mathfrak{I}}$ represent the \textit{fibers} of $\hamil$ (in the sense of direct integrals) and are closely related to the operators $\mathfrak{h}_{\eta}^{\textsf{aux}}$ introduced in the proof\footnote{The key difference is that the operators $\mathfrak{h}_{\eta}^{\textsf{aux}}$ include $\eta$ dependence in the operators themselves and have group invariant solutions, whereas the operators $\hamil_{\eta}$ include  $\eta$ dependence in the boundary conditions and have helical Bloch solutions (i.e., solutions which are group invariant up to an $\eta$ dependent phase).} of Theorem \ref{Thm:Existence_Helical_Bloch} (eq.~\ref{eq:phi_equation_1}). Specifically, for each $\eta \in \mathfrak{I}$, the operator $\hamil_{\eta}$ represents the operator $-\half \Delta + V(\bfx)$ over the interior of the fundamental domain (i.e., the set $\calD$) along with the boundary conditions $\psi(\Upsilon_{\mathsf{h}} \circ \bfx)=e^{-i2\pi\eta}\,\psi(\bfx)$, $\bfR_{2\pi\alpha}^{-1}\nabla\psi(\Upsilon_{\mathsf{h}} \circ \bfx)=e^{-i2\pi\eta}\,\nabla\psi(\bfx)$ for $\bfx \in \partial\calD^{\,z=0}$ and, $\psi(\bfx) = 0$ for $\bfx \in \partial\calD^{\,r=R}$. The eigenstates of the operators $\{\hamil_{\eta}\}_{\eta \in \mathfrak{I}}$ are precisely the helical bands $\Lambda = \{\lambda_j(\eta): \eta \in \mathfrak{I}, j \in \nz\}$ and the helical Bloch states $\Psi = \{\psi_j(\cdot;\eta): \eta \in \mathfrak{I}, j \in \nz\}$ restricted to the region $\calD$.} 

{Above, eq.~\ref{eq:Bloch_Floquet_Block_Diagonal} expresses that $\hamil$ is unitarily equivalent to a ``block-diagonal'' operator whose ``blocks'' $\hamil_{\eta}$ are indexed by $\eta \in \mathfrak{I}$. Therefore, upon computing $\Gamma$ using eq.~\ref{eq:Gamma_def}, we can expect to obtain another operator which is unitarily equivalent to a block-diagonal operator with blocks $\big\{\Gamma_{\eta} =  f_{T_\text{e}}(\hamil_{\eta})\big\}_{\eta \in \mathfrak{I}}$. Since the function $f_{T_e}(\cdot)$ is analytic \citep{benzi2013decay}, these statements can be made mathematically precise by making use of the properties of the direct integral representation \citep[Theorem XIII.85]{Reed_Simon4}. Thus, we may write for the density matrix $\Gamma$ (as an operator on $\Lpspc{2}{}{{\calC}}$):
\begin{align}
\calU\,\Gamma\,\calU^{-1} =   \int_{\mathfrak{I}}^{\oplus} \Gamma_{\eta}\,d\eta\, = \int_{\mathfrak{I}}^{\oplus} f_{T_\text{e}}(\hamil_{\eta})\,d\eta\,.
\label{eq:Gamma_Direct_Integral}
\end{align}
Next, if we are able to express  $f_{T_\text{e}}(\hamil_{\eta})$ in a transparent form, we may be able to further simplify the expression for $\Gamma$. Accordingly, we write the operators $\hamil_{\eta}$ using spectral representation \citep{schmudgen2012unbounded, Teschl_QM, Kato} as:
\begin{align}
\hamil_{\eta} = \sum_{j=1}^{\infty} \lambda_j(\eta)\,\psi_j(\cdot; \eta) \otimes \overline{ \psi_j(\cdot; \eta)}\,,
\label{eq:spectral_hamil_eta}
\end{align}
and obtain:
\begin{align}
\label{eq:fibers_of_DM}
\Gamma_{\eta} =  f_{T_\text{e}}(\hamil_{\eta}) = \sum_{j=1}^{\infty}  f_{T_\text{e}}\big(\lambda_j(\eta)\big)\,\psi_j(\cdot; \eta) \otimes \overline{ \psi_j(\cdot; \eta)}\,.
\end{align}
Thus, as an operator on $\Lpspc{2}{}{{\calC}}$, the density matrix admits the representation:
\begin{align}
{\Gamma} = \calU^{-1}\,\bigg(\int_{\mathfrak{I}}^{\oplus}\!f_{T_{\text{e}}}(\hamil_{\eta})\,d\eta\bigg)\,\calU = \calU^{-1}\,\bigg( \int_{\mathfrak{I}}^{\oplus}\! \sum_{j=1}^{\infty} f_{T_{\text{e}}}\big(\lambda_j(\eta)\big)\,\psi_j(\cdot; \eta)\otimes\overline{ \psi_j(\cdot; \eta)}\,d\eta\bigg)\,\calU\,.
\label{eq:spectral_DM_calC}
\end{align}
While describing quantities over the fundamental domain, it is more appropriate and easier to deal with the density matrix as expressed as an operator on $\Lpspc{2}{}{{\calD} \times \mathfrak{I}}$. 
This can be written as $\tilde{\Gamma} = \calU\,\Gamma\,\calU^{-1}$ (i.e., the right hand side of eq.~\ref{eq:Gamma_Direct_Integral}), and it admits the following expression in coordinate representation (with $\bfx,\bfy \in \calD$):
\begin{align}
\tilde{\Gamma}(\bfx,\bfy) =  \int_{\mathfrak{I}}^{\oplus}\!\sum_{j=1}^{\infty}\,f_{T_{\text{e}}}\big(\lambda_j(\eta)\big)\,\psi_j(\bfx; \eta)\otimes\overline{\psi_j(\bfy; \eta)}\,d\eta\,.
\label{eq:spectral_DM_calD}
\end{align}
Now writing:
\begin{align}
\rho(\bfx) = 2\,\tilde{\Gamma}(\bfx,\bfx)\,,
\label{eq:rho_Gamma_def}
\end{align}
we see that the electron density can be expressed\footnote{\label{footnote:diagonal_trace_class}{As pointed out by an anonymous reviewer, eq.~\ref{eq:rho_Gamma_def} or (eq.~\ref{eq:rho_DM_finite} for the finite system case) can be somewhat subtle to interpret because the} {diagonal set $(\bfx, \bfx)$ is of measure zero in $\rz^3 \times \rz^3$. As far as we can tell, it is quite common in the electronic structure calculations literature to define the electron density as the diagonal of the density matrix (see e.g. equation 22 in \citep{wang2016variational}), though this particular issue is never really addressed. The validity of this definition actually follows from the properties of nuclear operators (see \citep{Math_overflow_Nuclear_operator_Diagonal, Math_overflow_trace_HS} and references therein) and therefore, the case of the density matrices discussed here is covered. Alternately, if one were to accept the expression for the electron density as given in eq.~\ref{eq:electron_density} (or eq.~\ref{eq:electron_density_finite} for the finite system case), then eq.~\ref{eq:rho_Gamma_def} (or correspondingly eq.~\ref{eq:rho_DM_finite} ) can be seen as meaningful.}} as (for $\bfx \in \calD$):
\begin{align}
\rho(\bfx) = 2\int_{\mathfrak{I}}\sum_{j=1}^{\infty}\,f_{T_e}\big(\lambda_j(\eta)\big)\,\lvert \psi_j(\bfx; \eta)\rvert^2\,d\eta\,.
\label{eq:electron_density}
\end{align}}
It is easy  to see from the above expression\footnote{We would like to thank Eric Cances (Ecole des Ponts ParisTech) and Carlos Garcia Cervera (Univ. of California, Santa Barbara) for email communication related to technicalities of the above derivation of eq.~\ref{eq:electron_density} and also for providing useful references.}  that the electron density is group invariant and also obeys a zero-Dirichlet boundary condition on the lateral surface of  $\calD$ (i.e., for $\bfx \in \partial\calD^{\,r=R}$). It is also apparent from the expression\footnote{{Note that the action of the operator $\tilde{\Gamma}$ on functions in $\Lpspc{2}{}{{\calD} \times \mathfrak{I}}$ follows from the definition of the direct integral, specifically, eq.~\ref{eq:operatorA} in \ref{appendix:direct_integrals}. Specifically, for $f(\bfx,\eta) \in \Lpspc{2}{}{{\calD} \times \mathfrak{I}}$:
\begin{align}
\big(\tilde{\Gamma}f\big)(\bfx,\eta) = \Gamma_{\eta} f(\bfx, \eta) = \sum_{j=1}^{\infty}\,f_{T_{\text{e}}}\big(\lambda_j(\eta)\big)\innprod{f(\cdot; \eta)}{\psi_j(\cdot; \eta)}{\Lpspc{2}{}{{\calD}}}\,\psi_j(\bfx,  \eta)\,.
\end{align}}} for the density matrix (eq.~\ref{eq:spectral_DM_calD}) that the following invariance relationship holds for any $m \in \gz$ and $\bfx,\bfy \in \calD$:
\begin{align}
\tilde{\Gamma}(\bfx,\bfy) = \tilde{\Gamma}(\Upsilon^m_{\mathsf{h}} \circ \bfx,\Upsilon^m_{\mathsf{h}} \circ \bfy)\,.
\label{eq:DM_invariance}
\end{align}
For notational simplicity, it is convenient to introduce the scalars $g_j(\eta) = f_{T_e}\big(\lambda_j(\eta)\big)$, i.e. the thermalized occupation numbers of the electronic states of the system, that appear in eqs.~\ref{eq:electron_density}, \ref{eq:spectral_DM_calD} and \ref{eq:spectral_DM_calC}. We will denote the collection of occupation numbers as $\mathfrak{G} = \big\{g_j(\eta) = f_{T_e}\big(\lambda_j(\eta)\big):\eta\in\mathfrak{I}, j\in \nz\}$.

The electron density for an extended system is expected to obey the constraint of having a fixed number of electrons per unit fundamental domain of the system, even though the electronic states themselves are delocalized over the entire structure \citep{Defranceschi_LeBris, LeBris_ReviewBook}. Denoting the number of electrons per unit cell as $N_{\text{e}}$, in our case, this leads to:
\begin{align}
\int_{\calD} \rho(\bfx)\,d\bfx = N_{\text{e}}\,,
\label{eq:rho_integral}
\end{align}
from which, using the orthonormality of the Bloch states over $\calD$, follows the constraint:
\begin{align}
2\int_{\mathfrak{I}}\sum_{j=1}^{\infty}\,f_{T_e}\big(\lambda_j(\eta)\big)\,d\eta =  2\int_{\mathfrak{I}}\sum_{j=1}^{\infty} g_j(\eta) = N_{\text{e}}\,.
\label{eq:rho_constraint}
\end{align}
In practice, the above equation can be used to compute the Fermi-level ($\lambda_{\text{F}}$) of the system. It also follows from this discussion that the density matrix $\Gamma$ on $\Lpspc{2}{}{\calC}$ is locally trace class.

\underline{Electronic Free Energy (per unit fundamental domain):} With the above expressions in place, we can now use an energy-minimization formalism to deduce the governing equations of Kohn-Sham theory for the helical structure. Since the structure is infinite, the quantity of primary importance in this regard is the \textit{electronic free energy per unit fundamental domain}, denoted here as $\calF(\Lambda, \Psi, \calP_{\calG_1}, \calD, \calG_1)$. This notation emphasizes the dependence of this quantity on the helical Bloch states, the helical Bloch bands, the positions of the simulated atoms $\calP_{\calG_1} = \big \{\bfx_k \big\}_{k=1}^{M_{\calG_1}}$ within the fundamental domain, (the interior of) the fundamental domain and the helical group itself. Following \citep{ghosh2019symmetry}, we express this quantity within the pseudopotential \citep{troullier1991efficient, chelikowsky2019introductory} and Local Density Approximations \citep{KohnSham_DFT} as:
\begin{align}
\nonumber
\calF(\Lambda, \Psi, \calP_{\calG_1}, \calD, \calG_1) = T_{\text{s}}&(\Lambda, \Psi,\calD, \calG_1) \,+\,E_{\text{xc}}(\rho, \calD)\,+\,K(\Lambda, \Psi, \calP_{\calG_1}, \calD, \calG_1)\\
&+ E_{\text{el}}(\rho, \calP_{\calG_1}, \calD, \calG_1)\,-\,T_{\text{e}}\,S(\Lambda)\,,
\label{eq:electronic_free_energy}
\end{align}
with the terms on the right-hand side representing the kinetic energy of the electrons, the exchange correlation energy, the nonlocal pseudopotential energy, the electrostatic energy and the electronic entropy contribution, respectively. We now discuss each of the terms in the above equation in detail.

\underline{Kinetic Energy Term:} The first term on the right-hand side of the above expression represents the kinetic energy of the electrons per unit fundamental domain. To motivate this term, we recall that for a finite system (in ${\rz^3}$) with single particle density matrix $\mathfrak{D}$, the kinetic energy can be expressed as \citep{catto2001thermodynamic, wang2016variational, anantharaman2009existence}:
\begin{align}
T_{\text{s}}^{\text{finite}} = 2\,\text{Tr.}\big[-\half\Delta\,\mathfrak{D}\big]\,,
\label{eq:KE_finite}
\end{align}
with $\text{Tr.}[\cdot]$ denoting the operator trace (of a trace-class operator on $\Lpspc{2}{}{\rz^3}$).  Analogously, it would make sense to consider the \textit{trace per unit fundamental domain} in case of the helical structure. As described in \ref{appendix:direct_integrals}, for an operator $\calA$ which is invariant under the group $\calG_1$ and which is locally trace-class, it is possible to assign meaning to the trace per unit fundamental domain by means of the direct integral decomposition. Specifically, if the helical Bloch-Floquet transform block-diagonalizes the operator into its fibers as:
\begin{align}
\calU\,\calA\,\calU^{-1} =  \int_{\mathfrak{I}}^{\oplus} \calA_{\eta}\,d\eta\,,
\label{eq:direct_integral_operatorA}
\end{align}
then the trace per unit cell (denoted $\underline{\text{Tr.}}[\cdot]$ henceforth) can be expressed as:
\begin{align}
\underline{\text{Tr.}}[\calA] = \int_{\mathfrak{I}}\!\text{Tr.}[\calA_{\eta}]\,d\eta\,,
\label{eq:direct_integral_trace_formula}
\end{align}
with the trace inside the integral signifying the usual operator trace\footnote{The operator trace for any trace-class operator $\mathfrak{O}$ on $\Lpspc{2}{}{\calD}$ can be computed as:
\begin{align}
\text{Tr.}[\mathfrak{O}] = \sum_{j=1}^{\infty}\innprod{\mathfrak{O}\,f_j}{f_j}{\Lpspc{2}{}{\calD}}\,,
\label{eq:trace_formula}
\end{align}
where $\{f_j\}_{j=1}^{\infty}$ can be any orthonormal basis of $\Lpspc{2}{}{\calD}$. Refer e.g.~to \citep{catto2001thermodynamic, anantharaman2009existence} for broader discussions of trace-class and locally trace-class operators in the context of electronic structure models.} in $\Lpspc{2}{}{\calD}$.
The expression for the kinetic energy per unit fundamental domain for the helical structure therefore boils down to:\footnote{Since the helical Bloch states $\psi_j(\cdot;\eta)$ belong to the domain of the operator $\hamil_{\eta}$, it follows that the traces in eq.~\ref{eq:KE_trace_fiber} are finite, making the expressions in that equation well defined. See \citep{catto2001thermodynamic, anantharaman2009existence} for further mathematical details along these lines.}
\begin{align}
T_{\text{s}}&(\Lambda, \Psi,\calD, \calG_1) = 2\,\underline{\text{Tr.}}\big[-\half\Delta\,{\Gamma}\big] =  2\int_{\mathfrak{I}}\!\text{Tr.}\bigg[\big(-\half\Delta\,{\Gamma}\big)_{\eta}\bigg]\,d\eta\,.
\label{eq:KE_trace_fiber}
\end{align}
We now write $\big(-\half\Delta{\Gamma}\big)_{\eta}$ as $(-\half\Delta)_{\eta}{\Gamma}_{\eta}$, observe that ${\Gamma}_{\eta}$ is already available from eq.~\ref{eq:fibers_of_DM}. Next, based on the discussion in \ref{appendix:direct_integrals}, we note that the fibers of the Laplacian on $\Lpspc{2}{}{\calC}$ are simply the Laplacian operators on $\Lpspc{2}{}{\calD}$ with ($\eta$ -dependent) helical Bloch boundary conditions. Since the states $\psi_j(\bfx; \eta)$ already satisfy these boundary conditions, and they form a basis of $\Lpspc{2}{}{\calD}$, it follows that:
\begin{align}
\nonumber
T_{\text{s}}(\Lambda, \Psi,\calD, \calG_1) &=  2\int_{\mathfrak{I}}\bigg[\sum_{j=1}^{\infty}f_{T_{\text{e}}}\big(\lambda_j(\eta)\big)\int_{\calD}-\half\Delta \psi_j(\bfx; \eta)\,\overline{\psi_j({\bfx}; \eta)}\,d\bfx \bigg]d\eta\\\nonumber
&=  \int_{\mathfrak{I}}\sum_{j=1}^{\infty}f_{T_{\text{e}}}\big(\lambda_j(\eta)\big)\big\langle{\Delta \psi_j(\cdot; \eta)},{\psi_j(\cdot; \eta)}\big\rangle_{\Lpspc{2}{}{\calD}}\,d\eta\\
&= \int_{\mathfrak{I}}\sum_{j=1}^{\infty}g_j(\eta)\big\langle{\Delta \psi_j(\cdot; \eta)},{\psi_j(\cdot; \eta)}\big\rangle_{\Lpspc{2}{}{\calD}}\,d\eta\,.
\label{eq:KE_Expression}
\end{align}

\underline{Exchange-Correlation Term:} The second term on the right-hand side of eq.~\ref{eq:electronic_free_energy} represents the exchange correlation energy of the electrons per unit fundamental domain. Within the Local Density Approximation (LDA) \citep{KohnSham_DFT}, it can be written as:
\begin{align}
E_{\text{xc}}(\rho, \calD) = \int_{\calD}\varepsilon_{\text{xc}}[\rho(\bfx)]\,\rho(\bfx)\,d\bfx\,.
\label{eq:Exc_Expression}
\end{align}
Note that it is also possible to modify this expression to use more sophisticated exchange correlation functionals such as the Generalized Gradient Approximation \citep{GGA_made_simple_Perdew} and this will have little bearing on our subsequent discussion.

\underline{Nonlocal Pseudopotential Energy Term:} The third term on the right-hand side of eq.~\ref{eq:electronic_free_energy} represents the energetic contribution from the nonlocal part of the pseudopotential and models the effect of electronic core states. For a finite system of $N_{\text{at}}$ atoms located at the points $\{\bfp_k \in \rz^3\}_{k=1}^{N_{\text{at}}}$, if the single particle electron density is denoted as $\mathfrak{D}$, then this term has the following form:
\begin{align}
K^{\text{finite}} = 2\,\text{Tr.}\big[\calV^{\text{nl}}\,\mathfrak{D}\big]\,.
\label{eq:Finite_Vnl_Expression}
\end{align}
The  operator $\calV^{\text{nl}}$ in the above equation is expressible in Kleinman-Bylander form \citep{kleinman1982efficacious} as :
\begin{align}
\calV^{\text{nl}} = \sum_{k=1}^{N_{\text{at}}}\sum_{p \in \calN_k}\gamma_{k,p}\,\chi_{k,p}(\cdot;\bfp_k)\,{\otimes}\,\overline{\chi_{k,p}(\cdot;\bfp_k)}\,.
\label{eq:Kleinman_Bylander_form}
\end{align}
Here, $\calN_k$ denotes the collection of projectors associated with the atom at $\bfp_k$, $\chi_{k,p}$ are the projection functions, and $\gamma_{k,p}$ are the corresponding normalization constants. The functions $\chi_{k,p}$ are themselves expressible in terms of atomic orbitals and are usually supported in a small region of space by design \citep{Troullier_Martins_pseudo}. To obtain the correct analog of this expression for the helical structure i.e., \textit{the nonlocal pseudopotential energy per unit fundamental domain}, it is useful to recall that this contribution to the energy is tied to the atoms in the fundamental domain as well as the electronic states in the system.  It is of a somewhat different nature as compared to the kinetic energy term for instance, in which case the electrons are the only contributing source. Since the electrons in the extended structure are delocalized, the trace per unit fundamental domain leads to the appropriate expression in that case. In case of the nonlocal pseudopotential energy term however, the contribution from the electrons is delocalized, while those from the atoms are not. With this in mind,\footnote{We would like to thank Phanish Suryanarayana, Georgia Institute of Technology, for discussions which helped clarify some of the properties of the nonlocal pseudoptential operator for the case of extended/condensed matter systems.} we now focus on the atoms in the fundamental domain, and denote the non-local pseudoptential operator associated with these atoms as:
\begin{align}
\calV^{\text{nl}}_{\calD} = \sum_{k=1}^{M_{\calG_1}}\sum_{p \in \calN_k}\gamma_{k,p}\,\chi_{k,p}(\cdot;\bfx_k)\,{\otimes}\,\overline{\chi_{k,p}(\cdot;\bfx_k)}\,.
\label{eq:VNL_fundamental_domain}
\end{align}
Then, in analogy with eq.~\ref{eq:Finite_Vnl_Expression}, the nonlocal pseudopotential energy per unit fundamental domain in case of the helical structure can be written by considering the action of $\calV^{\text{nl}}_{\calD} $ to the density matrix operator defined in eq.~\ref{eq:spectral_DM_calC}, i.e.:
\begin{align}
K = 2\,\text{Tr.}[\,\calV^{\text{nl}}_{\calD}\,{\Gamma}\,]\,.
\label{eq:K_helical_expression_1}
\end{align}
To simplify this expression\footnote{Note that ${\Gamma}$ is locally trace class, while $\calV^{\text{nl}}_{\calD}$ is a finite rank (and hence bounded) operator with a limited spatial extent. This makes eq.~\ref{eq:K_helical_expression_1} well defined.}, we employ eqs.~\ref{eq:Gamma_Direct_Integral}, \ref{eq:spectral_DM_calC}, the unitarity of the operator $\calU$, as well as the invariance of the trace under unitary transformations to obtain:
\begin{align}
K = 2\,\text{Tr.}[\,\calV^{\text{nl}}_{\calD}\,\calU^{-1}\,\tilde{\Gamma}\,\calU\,] =   2\,\text{Tr.}[\,\calU^{-1}\,\calU\,\calV^{\text{nl}}_{\calD}\,\calU^{-1}\,\tilde{\Gamma}\,\calU\,] = 2\,\text{Tr.}[\,\calU\,\calV^{\text{nl}}_{\calD}\,\calU^{-1}\,\tilde{\Gamma}\,]\,.
\label{eq:K_helical_expression_2}
\end{align}
{Next, we observe that the operator $\widehat{\calV}^{\text{nl}}_{{\calD}} = \calU\,\calV^{\text{nl}}_{\calD}\,\calU^{-1}$ acts on $\Lpspc{2}{}{\calD \times \mathfrak{I}}$, and it admits a direct integral representation (i.e., $\widehat{\calV}^{\text{nl}}_{{\calD}} =  \displaystyle \int_{\mathfrak{I}}^{\oplus}(\widehat{\calV}^{\text{nl}}_{{\calD}})_{\eta}\,d\eta$ ). The fibers of $\widehat{\calV}^{\text{nl}}_{{\calD}}$ can be written as : 
\begin{align}
\nonumber
(\widehat{\calV}^{\text{nl}}_{{\calD}})_{\eta} = (\calU\,\calV^{\text{nl}}_{\calD}\,\calU^{-1})_{\eta} &= \calU\,\bigg(\sum_{k=1}^{M_{\calG_1}}\sum_{p \in \calN_k}\gamma_{k,p}\,\chi_{k,p}(\cdot;\bfx_k)\,{\otimes}\,\overline{\chi_{k,p}(\cdot;\bfx_k)}\bigg)\,\calU^{-1}\\
&= \sum_{k=1}^{M_{\calG_1}}\sum_{p \in \calN_k}\gamma_{k,p}\,\,\calU\chi_{k,p}(\cdot;\eta;\bfx_k)\,{\otimes}\,\overline{\calU\chi_{k,p}(\cdot;\eta;\bfx_k)}\,.\label{eq:U_Vnl_Uinv}
\end{align}
In what follows, for the sake of brevity, we will denote the helical Bloch-Floquet transform of the projection functions, i.e., $\calU\chi_{k,p}(\cdot;\eta;\bfx_k)$ as $\hat{\chi}_{k,p}(\cdot;\eta;\bfx_k)$ and note that they can be represented\footnote{\label{footnote:chimod} 
In practice, since the projection functions often have small support (centered about atomic positions), it is possible to truncate the above summation to just a few terms. Under certain circumstances, a somewhat more computationally convenient form for $\hat{\chi}_{k,p}(\bfx;\eta;\bfx_k)$ may be obtained by making use of the specific form of the projection functions ${\chi}_{k;p}$. The functions ${\chi}_{k;p}$ are related to atomic orbitals and are therefore expressible as the product of a spherical harmonic with a compactly supported radially symmetric function. In the particular case that the projection functions arise from s-orbitals, it in fact follows that ${\chi}_{k;p}(\bfx; \bfx_k) =  {\chi}_{k;p}(\norm{\bfx - \bfx_k}{\rz^3})$. Then, we have:
\begin{align}
\nonumber
\hat{\chi}_{k,p}(\bfx;\eta;\bfx_k) &= \sum_{m\in \gz} {\chi}_{k;p}(\Upsilon^m_{\mathsf{h}} \circ \bfx; \bfx_k)\,e^{i 2 \pi m\eta} = \sum_{m\in \gz} {\chi}_{k;p}(\norm{ \Upsilon^m_{\mathsf{h}} \circ \bfx - \bfx_k}{\rz^3})\,e^{i 2 \pi m\eta}\\\nonumber
&= \sum_{m\in \gz} {\chi}_{k;p}\big(\norm{ \Upsilon^m_{\mathsf{h}} \circ \big(\bfx - \Upsilon^{-m}_{\mathsf{h}} \circ \bfx_k \big)}{\rz^3}\big)\,e^{i 2 \pi m\eta} \\
&= \sum_{m\in \gz} {\chi}_{k;p}(\norm{\bfx - \Upsilon^{-m}_{\mathsf{h}} \circ \bfx_k }{\rz^3})\,e^{i 2 \pi m\eta} = \sum_{m\in \gz} {\chi}_{k;p}(\norm{\bfx - \Upsilon^{m}_{\mathsf{h}} \circ \bfx_k }{\rz^3})\,e^{-i 2 \pi m\eta}\,.
\end{align}
Thus, under these circumstances, the group action in the formula for  $\hat{\chi}_{k,p}(\bfx;\eta;\bfx_k) $ has been shifted from $\bfx$, to $\bfx_k$, which is easier to deal with computationally.} via eq.~\ref{eq:operatorU} (for $\bfx \in \calD$ and $\eta \in \mathfrak{I}$ as):
\begin{align}
\label{eq:chi_chi_hat}
\hat{\chi}_{k,p}(\bfx;\eta;\bfx_k) = \calU\chi_{k,p}(\bfx;\eta;\bfx_k) = \sum_{m\in \gz} {\chi}_{k;p}(\Upsilon^m_{\mathsf{h}} \circ \bfx; \bfx_k)\,e^{i 2 \pi m\eta}\,.
\end{align}
With this notation, using properties tensor products, as well as eqs.~\ref{eq:chi_chi_hat} and \ref{eq:spectral_DM_calD}, we may rewrite eq.~\ref{eq:K_helical_expression_2} as:
\begin{align}
\nonumber
K(\Lambda, \Psi, & \calP_{\calG_1}, \calD, \calG_1) \\\nonumber 
&=  2\,\text{Tr.}[\,\calU\,\calV^{\text{nl}}_{\calD}\,\calU^{-1}\,\tilde{\Gamma}\,] = 2\,\text{Tr.}[\,\widehat{\calV}^{\text{nl}}_{{\calD}}\,\tilde{\Gamma}\,] \\\nonumber &= 2\,\text{Tr.}\Bigg[\,\bigg(\int_{\mathfrak{I}}^{\oplus}(\widehat{\calV}^{\text{nl}}_{{\calD}})_{\eta}\,d\eta\bigg)\,\bigg(  \int_{\mathfrak{I}}^{\oplus} \Gamma_{\eta}\,d\eta \bigg)\,\Bigg]  \\\nonumber 
&= 2\,\text{Tr.}\Bigg[ \bigg(\int_{\mathfrak{I}}^{\oplus}\sum_{k=1}^{M_{\calG_1}}\sum_{p \in \calN_k}\gamma_{k,p}\,\hat{\chi}_{k,p}(\cdot;\eta;\bfx_k)\,{\otimes}\,\overline{\hat{\chi}_{k,p}(\cdot;\eta;\bfx_k)}\,d\eta\bigg) \\\nonumber &\quad\quad\quad\quad\bigg(  \int_{\mathfrak{I}}^{\oplus}\!\sum_{j=1}^{\infty}\,f_{T_{\text{e}}}\big(\lambda_j(\eta)\big)\,\psi_j(\cdot; \eta)\,{\otimes}\,\overline{\psi_j(\cdot; \eta)}\,d\eta\,\bigg)\Bigg]\,.\\\nonumber
&= 2\,\text{Tr.}\Bigg[ \int_{\mathfrak{I}}^{\oplus} \bigg(\sum_{k=1}^{M_{\calG_1}}\sum_{p \in \calN_k}\gamma_{k,p}\sum_{j=1}^{\infty} f_{T_{\text{e}}}\big(\lambda_j(\eta)\big)  \big(\hat{\chi}_{k,p}(\cdot;\eta;\bfx_k) \psi_j(\cdot; \eta)\big) \\
&\quad\quad\quad\quad\quad\quad\quad\quad\quad\quad\quad\quad\quad\quad\quad\quad\quad\;\otimes \overline{\big(\hat{\chi}_{k,p}(\cdot;\eta;\bfx_k)\,\psi_j(\cdot; \eta)\big)}\bigg)\,d\eta\,\Bigg]
\label{eq:K_helical_expression_3}
\end{align}}
Now using the definition of the trace and that the {helical Bloch states are a basis\footnote{{Note that for any fixed $\eta \in \mathfrak{I}$, $\{\psi_j(\cdot, \eta)\}_{j=1}^{\infty}$ are a basis of $\Lpspc{2}{}{\calD}$. Additionally, as the proof of Theorem \ref{Thm:Completeness_Helical_Bloch} shows,  the entire set of helical Bloch states, i.e., $\Psi = \{\psi_j(\cdot;\eta): \eta \in \mathfrak{I}, j \in \nz\}$ forms a basis of $\Lpspc{2}{}{\calD \times \mathfrak{I}}$. Thus, the trace of an operator $\mathfrak{O}$ on $\Lpspc{2}{}{\calD \times \mathfrak{I}}$ may be computed as:
\begin{align}
\text{Tr.}[\mathfrak{O}]  = \int_{\mathfrak{I}} \sum_{j=1}^{\infty} \innprod{\mathfrak{O}\,\psi_j(\cdot;\eta)}{\psi_j(\cdot;\eta)}{\Lpspc{2}{}{\calD}}\,d\eta\,.
\end{align}}
{This is essentially the calculation described in eqs.~\ref{eq:K_helical_expression_3}, \ref{eq:K_helical_expression_4} above, with the operator $\mathfrak{O} = \mathcal{U} \mathcal{V}_{\mathcal{D}}^{n l} \mathcal{U}^{-1} \tilde{\Gamma}$ on $\Lpspc{2}{}{\calD \times \mathfrak{I}}$. Note that the above results also directly follow from the properties of direct integral decomposition discussed in \ref{appendix:direct_integrals}.}}}, this reduces to:
\begin{align}
\nonumber
&K(\Lambda, \Psi, \calP_{\calG_1}, \calD, \calG_1)\\\nonumber
&= 2\, \sum_{k=1}^{M_{\calG_1}}\sum_{p \in \calN_k}\gamma_{k,p}\int_{\mathfrak{I}}\bigg(\sum_{j=1}^{\infty}\,f_{T_{\text{e}}}\big(\lambda_j(\eta)\big)\big\langle{\hat{\chi}_{k,p}(\cdot;\eta;\bfx_k)},{\psi_j(\cdot; \eta)}\big\rangle_{\Lpspc{2}{}{\calD}}\\\nonumber
&\quad\quad\quad\quad\quad\quad\quad\quad\quad\quad\quad\quad\times\overline{\big\langle{\hat{\chi}_{k,p}(\cdot;\eta;\bfx_k)},{\psi_j(\cdot; \eta)}\big\rangle}_{\Lpspc{2}{}{\calD}}\bigg)\,d\eta\\\nonumber
&= 2\,\sum_{k=1}^{M_{\calG_1}}\sum_{p \in \calN_k}\sum_{j=1}^{\infty}\gamma_{k,p}\int_{\mathfrak{I}}\bigg( f_{T_{\text{e}}}\big(\lambda_j(\eta)\big)\bigg\lvert{\big\langle{\hat{\chi}_{k,p}(\cdot;\eta;\bfx_k)},{\psi_j(\cdot; \eta)}\big\rangle}_{\Lpspc{2}{}{\calD}}\bigg\rvert^2\bigg)\,d\eta\\
&= 2\,\sum_{k=1}^{M_{\calG_1}}\sum_{p \in \calN_k}\sum_{j=1}^{\infty}\gamma_{k,p}\int_{\mathfrak{I}}\bigg(g_j(\eta)\bigg\lvert{\big\langle{\hat{\chi}_{k,p}(\cdot;\eta;\bfx_k)},{\psi_j(\cdot; \eta)}\big\rangle}_{\Lpspc{2}{}{\calD}}\bigg\rvert^2\bigg)\,d\eta
\label{eq:K_helical_expression_4}
\end{align}

\underline{Electrostatic Energy Term:} We now discuss the contribution of the electrostatic interaction energy to the free energy per unit fundamental domain. This is the fourth term on the right-hand side of eq.~\ref{eq:electronic_free_energy}. Often, it is computationally advantageous to express this term using a so-called \textit{local formulation} \citep{Pask2005, suryanarayana2014augmented, motamarri2012higher, ghosh2019symmetry}, as we now do\footnote{The term \textit{local formulation} is associated with the fact that the electrostatic potential $\Phi$ can be solved through a Poisson equation, which avoids evaluation of the non-local integrals in eq.~\ref{eq:ES_phi_formula_finite}.}. For a finite system with atomic nuclei located at the points $\{\bfp_k \in \rz^3\}_{k=1}^{N_{\text{at}}}$ and electron density $\rho$, this term takes the form of the following optimization problem in the total electrostatic potential:
\begin{align} 
\label{eq:ES_finite} 
{E}_{\text{el}}^{\text{finite}} = \max_{\Phi}\bigg\{-\frac{1}{8\pi}\int_{\rz^3}\abs{\nabla \Phi}^2\,d\bfx + \int_{\rz^3}(\rho + b^{\text{finite}})\Phi\,d\bfx\bigg\} + E_{\text{sc}}^{\text{finite}}(\bfp_1,\bfp_2,\ldots,\bfp_k)\,.
\end{align}
Here $b^{\text{finite}}$ represents the total nuclear pseudocharge for the finite set of nuclei, and can be expressed in terms of the individual nuclear pseudocharges $\big\{b_k(\bfx;\bfp_k)\big\}_{k=1}^{{N_{\text{at}}}}$ as: 
\begin{align}
b^{\text{finite}}(\bfx) = \sum_{k=1}^{N_{\text{at}}}b_k(\bfx;\bfp_k)\,, 
\end{align} 
and the term $E_{\text{sc}}^{\text{finite}}(\bfp_1,\bfp_2,\ldots,\bfp_k)$ corrects for self-interactions and overlaps of the nuclear pseudocharges \citep{suryanarayana2014augmented}. Note that by design, the individual nuclear pseudocharges are usually smooth, radially symmetric functions centered at the nuclear positions, they have compact support, and they integrate to the (valence) nuclear charge of the nucleus in question. The electrostatic potential $\Phi$ that solves the maximization problem in eq.~\ref{eq:ES_finite} is the Newtonian potential associated with the net charge in the system:\footnote{Note that we have made a minor abuse of notation and used $\Phi$ to denote the ``trial''  electrostatic potentials involved in the maximization problems in eqs.~\ref{eq:ES_finite} and \ref{eq:ES_helical}, as well as the actual potentials that achieve these maxima (i.e., the \textit{arg max} of the functionals listed in eqs.~\ref{eq:ES_finite} and \ref{eq:ES_helical}.). The latter are expressible succinctly as the corresponding Newtonian potentials in eqs.~\ref{eq:ES_phi_formula_finite} and \ref{eq:ES_phi_formula_helical}.}
\begin{align}
\label{eq:ES_phi_formula_finite} 
\Phi(\bfx) = \int_{\rz^3} \frac{\rho(\bfy) + b^{\text{finite}}(\bfy)}{\norm{\bfx - \bfy}{\rz^3}}\,d\bfy
\end{align}

To extend the above formulation to a helical structure, we first write the total nuclear  pseudocharge at any point $\bfx \in \calC$ in terms of the pseudocharges of the atoms in the fundamental domain as:
\begin{align}
b(\bfx,\calP_{\calG_1},\calG_1) = \sum_{m \in \gz}\sum_{k = 1}^{M_{\calG_1}}b_{k}(\bfx; \Upsilon^m_{\mathsf{h}}\circ \bfx_k)\,,
\label{eq:Net_pseudocharge}
\end{align}
and observe that this quantity is group invariant owing to the aforementioned properties of the individual nuclear pseudocharges \citep{banerjee2016cyclic}. Since the electron density is group invariant as well, it follows that the net electrostatic potential $\Phi$ expressed as\footnote{Using a Fourier series expansion, it can be shown (see e.g. \citep{Defranceschi_LeBris} for similar arguments) that eq.~\ref{eq:ES_phi_formula_helical} is well defined whenever the system is charge neutral i.e., when $\displaystyle \int_{\calD} \rho(\bfx) +b(\bfx,\calP_{\calG_1},\calG_1)\,d\bfx = 0$.}:
\begin{align}
\Phi(\bfx) = \int_{\calC} \frac{\rho(\bfy) +b(\bfy,\calP_{\calG_1},\calG_1)}{\norm{\bfx - \bfy}{\rz^3}}\,d\bfy\,,
\label{eq:ES_phi_formula_helical} 
\end{align}
is also group invariant \citep{My_PhD_Thesis}. Therefore, we may use $\Phi, b$ and $\rho$ as defined over the fundamental domain, to define the analog of eq.~\ref{eq:ES_finite} for the helical structure as:
\begin{align}
\nonumber
E_{\text{el}}(\rho, \calP_{\calG_1}, \calD, \calG_1) =  \max_{\Phi}\bigg\{-&\frac{1}{8\pi}\int_{\calD}\abs{\nabla \Phi}^2\,d\bfx + \int_{\calD}\bigg(\rho(\bfx) + b(\bfx, ,\calP_{\calG_1},\calG_1)\bigg)\Phi(\bfx)\,d\bfx\bigg\}\\ 
&+ E_{\text{sc}}(\calP_{\calG_1},\calG_1, \calD)\,.
\label{eq:ES_helical} 
\end{align}
For the sake of brevity, we omit the details of the form of the corrections due to self interactions and overlaps of the nuclear pseudocharges, as reduced to the fundamental domain (i.e., the term $E_{\text{sc}}(\calP_{\calG_1},\calG_1, \calD)$ above) and instead point to \citep{banerjee2016cyclic, suryanarayana2014augmented} for  relevant information.

\underline{Electronic Entropy Term:} Finally, the last term on the right-hand side of \ref{eq:electronic_free_energy} represents the electronic entropy contribution to the free energy. Following \citep{motamarri2018configurational}, this term can be expressed for a finite system with density matrix $\mathfrak{D}$ as:
\begin{align}
S^{\text{finite}} = -2\,k_{\text{B}}\text{Tr.}[\,\mathfrak{D}\,\log(\mathfrak{D}) + (\calI - \mathfrak{D})\,\log(\calI - \mathfrak{D})]\,,
\end{align}
with $\calI$ denoting the identity operator. To obtain the analogous expression for the case of the helical structure, we work with the trace per unit fundamental domain instead. This gives us:
\begin{align}
\nonumber
S &= -2\,k_{\text{B}}\,\underline{\text{Tr.}}\bigg[\,{\Gamma}\,\log({\Gamma}) + (\calI - {\Gamma})\,\log(\calI - {\Gamma})\,\bigg]\\
&= -2\,k_{\text{B}}\int_{\mathfrak{I}}\text{Tr.}[{\Gamma}_{\eta}\,\log({\Gamma}_{\eta}) + (\calI - {\Gamma}_{\eta})\,\log(\calI - {\Gamma}_{\eta})]\,d\eta\,.
\label{eq:S_per_unit_cell}
\end{align}
By means of spectral mapping \citep{Kato}, the use of eq.~\ref{eq:fibers_of_DM}, and by noting again that the helical Bloch states $\psi_j(\bfx; \eta)$ are a basis of $\Lpspc{2}{}{\calD}$, we readily obtain:
\begin{align}\nonumber
&S(\Lambda) =\\\nonumber &-2\,k_{\text{B}}\int_{\mathfrak{I}}\bigg[\sum_{j=1}^{\infty}f_{T_{\text{e}}}\big(\lambda_j(\eta)\big)\,\log\big(f_{T_{\text{e}}}\big(\lambda_j(\eta)\big)\big) + \big(1 - f_{T_{\text{e}}}\big(\lambda_j(\eta)\big)\big)\,\log\big(1 - f_{T_{\text{e}}}\big(\lambda_j(\eta)\big)\big)\bigg]\,d\eta\\
&= -2\,k_{\text{B}}\int_{\mathfrak{I}}\bigg[\sum_{j=1}^{\infty}g_j(\eta)\,\log\big(g_j(\eta)\big) + \big(1 - g_j(\eta)\big)\,\log\big(1 - g_j(\eta)\big)\bigg]\,d\eta
\label{eq:S_expression}
\end{align}
With the above terms explicitly defined, we now turn to the variational problem for deducing the governing equations.

\underline{Variational Problem and Kohn-Sham Equations:} The variational problem for Kohn-Sham ground state of a given helical structure (i.e., the atomic coordinates in the fundamental domain, the fundamental domain  and the helical group are held fixed) consists of minimizing the electronic free energy $\calF(\Lambda, \Psi, \calP_{\calG_1}, \calD, \calG_1)$ with respect to the helical Bloch states and the helical bands, subject to the constraint in eq.~\ref{eq:rho_constraint}. In the literature, this minimization is often stated in terms of the helical Bloch states and the electronic occupation numbers instead \citep{ghosh2019symmetry, ghosh2017sparc_2}. Along those lines, we may define $\widetilde{\calF}(\mathfrak{G}, \Psi, \calP_{\calG_1}, \calD, \calG_1) = \calF(\Lambda, \Psi, \calP_{\calG_1}, \calD, \calG_1)$ to write the variational problem as:
\begin{align}
\widetilde{\calF}_{0}(\calP_{\calG_1}, \calD, \calG_1) = \text{inf.}_{{\Psi,\mathfrak{G}}}\,\widetilde{\calF}(\mathfrak{G}, \Psi, \calP_{\calG_1}, \calD, \calG_1) 
\end{align}
subject to: 
\begin{align}
2\int_{\mathfrak{I}}\sum_{j=1}^{\infty} g_j(\eta) = N_{\text{e}}\,,
\end{align}
and the requirement that the states in $\Psi$ be helical Bloch states. This requires that for any $\psi_j(\cdot;\eta) \in \Psi$ and $m\in \gz$, we have $\psi_j(\Upsilon^m_{\mathsf{h}} \circ \bfx; \eta ) =  e^{-i2\pi m\eta}\psi_j(\bfx; \eta)\,$, as well as the orthonormality condition between two helical Bloch states $\psi_i(\cdot;\eta), \psi_j(\cdot;\eta) \in \Psi$:
\begin{align}
\innprod{\psi_i(\cdot;\eta)}{\psi_j(\cdot;\eta)}{\Lpspc{2}{}{\calD}} = \delta_{i,j}\,.	
\end{align}
We take variations of the above constrained minimization problem, and obtain the Euler-Lagrange equations as the following helical symmetry adapted Kohn-Sham equations over the fundamental domain:
\begin{align}
\hamil^{\text{KS}}_{[\mathfrak{G}, \Psi, \calP_{\calG_1}, \calD, \calG_1]}\,\psi_j(\cdot;\eta) = \lambda_j(\eta)\,\psi_j(\cdot;\eta)\,\,\text{for}\, j\in\nz,\eta \in\mathfrak{I}\,.
\label{eq:Euler_Lagrange}
\end{align}
Here, the helical symmetry adapted Kohn-Sham Hamiltonian operator (with its dependence on the helical Bloch sates, the occupation numbers, etc., made explicit)\footnote{Up to a notational change, this operator is essentially the same as $\hamil^{\text{KS}}_{[\Lambda, \Psi, \calP_{\calG_1}, \calD, \calG_1]}$, when the dependence on helical bands (instead of the occupation numbers) is highlighted.} is:
\begin{align}
\label{eq:KS_Operator}
\hamil^{\text{KS}}_{[\mathfrak{G}, \Psi, \calP_{\calG_1}, \calD, \calG_1]} \equiv -\half \Delta + V_{\text{xc}} + \Phi + \widetilde{\calV}^{\text{nl}}_{\calD}\,,
\end{align}
in which $\displaystyle V_{\text{xc}} = \frac{\delta E_{\text{xc}}(\rho, \calD)}{\delta \rho}$ is the exchange correlation potential, $\Phi$ is the net electrostatic potential and satisfies the following symmetry adapted Poisson problem over the fundamental domain:
\begin{align}
\nonumber
-\Delta \Phi &= 4\pi\,\big(\rho + b(\cdot,\calP_{\calG_1},\calG_1)\big)\,,\\
\Phi(\Upsilon_\mathsf{h} \circ\bfx) &= \Phi(\bfx)\,,
\label{eq:Poisson_Equation}
\end{align}
and the operator $\widetilde{\calV}^{\text{nl}}_{\calD} = \calU\,{\calV}^{\text{nl}}_{\calD}\,\calU^{-1}$ is as defined\footnote{Due to the property $\hat{\chi}_{k,p}(\Upsilon^{n}_{\mathsf{h}}\circ\bfx;\eta;\bfx_k) =  e^{-i2\pi n\eta} \hat{\chi}_{k,p}(\bfx;\eta;\bfx_k)$ as observed in footnote \ref{footnote:U_translation}, it follows that $\widetilde{\calV}^{\text{nl}}_{\calD}$ commutes with the group action in an appropriate sense (when viewed as an operator on $\Lpspc{2}{}{{\calD} \times \mathfrak{I}}$). This implies that the Kohn-Sham operator $\hamil^{\text{KS}}$ commutes with the group action as well.} in eqs.~\ref{eq:U_Vnl_Uinv} and \ref{eq:chi_chi_hat}.

\underline{Harris-Foulkes Functional:} The above set of expressions represent a set of coupled nonlinear partial differential equations in the fields $\psi_j(\cdot;\eta)$ and the scalars $g_j(\eta)$. Once they have been solved self-consistently, the ground state electronic free energy $\widetilde{\calF}_{0}(\calP_{\calG_1}, \calD, \calG_1)$ per unit fundamental domain can be computed through eq.~\ref{eq:electronic_free_energy}. {In practical calculations, since self-consistency is never achieved perfectly, a better estimate of the ground state electronic free energy may be found using the so-called Harris-Foulkes functional \citep{harris1985simplified, foulkes1989tight}}. This can be written in helical symmetry-adapted form, using quantities expressed over the fundamental domain as:
\begin{align}
\nonumber
{\calF}^{\text{HF}}(\Lambda, \Psi,& \calP_{\calG_1}, \calD, \calG_1) = E_{\text{band}}(\Lambda) + E_{\text{xc}}(\rho, \calD) - \int_{\calD}V_{\text{xc}}(\rho(\bfx))\rho(\bfx)\,d\bfx\\
&+\half\int_{\calD}\bigg(b(\bfx, ,\calP_{\calG_1},\calG_1) - \rho(\bfx) \bigg)\Phi(\bfx)\,d\bfx + E_{\text{sc}}(\calP_{\calG_1},\calG_1,\calD) - T_{\text{e}}\,S(\Lambda)\,.
\label{eq:Harris_Foulkes}
\end{align}
All the quantities on the right-hand side of the above equation are easily interpreted based on earlier discussion, except the first one, i.e., $E_{\text{band}}(\Lambda)$, which represents the electronic band energy per unit fundamental domain. For a finite system with a single electron Hamiltonian $H$ and single particle density matrix $\mathfrak{D}$, this quantity is expressed as \citep{wang2016variational}: 
\begin{align}
E_{\text{band}}^{\text{finite}} = 2\,\text{Tr.}[H\mathfrak{D}]\,.
\end{align}
Analogously, for the helical structure, we use the trace per unit fundamental domain to write:
\begin{align}
E_{\text{band}} =2\underline{\text{Tr.}}[\,\hamil{\Gamma}\,] =2 \int_{\mathfrak{I}}\text{Tr.}[\,\hamil_{\eta}{\Gamma}_{\eta}\,]\,d\eta\,.
\end{align}
Using eqs.~\ref{eq:spectral_hamil_eta} and \ref{eq:fibers_of_DM} and using the completeness of the helical Bloch waves, we see that this is be expressible as:
\begin{align}
E_{\text{band}}(\Lambda) = 2\int_{\mathfrak{I}}\sum_{j=1}^{\infty}\lambda_j(\eta) f_{T_{\text{e}}}\big(\lambda_j(\eta)\big)\,d\eta = 2\int_{\mathfrak{I}}\sum_{j=1}^{\infty}\lambda_j(\eta)\,g_j(\eta)\,d\eta
\end{align}

\underline{Atomic Forces:} The Hellmann-Feynman forces on the atoms in the fundamental domain are (in Cartesian coordinates):
\begin{align}
\mathbf{f}_k = - \pd{\widetilde{\calF}_{0}(\calP_{\calG_1}, \calD, \calG_1)}{\bfx_k}\,.
\label{eq:force_expression_1}
\end{align}
By directly differentiating the various terms involved (eqs.~\ref{eq:K_helical_expression_3}, \ref{eq:ES_helical}) we arrive at the following expression for  $\mathbf{f}_k$ using quantities specified over the fundamental domain\footnote{Motivated by \citep{kikuji2005first, ghosh2019symmetry} we may use integration by parts to modify the last term on the right-hand side of eq.~\ref{eq:force_expression_2}, so that the derivatives of the projectors with respect to atomic coordinates can be eliminated in favor of  Cartesian gradients of the wavefunctions instead. This tends to improve the accuracy of the computed forces in practical calculations --  the orbitals are more smoothly varying than the projectors and therefore they tend to behave better upon taking derivatives. With this change as well as making use of the discussion in Footnote \ref{footnote:chimod}, it is possible to rewrite:
\begin{align}
\int_{\calD} \psi_j(\bfx;\eta)\,\overline{\pd{\hat{\chi}_{k,p}(\bfx;\eta;\bfx_k)}{\bfx_k}}\,d\bfx = \sum_{m\in \gz} (\bfR_{2\pi m \alpha})^{-1} \int_{\calD} \nabla \psi_{j}(\bfx;\eta)\,\overline{\hat{\chi}_{k,p}(\bfx;\eta;\Upsilon^{m}_{\mathsf{h}}\circ\bfx_k)}\,e^{i 2 \pi m\eta}\,d\bfx\,,
\end{align} 
under specific circumstances.}:
\begin{align}
\nonumber
\mathbf{f}_k &=\sum_{m \in \gz} (\bfR_{2\pi m \alpha})^{-1} \int_{\calD}\nabla b_{k}(\bfx; \Upsilon^m_{\mathsf{h}}\circ \bfx_k)\Phi(\bfx)\,d\bfx - \pd{E_{\text{sc}}(\calP_{\calG_1},\calG_1, \calD)}{\bfx_k}\\\nonumber
&-4\sum_{j=1}^{\infty}\Bigg(\int_{\mathfrak{I}}g_j(\eta)\sum_{p \in \mathcal{N}_k}\gamma_{k;p}\,\text{Re.}\Bigg\{\bigg[\int_{\calD} \hat{\chi}_{k,p}(\bfx;\eta;\bfx_k)\,\overline{\psi_j(\bfx;\eta)}\,d\bfx\bigg]\\
&\quad\quad\quad\quad\quad\quad\quad\quad\quad\quad\quad\quad\quad \times\bigg[\int_{\calD} \psi_j(\bfx;\eta)\,\overline{\pd{\hat{\chi}_{k,p}(\bfx;\eta;\bfx_k)}{\bfx_k}}\,d\bfx\bigg]
\Bigg\}\Bigg)\,d\eta
\label{eq:force_expression_2}
\end{align}
Here, $\text{Re.}$ denotes the real part of the quantity in braces.

This completes a discussion of the derivation of the various physically relevant terms, as well as the form of the equations of Kohn-Sham theory, as applied to a helical structure associated with a helical group generated by a single element. Comments on modifications of the above equations while dealing with a structure associated with a helical group generated two elements, and a presentation of the final expressions/equations for that case in appear in \ref{appendix:two_element_group_expressions}.
\section{Numerical Implementation}
\label{sec:Numerical_Implementation}
The Kohn-Sham equations for a helical structure (i.e., eq.~\ref{eq:Euler_Lagrange} or eq.~\ref{eq:Euler_Lagrange_G2}) are a set of non-linear eigenvalue problems indexed by $\eta$ (as well as $\nu$ in case of the group $\calG_2$) that are coupled to each other through the electron density $\rho$. The standard procedure for solving the equations of Kohn-Sham theory is through self-consistent field (SCF) iterations \citep{KohnSham_DFT}. This amounts to starting from a reasonable guess of the electron density in the fundamental domain (e.g. superpositions of individual atomic densities, as is used in our simulations) and an appropriate set of trial orthonormal wavefunctions (randomly chosen in our simulations), and then evaluating the eigenstates of the Kohn-Sham operator with these guesses. Thus, a set of linear eigenvalue problems (i.e., those associated with the linearized Kohn-Sham operator evaluated at the given electron density) indexed by $\eta$ (as well as $\nu$ in case of the group $\calG_2$) have to be solved. From this, the (trial) Fermi-level of the system and the (trial) occupation numbers maybe computed. The eigenfunctions and the occupation numbers may be then combined (in accordance with eq.~\ref{eq:electron_density} or eq.~\ref{eq:electron_density_G2}) to yield the trial electron density for the next step of the iterations. The above procedure has to be repeated till the difference in the electron density (or the effective potential) between successive iterations reaches below the desired convergence threshold. We will now discuss several features of this self-consistent solution process as implemented in the Helical DFT code.
\subsection{Discretization of reciprocal space}
\label{subsec:reciprocal_space_discretization}
Many quantities described in Section \ref{subsubsec:Governing_Equations} and \ref{appendix:two_element_group_expressions} involve integrals over $\eta \in \mathfrak{I} = [-\half,\half)$ (as well as normalized summations over $\nu \in \{0, {1},2, \ldots,{\mathfrak{N}-1}\}$ for the group $\calG_2$). To evaluate such integrals numerically, we employ quadrature based on the Monkhorst-Pack scheme \citep{monkhorst1976special}. Specifically, we sample the interval $\mathfrak{I}$ using a grid of $N_{\eta}$ points, and write:
\begin{align}
\int_{\mathfrak{I}} f(\eta)\,d\eta \approx \sum_{{b=1}}^{N_{\eta}}\,w_b\,f(\eta_b)\,.
\end{align}
Here, $w_b$ and $\eta_b$ denote the integration weights and integration nodes respectively. Summations over $\nu$ are left unchanged. The total number of points used for discretizing the reciprocal space (i.e., set $\mathfrak{B} = \mathfrak{I} \times \{0,1,2,\ldots,\mathfrak{N}-1\}$), therefore, is $N_{\calK} = N_{\eta} \times \mathfrak{N}$ (with $\mathfrak{N} = 1$ for the group $\calG_1$). Based on considerations of time-reversal symmetry (which apply as long as e.g.~magnetic fields are absent) \citep{geru2018time, ghosh2019symmetry}, it follows that for $\eta \in \mathfrak{I}$ and $\nu = 1,2,\mathfrak{N}-1$:
\begin{align}
\label{eq:time_rev_1}
\lambda_j(\eta,\nu) = \lambda_j(-\eta,\mathfrak{N}-\nu)\,,\,\psi_j(\bfx; \eta,\nu) = \overline{\psi_j(\bfx;-\eta,\mathfrak{N}-\nu)}\,,
\end{align}
while for $\nu = 0$:
\begin{align}
\label{eq:time_rev_2}
\lambda_j(\eta,0) = \lambda_j(-\eta,0)\,,\,\psi_j(\bfx; \eta,0) = \overline{\psi_j(\bfx;-\eta,0)}\,.
\end{align}
Effectively, the above considerations reduce the number of quadrature points over reciprocal space by half (i.e., $\displaystyle N_{\calK} \approx \frac{N_{\eta} \times \mathfrak{N}}{2}$).

With the above discretization choices, the self-consistent field iterations for the Kohn-Sham problem amount to solving a series of $N_{\calK}$ linear eigenvalue problems on every iteration step. Based on the mathematical treatment presented earlier as well as in \citep{banerjee2016cyclic}, it follows that  eigenvalue problems associated with distinct values of $\eta$ (i.e. $\eta_b$ in discretized form) and/or $\nu$ are disjoint from each other.  This implies that these (linear) eigenvalue problems can be solved independently of each other, in an embarrassingly parallel manner, regardless of how the discretization in real space is carried out. We make use of this feature of the equations to assign these distinct eigenvalue problems to different computational cores. This serves as a natural parallelization scheme and helps in drastically reducing the wall time associated with the most computationally intensive part of the SCF iterations.
\subsection{Truncation of infinite sums}
\label{subsec:Truncation}
Primarily, there are two distinct sources of infinite sums in the equations presented in Section \ref{subsubsec:Governing_Equations} and \ref{appendix:two_element_group_expressions}. The first arises due to summing over an infinite number of helical bands (e.g., eqs.~\ref{eq:electron_density} and \ref{eq:electron_density_G2}). To truncate such sums we assume that the electronic occupation numbers reduce to zero beyond the lowest $N_{\text{s}}$ bands and therefore, only $N_{\text{s}}$ eigenstates for each value of $\eta_b$ (and also each $\nu$ for $\calG_2$) need to be computed during the self-consistent field iterations. In effect, this is also an enforcement of the Aufbau principle for the system \citep{LeBris_ReviewBook, Defranceschi_LeBris}. Depending on the size of the discretized reciprocal space (i.e., the value of  the number $N_{\calK}$) we have found that including just a few extra bands beyond the minimum number required for holding the $N_{\text{e}}$ electrons per unit fundamental domain, suffices.\footnote{This is a well used approximation strategy in the literature (see e.g. \citep{ghosh2017sparc_1, ghosh2017sparc_2, Gavini_higher_order}). For finite systems at electronic temperatures that are less than a few thousand Kelvin, it suffices to choose the number of states to be equal to a multiple of half the number of electrons, with the multiplication factor being between $1.05$ and $1.10$ \citep{banerjee2018two}. For an extended system like a helical structure, often a just a few extra bands beyond half the number of electrons is sufficient since this actually amounts to these few extra states being available for every value of $\eta_b$ or $\nu$. As a result of this, tens or even hundreds of extra states (with occupation numbers approaching zero) get effectively included in the calculations.}

The second source of infinite sums arises from considering terms associated with group orbits (e.g., eqs.~\ref{eq:Net_pseudocharge} and \ref{eq:Net_pseudocharge_G2}), since helical groups by definition are infinite. However, these sums are also always associated with functions that are supported in a small ball around an atom of the structure (e.g. the nuclear pseudocharge in eq.~\ref{eq:Net_pseudocharge} and the nonlocal pseudopotential projection function in eq.~\ref{eq:chi_chi_hat}). Therefore, the influence of such sums on points in the fundamental domain is only dependent on terms of the summation that result in a nonzero overlap between the function support and the fundamental domain. This allows such infinite sums to be truncated as well.
\subsection{Helical coordinate system}
\label{subsec:helical_coordinates}
To carry out a discretization of the governing equations in a manner that is naturally adapted to the underlying helical symmetries of the structures being studied, it is useful to employ helical coordinates, as introduced in \citep{My_PhD_Thesis}. In order to have this coordinate system be commensurate with the helical groups $\calG_1$ or $\calG_2$, the coordinate transformation formulae are as follows. For $\bfx \in \calC$, if the Cartesian coordinates are $(x_1,x_2,x_3)$, then the corresponding helical coordinates $(r,\theta_1,\theta_2)$ are\footnote{For computational purposes, it is more apt to use $\arctantwo({x_2},{x_1})$, along with a suitable modification for the negative Y quadrants, instead of $\arctan{(\frac{x_2}{x_1})}$ in eq.~\ref{eq:helical_coordinates}, such that a value between $0$ and $2\pi$ is obtained.}:
\begin{align}
r=\sqrt{x_1^2+x_2^2}\,,\,\theta_1=\frac{x_3}{\tau}\,,\,
\theta_2=\frac{1}{2\pi}\arctan{\bigg(\frac{x_2}{x_1}\bigg)}-\alpha \frac{x_3}{\tau}\,.
\label{eq:helical_coordinates}
\end{align}
The coordinates $(r,\theta_1,\theta_2)$ are a natural generalization of cylindrical coordinates in the sense that they effectively reduce to cylindrical coordinates when the twist angle parameter $\alpha$ of the  system is set to zero. We may verify that these relations are onto and globally invertible on $\calC\backslash\{t\,\bfe_3:t \in \rz\}$. Furthermore, the inverse relations:
\begin{align}
(r,\theta_1,\theta_2) \mapsto (x_1,x_2,x_3)=\big(r\cos(2\pi(\alpha\theta_1+\theta_2)),
r\sin(2\pi(\alpha\theta_1+\theta_2)),\theta_1\tau\big)
\label{eq:inverse_helical_coordinates}
\end{align} 
map the open cuboid $(0,R)\times(0,1)\times(0,1)$ to the interior of the fundamental domain of $\calG_1$, (i.e., to the set $\calD$) and the open cuboid $(0,R)\times(0,1)\times(0,1/\mathfrak{N})$ to the interior of the fundamental domain of $\calG_2$ (i.e., to the set $\widetilde{\calD}$).
 
The action of a helical group can be easily computed in these coordinates as follows. Let $\bfx =(x_1,x_2,x_3) \in \calC$ have helical coordinates $(r,\theta_1,\theta_2)$. The action of the isometry $\Upsilon_\mathsf{h}$ that generates $\calG_1$ (and also $\calG_2$) is to map $\bfx$ to the point $\bfx'=\Upsilon_\mathsf{h} \circ \bfx = \bfR_{2\pi\alpha}\bfx+\tau\bfe_3$. Denoting the Cartesian and helical coordinates of this new point as $(x_1',x_2',x_3')$ and $(r',\theta_1',\theta_2')$, respectively, we see that $x_1' = x_1\cos(2\pi\alpha)-x_2\sin(2\pi\alpha)$, $x_2' = x_2\cos(2\pi\alpha)+x_1\sin(2\pi\alpha)$ and $x_3' = x_3+\tau$. Now, using eqs.~\ref{eq:inverse_helical_coordinates} and \ref{eq:helical_coordinates}, we get $r'=r,\theta_1'=\theta_1+1,\theta_2'=\theta_2$. By a similar calculation, we see that the action of the second generator  $\Upsilon_\mathsf{c}$ of the group $\calG_2$ (i.e., the pure rotation by $\displaystyle\frac{2\pi}{\mathfrak{N}}$ about axis $\bfe_3$), is to map the point with helical coordinates $(r,\theta_1,\theta_2)$ to the point $(r,\theta_1,\theta_2 + \frac{1}{\mathfrak{N}})$. These calculations imply in particular that if a function is invariant under the group $\calG_1$, then it is periodic in $\theta_1$, with period $1$, when expressed in helical coordinates. Similarly, invariance of a function under the group $\calG_2$, implies periodicity in $\theta_1$ (with period $1$), as well as periodicity in $\theta_2$, (with period $1/\mathfrak{N}$) when expressed in helical coordinates. These observations make it easy to enforce helical Bloch boundary conditions on the wavefunctions, as well as the group invariance of the electrostatic potential, in simulations. Derivation of the Cartesian gradient operator, the Laplacian operator and the volume integral in helical coordinates appears in \ref{appendix:grad_laplace_integral_helical}.
\subsection{Real space discretization: Finite difference scheme}
\label{subsec:finite_differences}
We employ a finite difference strategy for discretizing the governing equations in real space. We choose an annular cylindrical region $\Omega$ with axis along $\bfe_3$ as the simulation domain. This allows us to avoid the singularity associated with the helical coordinate system along the axis $\bfe_3$ and does not present any issues as long as the atoms of the structure are located well within  the annular region \citep{banerjee2016cyclic, ghosh2019symmetry}. This latter condition is well satisfied by the nanotube structures simulated in this work. The set $\Omega$ can also serve adequately as a suitable fundamental domain for either group $\calG_1$ or $\calG_2$ in simulations (i.e., it can replace $\calD$ or $\widetilde{\calD}$ in the formulae presented in Section \ref{subsubsec:Governing_Equations} and \ref{appendix:two_element_group_expressions}).  In cylindrical coordinates $(r,\vartheta,z)$ we have:
\begin{align}
\Omega = \big\{(r,\vartheta,z) \in \rz^3:R_{\text{in}} \leq r\leq R_{\text{out}},  \frac{2\pi\alpha z}{\tau}  \leq \vartheta \leq  \frac{2\pi\alpha z}{\tau} + \Theta, 0 \leq z \leq \tau\big\}\,.
\label{eq:Omega_domain}
\end{align}
The boundary of $\Omega$ can be expressed as:
\begin{align}
\partial\Omega = \partial R_{\text{in}} \bigcup \partial R_{\text{out}} \bigcup \partial\vartheta_{0} \bigcup \partial\vartheta_{\Theta}\bigcup\partial\calZ_0\bigcup\partial\calZ_{\tau}\,,
\label{eq:Omega_boundary}
\end{align}
with $\partial R_{\text{in}}$ and  $\partial R_{\text{out}}$ denoting the surfaces $r = R_{\text{in}}$ and $r = R_{\text{out}}$ respectively, $\partial\vartheta_{0}$ and $\partial\vartheta_{\Theta}$ denoting the surfaces $\vartheta =  \frac{2\pi\alpha z}{\tau} $ and $\vartheta =  \frac{2\pi\alpha z}{\tau} + \Theta$ respectively, and finally, $\partial\calZ_0$ and $\partial\calZ_{\tau}$ denoting the surfaces $z = 0$ and $z = \tau$ respectively. Figure \ref{fig:FD_Domain} illustrates the simulation cell as well as the boundaries of this domain for an untwisted helical structure.
\begin{figure}[ht]
\centering
{\includegraphics[trim={2cm 1cm 1cm 0cm}, clip, width=0.8\textwidth]{./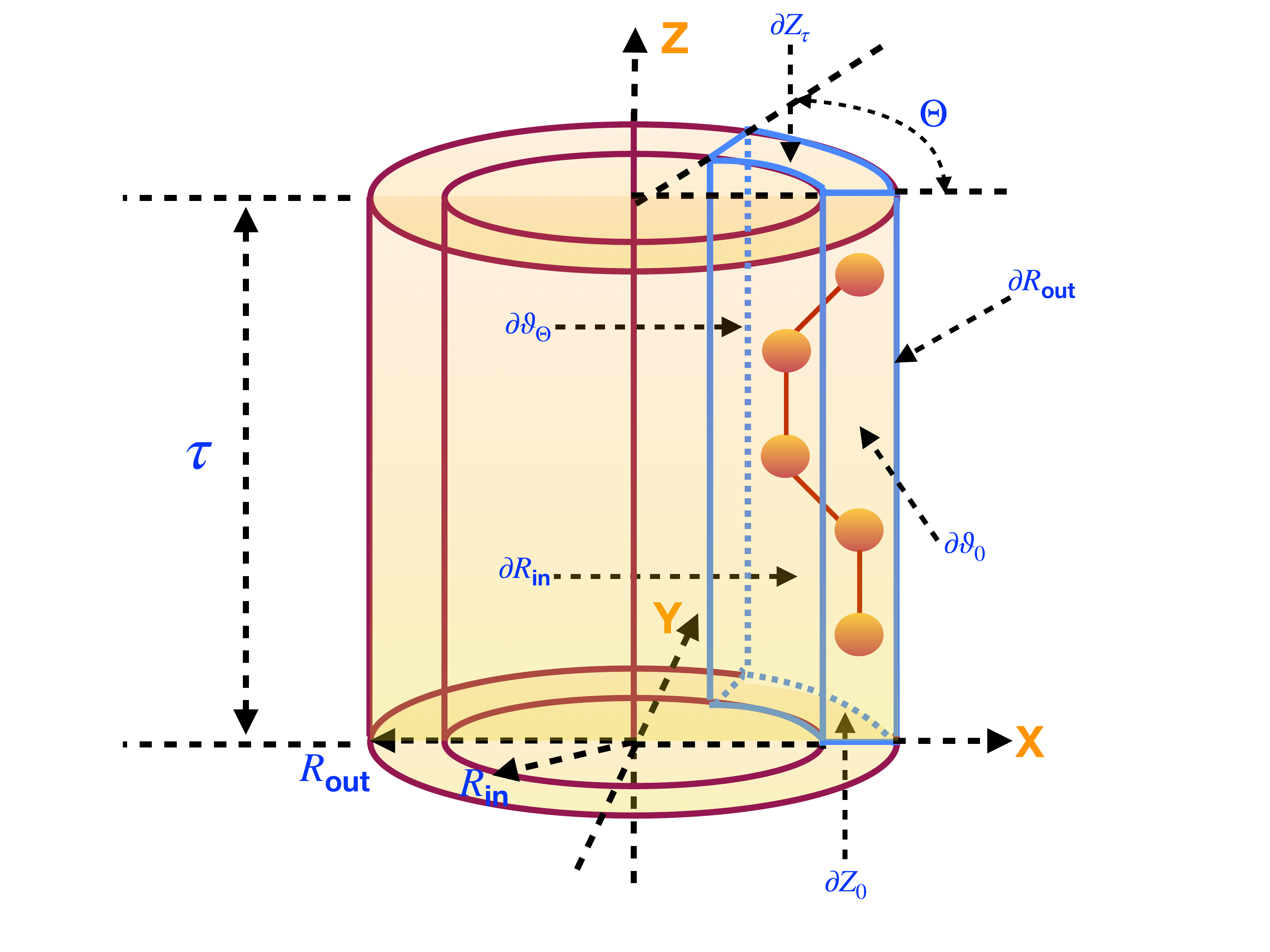}}
\caption{Illustration of the simulation domain $\Omega$ (domain boundary lines in blue) for an untwisted helical structure (i.e., $\alpha = 0$). A few atoms contained in the domain, as well as the various bounding surfaces of the domain are illustrated. For a structure associated with a two-generator helical group, the parameter $\Theta = 2\pi/\mathfrak{N}$ relates to the cyclic group order, while $\tau$ is related to the pitch of the screw transformation.}
\label{fig:FD_Domain}
\end{figure}

We set up a finite difference grid over $\Omega$ using helical coordinates, with spacing $h_{r}$, $h_{\theta_1}$ and $h_{\theta_2}$ along the $r$, $\theta_1$ and $\theta_2$ directions respectively. Since the simulation domain can be represented in helical coordinates with $r$ ranging from $R_{\text{in}}$ to $R_{\text{out}}$, $\theta_1$ ranging from $0$ to $1$, and $\theta_2$ ranging from $0$ to $\frac{1}{\mathfrak{N}}$, it follows that $R_{\text{out}} - R_{\text{in}} = \mathscr{N}_{r}h_{r}$, $1 =  \mathscr{N}_{\theta_1}h_{\theta_1}$, and $\frac{1}{\mathfrak{N}} =  \mathscr{N}_{\theta_2}h_{\theta_2}$, for natural numbers $\mathscr{N}_{r}$, $\mathscr{N}_{\theta_1}$ and $\mathscr{N}_{\theta_2}$. We index the finite difference nodes using a triplet of natural numbers $(\mathsf{i}, \mathsf{j}, \mathsf{k})$, for $\mathsf{i} = 1,2,\ldots,\mathscr{N}_{r}$,  $\mathsf{j} = 1,2,\ldots,\mathscr{N}_{\theta_1}$ and  $\mathsf{k} = 1,2,\ldots,\mathscr{N}_{\theta_2}$. We denote the value of a function $f$ at the grid point $(\mathsf{i}, \mathsf{j}, \mathsf{k})$ as $f^{(\mathsf{i}, \mathsf{j}, \mathsf{k})}$. We will denote $h = \text{Max.}\bigg(h_r, \tau h_{\theta_1}, \big(\frac{R_{\text{in}} + R_{\text{out}}}{2}\big)h_{\theta_2}\bigg)$ as the mesh spacing.

The formulae presented  in Section \ref{subsubsec:Governing_Equations} and \ref{appendix:two_element_group_expressions} require Cartesian gradients, the Laplacian operator and integrals over $\Omega$ to be evaluated using the finite difference scheme. The formulae for these quantities, as expressed in helical coordinates appears in \ref{appendix:grad_laplace_integral_helical}. With these in hand, we approximate the first order partial derivatives using central differences as:
\begin{align}
\frac{\partial f}{\partial r} \bigg|^{(\mathsf{i},\mathsf{j},\mathsf{k})} &\approx  \sum_{p=1}^{n_o} \bigg({w}_{p,r}^{\text{first}}  \big( f^{(\mathsf{i}+p,\mathsf{j},\mathsf{k})} - f^{(\mathsf{i}-p,\mathsf{j},\mathsf{k})}\big) \bigg) \,,  \nonumber \\
\frac{\partial f}{\partial \theta_1} \bigg|^{(\mathsf{i},\mathsf{j},\mathsf{k})} &\approx  \sum_{p=1}^{n_o} \bigg(  {w}_{p,\theta_1}^{\text{first}} \big( f^{(\mathsf{i},\mathsf{j}+p,\mathsf{k})} - f^{(\mathsf{i},\mathsf{j}-p,\mathsf{k})}\big) \bigg) \,, \nonumber \\
\frac{\partial f}{\partial \theta_2} \bigg|^{(\mathsf{i},\mathsf{j},\mathsf{k})} &\approx \sum_{p=1}^{n_o} \bigg({w}_{p,\theta_2}^{\text{first}} \big( f^{(\mathsf{i},\mathsf{j},\mathsf{k}+p)} - f^{(\mathsf{i},\mathsf{j},\mathsf{k}-p)}\big) \bigg) \,.
\label{eq:FD_gradient}
\end{align}
We approximate the pure (i.e., non-mixed) second order partial derivatives as:
\begin{align}
\frac{\partial^2 f}{\partial r^2} \bigg|^{(\mathsf{i},\mathsf{j},\mathsf{k})} &\approx  \sum_{p=0}^{n_o} \bigg({w}_{p,r}^{\text{second}}  \big( f^{(\mathsf{i}+p,\mathsf{j},\mathsf{k})} + f^{(\mathsf{i}-p,\mathsf{j},\mathsf{k})}\big) \bigg) \,,  \nonumber \\
\frac{\partial^2 f}{\partial \theta_1^2} \bigg|^{(\mathsf{i},\mathsf{j},\mathsf{k})} &\approx  \sum_{p=0}^{n_o} \bigg(  {w}_{p,\theta_1}^{\text{second}} \big( f^{(\mathsf{i},\mathsf{j}+p,\mathsf{k})} + f^{(\mathsf{i},\mathsf{j}-p,\mathsf{k})}\big) \bigg) \,, \nonumber \\
\frac{\partial^2 f}{\partial \theta_2^2} \bigg|^{(\mathsf{i},\mathsf{j},\mathsf{k})} &\approx \sum_{p=0}^{n_o} \bigg({w}_{p,\theta_2}^{\text{second}} \big( f^{(\mathsf{i},\mathsf{j},\mathsf{k}+p)} + f^{(\mathsf{i},\mathsf{j},\mathsf{k}-p)}\big) \bigg) \,,
\label{eq:FD_second_pure_derivatives}
\end{align}
while the mixed second order partial derivative $\displaystyle \frac{\partial^2 f}{\partial \theta_1\partial\theta_2}$ is obtained by applying the above first order derivative formula first in $\theta_1$ and then in $\theta_2$, i.e.,
\begin{align}
\nonumber
\frac{\partial^2 f}{\partial \theta_1\partial\theta_2} \bigg|^{(\mathsf{i},\mathsf{j},\mathsf{k})} &\approx \sum_{p=1}^{n_o} {w}_{p,\theta_2}^{\text{first}} \bigg[\bigg( \frac{\partial f}{\partial \theta_1}\bigg)^{(\mathsf{i},\mathsf{j},\mathsf{k}+p)} - \bigg(\frac{\partial f}{\partial \theta_1}\bigg)^{(\mathsf{i},\mathsf{j},\mathsf{k}-p)}\bigg]\,\\\nonumber
&\approx \sum_{p=1}^{n_o} {w}_{p,\theta_2}^{\text{first}} \bigg[ 
\bigg\{ \sum_{p'=1}^{n_o} {w}_{p',\theta_1}^{\text{first}} \big( f^{(\mathsf{i},\mathsf{j}+p',\mathsf{k}+p)} - f^{(\mathsf{i},\mathsf{j}-p',\mathsf{k}+p)}\big) \bigg\}\\
&\quad\quad\quad\quad\;\,- \bigg\{ \sum_{p'=1}^{n_o} {w}_{p',\theta_1}^{\text{first}} \big( f^{(\mathsf{i},\mathsf{j}+p',\mathsf{k}-p)} - f^{(\mathsf{i},\mathsf{j}-p',\mathsf{k}-p)}\big)\bigg\}\bigg]
\label{eq:FD_second_mixed_derivative}
\end{align}
In the above formulae, $n_o$ denotes half the finite difference order and is set to $6$ for all our simulations (i.e., $12^{\text{th}}$ order finite differences are used). This choice has also been employed elsewhere \citep{ghosh2017sparc_1, ghosh2017sparc_2, banerjee2016cyclic, ghosh2019symmetry} and is found to be adequate for attaining chemical accuracy. Letting $s$ denote $r$, $\theta_1$ or $\theta_2$, the weights that appear in the above formulae can be expressed as \citep{mazziotti1999spectral}:
\begin{align}
w_{0,s}^{\text{second}} & =  - \frac{1}{h_s^2} \sum_{q=1}^{n_o} \frac{1}{q^2} \,, \,\,
\nonumber \\\nonumber
w_{p,s}^{\text{second}}  & =  \frac{2 (-1)^{p+1}}{h_s^2\,p^2} \frac{(n_o!)^2}{(n_o-p)! (n_o+p)!} \,\,\text{for}\,\,p=1, 2, \ldots, n_o\,,\\
{w}_{p,s}^{\text{first}} & =  \frac{(-1)^{p+1}}{h_s\,p} \frac{(n_o!)^2}{(n_o-p)! (n_o+p)!}\,\,\text{for}\,\,p=1, 2, \ldots, n_o\,.
\label{eq:FD_weights}
\end{align}
We employ the following quadrature rule for approximating integrals over $\Omega$:
\begin{align}
\int_{\Omega}f(\bfx)\,d\bfx \approx  h_r h_{\theta_1} h_{\theta_2} \sum_{\mathsf{i}=1}^{\mathscr{N}_r} \sum_{\mathsf{j} =1}^{\mathscr{N}_{\theta_1}} \sum_{\mathsf{k}=1}^{\mathscr{N}_{\theta_2}} 2\pi\tau r_{\mathsf{i}}\,f^{(\mathsf{i}, \mathsf{j}, \mathsf{k})}\,,
\end{align}
with $r_{\mathsf{i}}$ denoting the radial coordinate of the finite difference node $(\mathsf{i}, \mathsf{j}, \mathsf{k})$.
\subsection{Boundary conditions}
\label{subsec:boundary_conditions}
We need to specify the boundary conditions on the various fields that are being solved for in the governing equations. These are the helical Bloch states $\psi_j(\bfx;\eta,\nu)$ (for $\eta \in \mathfrak{I}$ and $\nu = \{0,1,2,\ldots,\mathfrak{N}-1\}$) which satisfy the Kohn-Sham equations (eq.~\ref{eq:Euler_Lagrange} or eq.~\ref{eq:Euler_Lagrange_G2}), and the total electrostatic potential $\Phi$ which satisfies Poisson's equation (eq.~\ref{eq:Poisson_Equation} or eq.~\ref{eq:Poisson_Equation_G2}). In helical coordinates, we may interpret the helical Bloch boundary conditions (eq.~\ref{eq:helical_Bloch_two_generator}) as the following conditions:
\begin{align}
\psi_j(r,\theta_1 = 1,\theta_2;\eta,\nu) = e^{-i2\pi\eta}\psi_j(r,\theta_1 = 0,\theta_2;\eta,\nu)\,,
\label{eq:Bloch_BC_1}
\end{align}
which applies to the surfaces $\partial Z_0 \bigcup \partial Z_{\tau}$; as well as the condition:
\begin{align}
\psi_j(r,\theta_1,\theta_2 = \frac{1}{\mathfrak{N}};\eta,\nu) = e^{-i2\pi\frac{\nu}{\mathfrak{N}}}\psi_j(r,\theta_1,\theta_2 = 0;\eta,\nu)\,,
\label{eq:Bloch_BC_2}
\end{align}
which applies to the surfaces $\partial \vartheta_0 \bigcup \partial \vartheta_{\Theta}$.  We assume that the atoms within $\Omega$ are sufficiently far away from the boundary surfaces $\partial R_{\text{in}}$ and $\partial R_{\text{out}}$, so that the electron density decays to zero at these surfaces and zero Dirichlet boundary conditions on the wavefunctions (Section \ref{subsubsec:Helical_Bloch_Waves}) can be applied. Thus, the conditions to be applied on the surfaces $\partial R_{\text{in}} \bigcup \partial R_{\text{out}}$ are:
\begin{align}
\psi_j(r = R_{\text{in}},\theta_1,\theta_2 ;\eta,\nu) = \psi_j(r = R_{\text{out}},\theta_1,\theta_2;\eta,\nu) = 0\,.
\end{align}

As already discussed, the electrostatic potential $\Phi$ inherits the symmetry of the helical structure (i.e., it is invariant under $\calG_1$ or $\calG_2$). Thus, it follows that the boundary conditions on this quantity on the surfaces $\partial Z_0 \bigcup \partial Z_{\tau}$ and $\partial \vartheta_0 \bigcup \partial \vartheta_{\Theta}$ are respectively:
\begin{align}
\nonumber
\Phi(r,\theta_1 = 1,\theta_2) = \Phi(r,\theta_1 = 0,\theta_2)\,,\\
\Phi(r,\theta_1,\theta_2 =  \frac{1}{\mathfrak{N}}) = \Phi(r,\theta_1,\theta_2 = 0)\,.
\end{align}
To apply the right boundary conditions on $\Phi$  on the surfaces $\partial R_{\text{in}} \bigcup \partial R_{\text{out}}$, we may evaluate eq.~\ref{eq:ES_phi_formula_finite} or eq.~\ref{eq:ES_phi_formula_finite_G2} directly using Ewald summation \citep{Dumitrical_Ewald} or multipole expansion \citep{han2008real} techniques.\footnote{As remarked in \citep{ghosh2017sparc_1}, it is often adequate to set the net potential to zero on the bounding surfaces, instead of applying the above procedures, provided the boundaries are sufficiently distant. The net potential arises from a net neutral charge distribution, and therefore  it usually decays to zero faster. This is the procedure adopted here for the simulations presented in Section \ref{sec:Simulation_Results_Discussion}.} 
\subsection{Numerical linear algebra issues}
\label{subsec:numerical_linear_algebra}
On every SCF iteration step, the lowest $N_{\text{s}}$ eigenstates of a set of $N_{\mathcal{K}}$ Kohn-Sham operators (indexed by $\eta_b$  for the case of $\calG_1$, and $\eta_b,\nu$ for the case of $\calG_2$) have to be computed. We employ iterative diagonalization based on Chebyshev polynomial filtered subspace iterations (CheFSI) \citep{zhou2014chebyshev,Serial_Chebyshev,Parallel_Chebyshev, My_DG_Cheby_paper} for this purpose. Due to the fact that the helical coordinate system is curvilinear, the finite difference discretization of the Laplacian operator results in a discretized operator that is non-Hermitian (even though the Laplacian as an operator on an infinite dimensional Hilbert space \textit{is} Hermitian). This results in the discretized Hamiltonian operator also being non-Hermitian, which can lead to non-real eigenvalues of the matrix. However, as the discretization is made finer (i.e., $h_{r},h_{\theta_1},h_{\theta_2}$ become smaller) the discretized Laplacian and Hamiltonian matrices approach Hermitian matrices and so, the eigenvalues resulting from these discretized operators tend to have vanishingly small imaginary parts \citep{gygi1995real, banerjee2016cyclic}. In practice, for the mesh spacings that have been considered in this work, the imaginary parts are small enough that they can be safely ignored without affecting the quality of the simulations (also see \citep{banerjee2016cyclic} and \citep{ghosh2019symmetry} where a similar situation was encountered).

We use Chebyshev polynomial filter orders in the range $55$ to $80$ for our simulations. This is somewhat higher than what is commonly employed in finite difference DFT calculations in affine coordinate systems with comparable mesh spacing \citep{Serial_Chebyshev, ghosh2017sparc_1, ghosh2017sparc_2}.  We adopt it here to mitigate the effect of the larger spectral width of the discretized Hamiltonian that arises due to crowding of  grid points as one approaches the origin in helical coordinates. We have also observed that the time for computing matrix-vector products using the discretized Hamiltonian in helical coordinates is larger compared to the time required for the case of a finite difference Hamitonian (of the same size) arising from a cylindrical system (obtained by setting the twist angle parameter $\alpha$ to $0$ in the helical case). This is certainly due to the presence of cross derivatives in the helical case (eq.~\ref{eq:FD_second_mixed_derivative}), which makes the discretized Hamiltonian somewhat less sparse when compared with the cylindrical one. Within the CheFSI method, we use Arnoldi iterations for computing the extremal eigenvalues of the Hamiltonian and a direct diagonalization method for computing the projected subspace problem. 

We use the Generalized Minimal Residual method (GMRES) \citep{saad1986gmres} to solve the Poisson problem associated with the total electrostatic potential $\Phi$. To accelerate convergence, we use an incomplete LU factorization based preconditioner \citep{saad2003iterative}.
\subsection{Matlab implementation: The Helical DFT code}
\label{subsec:Matlab}
We have implemented the above computational strategies using the MATLAB \citep{MATLAB:2019} software package  into a code called Helical DFT.\footnote{An early version of this MATLAB code was developed in collaboration with Phanish Suryanaryana, Georgia Institute of Technology. Details on a more efficient {C/C++} implementation by Suryanarayana and collaborators appears in their forthcoming work.} The code parallelizes computation over the different $\eta$ and $\nu$ values using MATLAB's Parallel Computing Toolbox (the \textsf{parfor} function). For maximum efficiency of the MATLAB implementation, code vectorization has been used as much as possible. However, for computing certain quantities (such as the atomic forces and the net nuclear pseudocharge), multiple levels of nested loops were found unavoidable. These routines were converted into machine code by use of the MATLAB Coder framework, which helped alleviate performance issues. In order to reduce the memory footprint associated with the storage of the different Hamiltonian matrices arising from the $N_{\calK}$ different values of $\eta_b$ (and also $\nu$ for $\calG_2$), we avoid computing matrix vector products through MATLAB's internal sparse matrix framework since that requires these matrices to be available explicitly. Instead, we store only the Laplacian part of the Hamiltonian matrix in compressed sparse row (CSR) format and apply the helical Bloch boundary conditions associated with the different values of $\eta_b$ and/or $\nu$ on the fly, while computing matrix vector products. The actual task of computing these matrix vector products is carried out using a {C} language routine which has been compiled and interfaced with our MATLAB code.

Some other relevant details related to the implementation are as follows. We use the periodic variant of Pulay's scheme \citep{banerjee2016periodic, pulay_mixing} in the total potential to accelerate the convergence of the SCF iterations. The Fermi energy is calculated using a nonlinear equation root finder (MATLAB's \textsf{fzero} function). When required, structural relaxation is achieved by an implementation of the Fast Intertial Relaxation Engine (FIRE) algorithm \citep{bitzek2006structural}. 
\section{Simulation Results and Discussion}
\label{sec:Simulation_Results_Discussion}
We now turn to a discussion of numerical simulations and results. All simulations were run using a single node of the Mesabi cluster at the Minnesota Supercomputing Institute, or a single node of the Hoffman2 cluster at UCLA's Institute for Digital Research and Education. Each compute node of Mesabi has $24$ Intel Haswell E5-2680v3 processors operating at $2.50$ GHz, and $64$ GB to $1$ TB of RAM. Each compute node of the Hoffman2 cluster has two $18$-core Intel Xeon Gold 6140 processors (with $24.75$ MB cache, and running at $2.3$ GHz), and $192$ GB of RAM.

All calculations presented here use Troullier-Martins norm conserving pseudopotentials \citep{troullier1991efficient}. The Local Density Approximation \citep{KohnSham_DFT} was used for modeling the exchange correlation energy, and the Perdew-Wang parametrization \citep{Perdew_Wang} of the correlation energy was employed. An electronic temperature of $T_{\text{e}} = 315.77$ Kelvin was used for Fermi-Dirac smearing to help accelerate SCF convergence.

The large majority of the simulations here have focused on the study of single wall nanotubes. Starting from the sheet of an elemental two-dimensional material, nanotubes of any chirality can be formed using the so-called ``roll-up'' construction \citep{evarestov2015theoretical}. This procedure                    also allows us to see \citep{James_OS, Dumitrica_James_OMD} that such nanotubes can be adequately represented using helical groups generated by two elements with just $4$ atoms in the fundamental domain \citep{ghosh2019symmetry}. In this representation, the twist angle parameter $\alpha$ becomes related to the chirality of the tubes ($\alpha = 0$ for achiral tubes), while the parameter $\mathfrak{N}$, associated with the cyclic group order, is related to the tube radius  \citep{Dumitrica_James_OMD}. In our nanotube simulations, we have ensured that the atoms within the fundamental domain are always located $10$ to $12$ Bohrs away from the boundary surfaces $\partial R_{\text{in}}$ and  $\partial R_{\text{out}}$ so as to allow sufficient decay of the electron density and the wavefunctions in the radial direction. 
\subsection{Materials system: Single layer black phosphorus nanotubes}
\label{subsec:materials_system}
Single-layer black phosphorus, or phosporene, is a two-dimensional nanomaterial that has been the object of intense investigation in recent years due its association with a number of unusual and fascinating material properties \citep{reich2014phosphorene, das2014tunable, carvalho2016phosphorene, liu2014phosphorene, kou2015phosphorene, rodin2014strain, guan2014phase}. Nanotubes of this material, as formed by the roll-up construction have also received attention in the literature \citep{guo2014phosphorene, nguyen2018atomistic, cao2017lithium, sorkin2016mechanical, ansari2017density, zhang2017strain,pan2017self, allec2016inconsistencies, li2018tunable, liao2016effects, guan2014high,liu2018strength,  fernandez2019structural, sorkin2017recent}, due to their interesting optical and electronic properties, and the coupling of these properties to mechanical strains. This motivates our choice in  selecting this material for the simulations presented in this work.

As a starting point, we obtained the ground state structure of a single layer of phosphorene (Figure \ref{fig:phosphorene_sheet}) using the same pseudopotential, exchange correlation functional and electronic temperature as the Helical DFT simulations subsequently described. We used the plane-wave DFT code ABINIT \citep{Gonze_ABINIT_1, gonze2016recent} to perform the geometry relaxation calculation. The periodic unit cell for this simulation contained $4$ atoms. An energy cutoff of $40$ Ha, along with $30 \times 30 \times 1$ k-points and a cell vacuum of $25$ Bohr in the Z-direction was used.  At the end of this very refined calculation the atomic forces were all less than $10^{-5}$ Ha/Bohr, while the cell stresses were of the order of $10^{-8}\;\text{Ha/Bohr}^3$ or lower. Some of the structural parameters obtained from the calculation are shown in Table \ref{Table:Phosphorene_Parameters}. There appears to be generally good agreement with the literature\footnote{The minor differences are possibly due to our use of LDA exchange correlation, which tends to predict overbinding and shortened bond lengths \citep{van1999correcting}. } thus giving us confidence in the reliability of the subsequent simulations with regard to materials physics.
\begin{table}[ht]
\centering
\scalebox{0.8}{
\begin{tabular}{|c|c|c|c|c|}
\hline
Lattice constant  $a_1$ & Lattice constant $a_2$ & Bond length $\delta_1$ & Bond length $\delta_2$ & Bond angle $\gamma$\\
along X axis (\AA) & along Y axis  (\AA) &   (\AA)  & (\AA) & (degrees)\\\hline
$3.26$ & $4.36$ & $2.20$  & $2.193$   & $102.42$ \\
($3.35$ \citep{liu2014phosphorene}, $3.3$ \citep{behera2018paw}) &  ($4.62$ \citep{liu2014phosphorene}, $4.5$ \citep{behera2018paw}) & ($2.29$ \citep{behera2018paw}) & ($2.25$ \citep{behera2018paw}) & ($103.3$ \citep{behera2018paw}) \\ \hline
\end{tabular} 
}
\caption{Structural parameters of phosphorene computed using a periodic DFT calculation. Quantities in parentheses refer to values in the literature along with references. Explanation of the parameters is available from Figure \ref{fig:phosphorene_sheet}}.
\label{Table:Phosphorene_Parameters}
\end{table}
\begin{figure}[ht]
\centering
\subfloat[Top view of a phosphorene sheet. Atoms in the shaded region are placed in the Helical DFT simulation cell (fundamental domain).]{\includegraphics[trim={12cm 10cm 7cm 4.25cm}, clip, width=0.45\textwidth]{./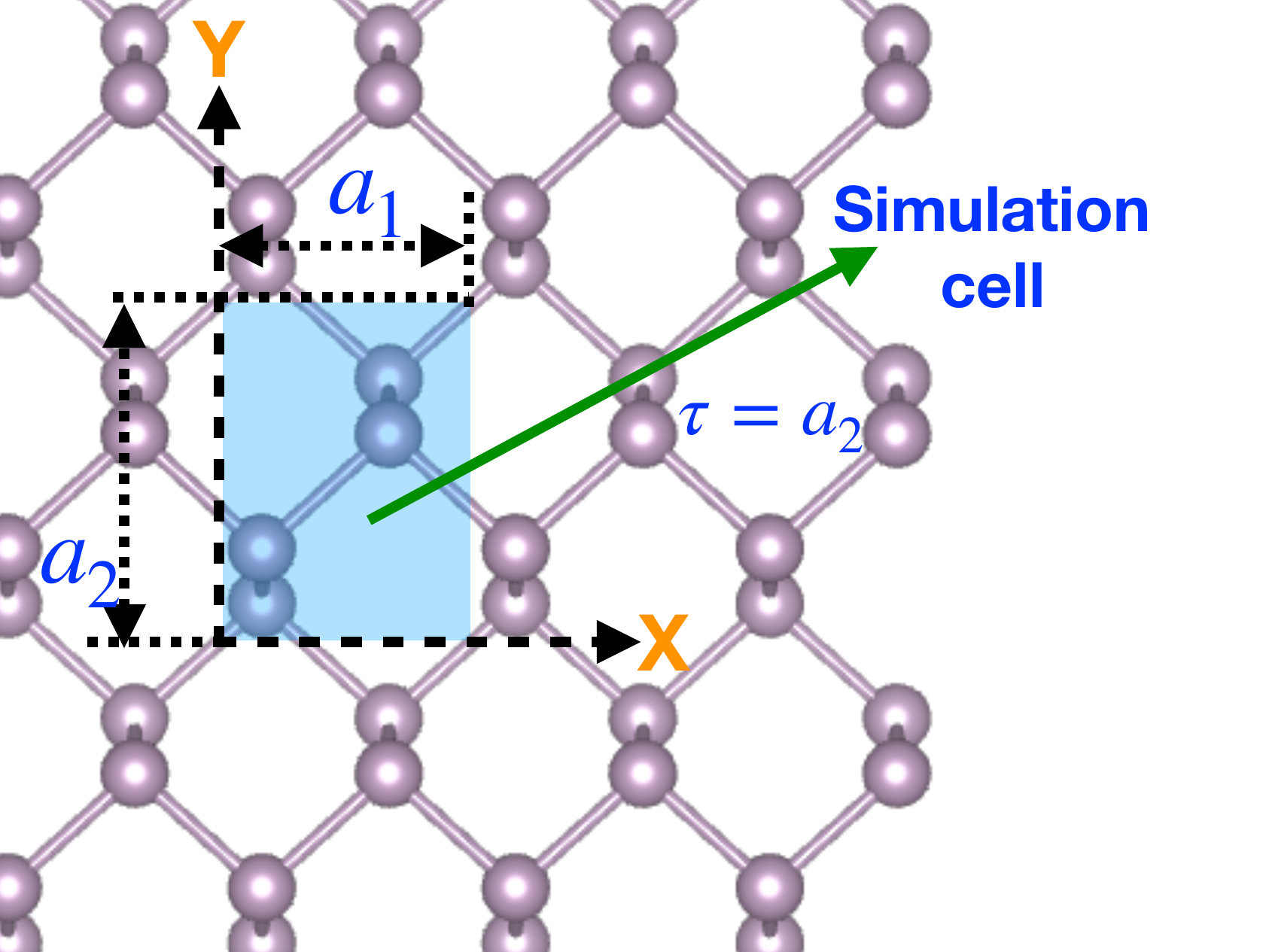}}\quad
\subfloat[Side view of a phosphorene sheet. Different structural parameters are labeled. Values available in Table \ref{Table:Phosphorene_Parameters}.]{\includegraphics[trim={0.5cm 5cm 12.5cm 5cm}, clip, width=0.45\textwidth]{./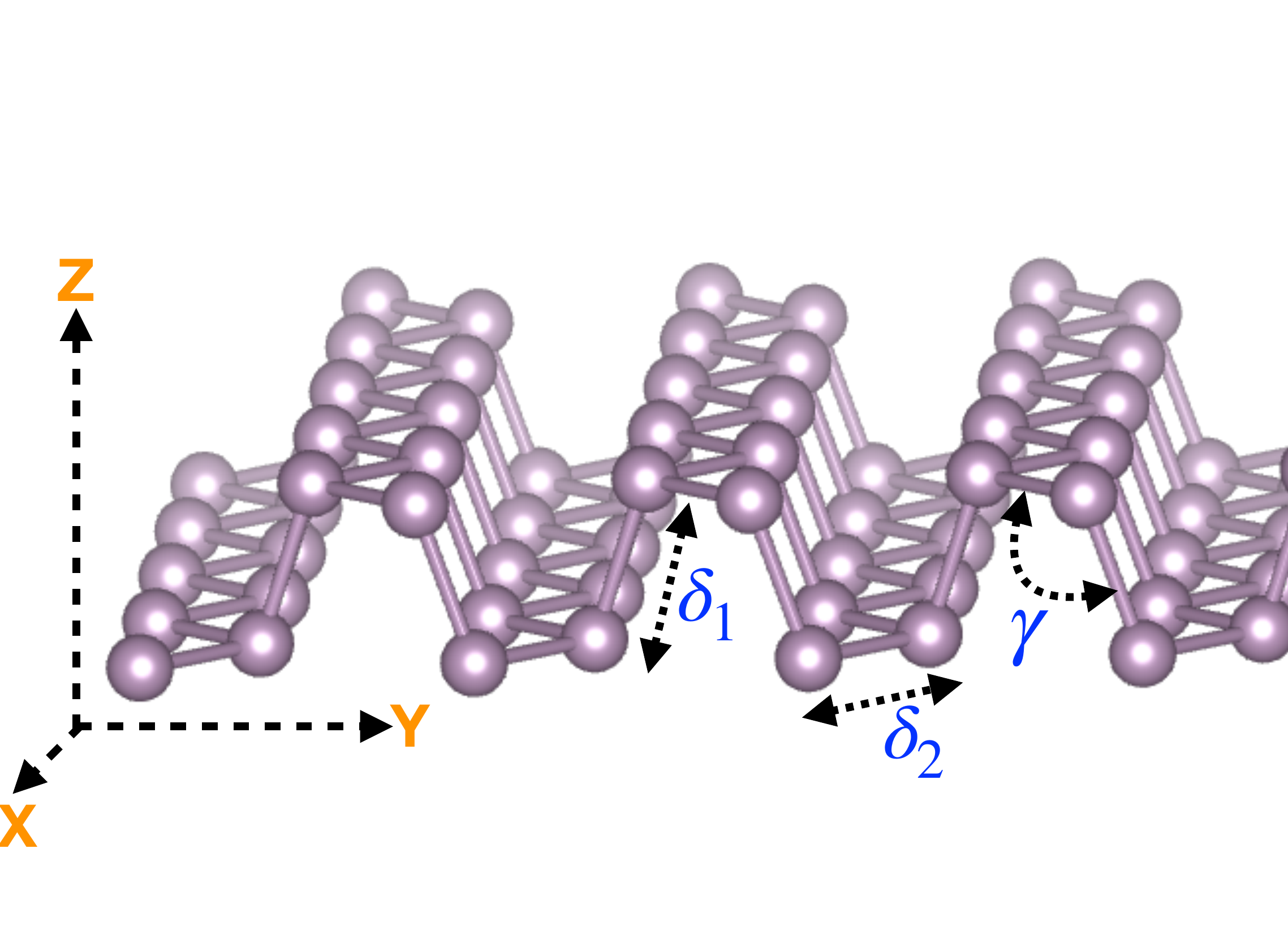}}
\caption{Phosphorene  lattice as computed using a periodic DFT calculation. This provides the starting point for Helical DFT calculations involving phosphorene nanotubes. For zigzag tubes, the parameter $\tau$ in the two-generator helical group is set to be equal to the lattice constant $a_2$ along the Y-axis.}
\label{fig:phosphorene_sheet}
\end{figure}

To represent the zigzag phosphorene nanotubes in the Helical DFT simulations described next, we roll up the phosphorene sheet along the X-axis and place the atoms from the aforementioned  periodic unit cell (the shaded region in Figure \ref{fig:phosphorene_sheet}) into the Helical DFT simulation cell ($\Omega$). In the absence of relaxation effects, the pitch $\tau$ associated with the two-generator helical group, is equal to the lattice constant along the Y axis. Furthermore, the angle $\Theta$ associated with the cyclic group order $\mathfrak{N}$ is related to the radius of the nanotube via the relation $\displaystyle\Theta = \frac{2\pi}{\mathfrak{N}} = \frac{a_1}{R_{\text{avg.}}}$. Here  $a_1$ denotes the lattice vector along the X-axis in the phosphorene sheet, and $R_{\text{avg.}}$ denotes the average radial coordinate of the atoms in the fundamental domain of the nanotube. For chiral nanotubes, we additionally include a no-zero twist angle parameter $\alpha$. Figure \ref{fig:PNTs} shows examples of two phosphorene nanotubes studied in this work using Helical DFT.
\begin{figure}[ht]
\centering
\subfloat[Zigzag phosphorene nanotube.]{\includegraphics[trim={12cm 4cm 5cm 1.5cm}, clip, width=0.48\textwidth]{./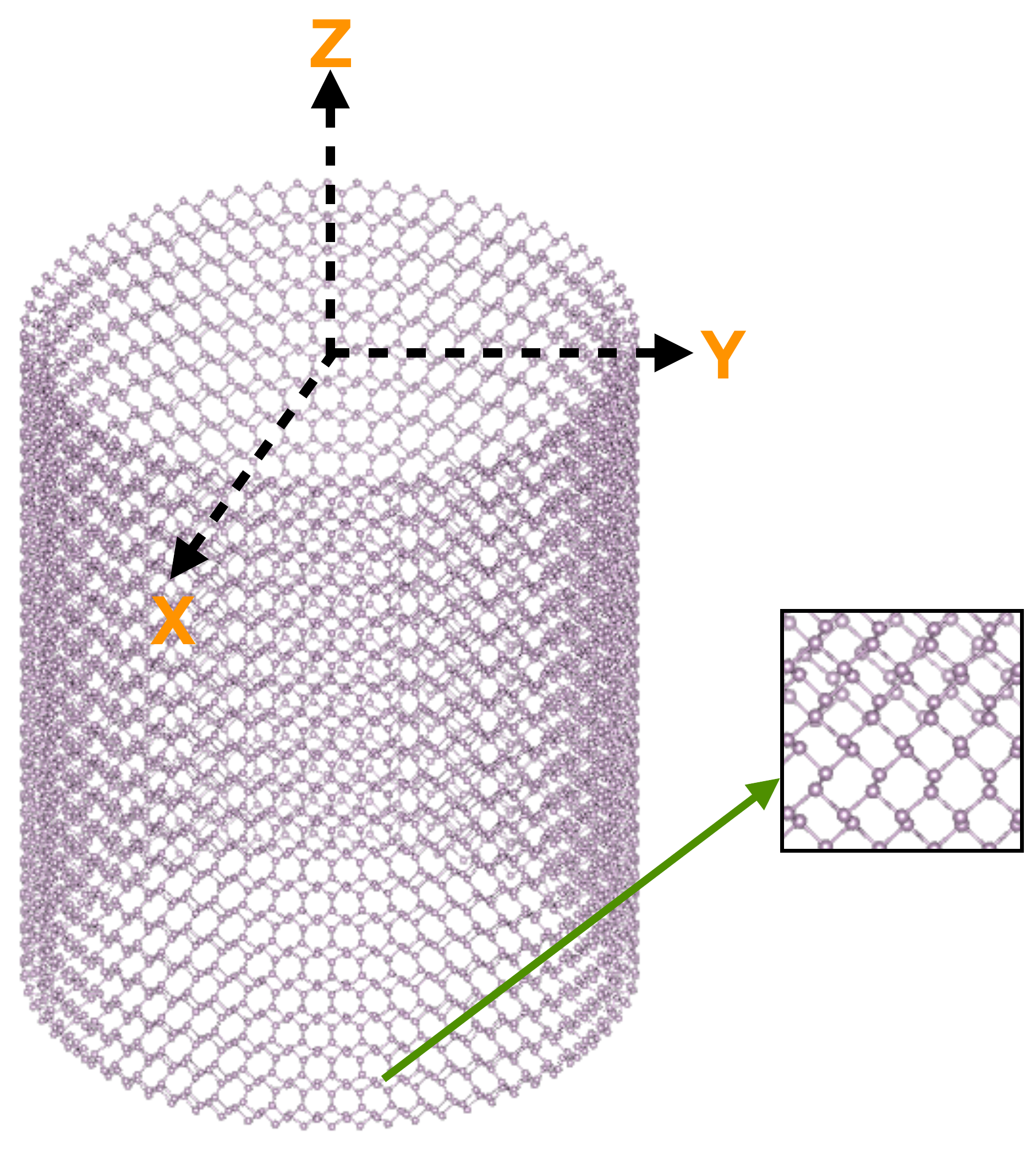}}\quad
\subfloat[Chiral phosphorene nanotube obtained by setting  the angle of twist parameter $\alpha = 0.005$.]{\includegraphics[trim={12cm 4cm 5cm 1.5cm}, clip, width=0.48\textwidth]{./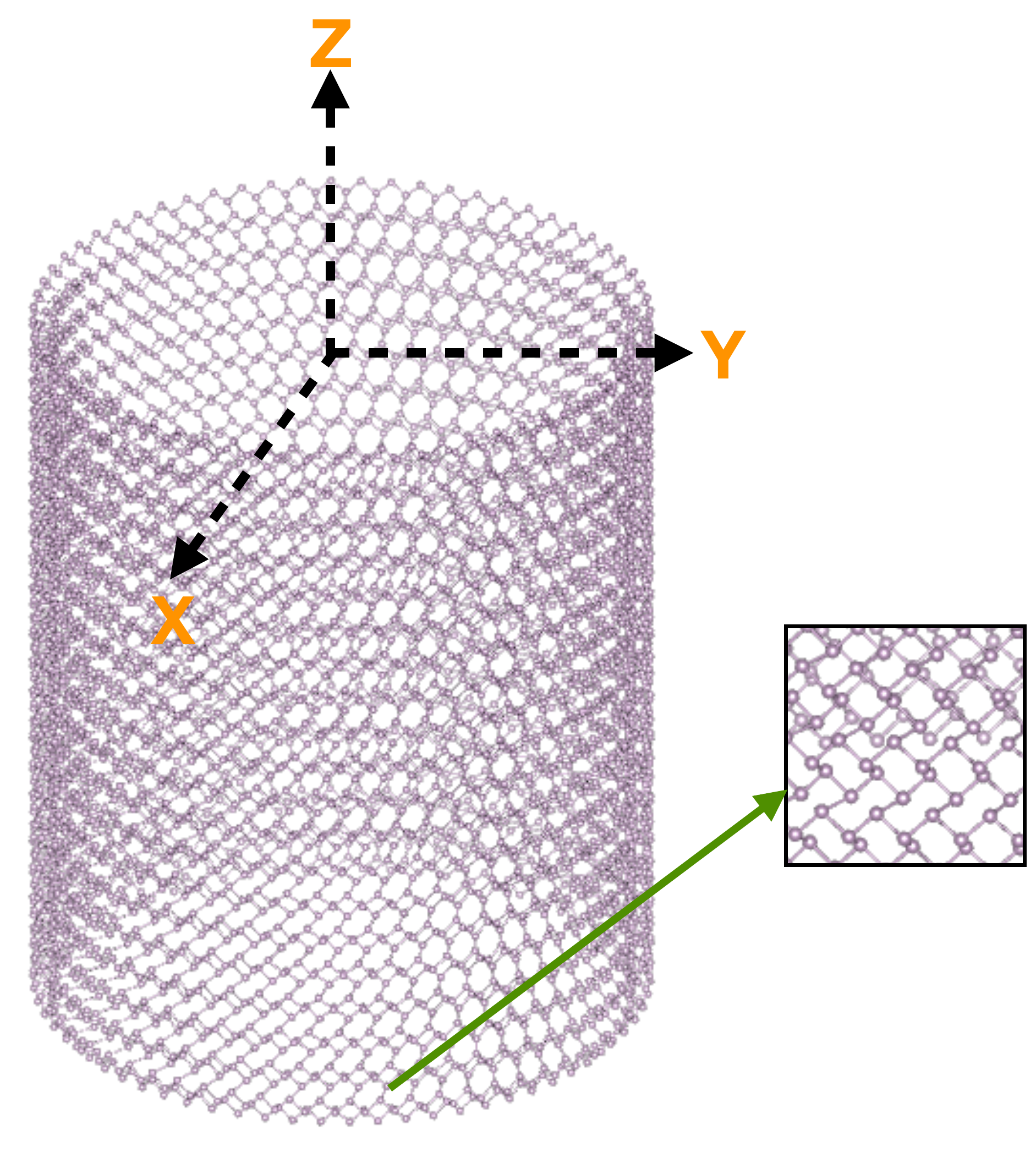}}
\caption{Representative phosphorene nanotubes (radius = $3.4$ nanometers) studied in this work using Helical DFT. Insets show zoomed-in views of the atomic arrangement to highlight differences between the two cases.}
\label{fig:PNTs}
\end{figure}
\subsection{Convergence and accuracy}
\label{subsec:convergence_accuracy}
To study the convergence properties of our numerical implementation, we consider a chiral phosphorene nanotube of radius $1.76$ nanometers. The cyclic group order is $\mathfrak{N} = 32$ and the twist angle parameter is $\alpha = 0.0025$. For a given structure and a fixed simulation domain, the two convergence parameters under study are the mesh spacing $h = \text{Max.}\bigg(h_r, \tau h_{\theta_1}, \big(\frac{R_{\text{in}} + R_{\text{out}}}{2}\big)h_{\theta_2}\bigg)$, and the number of points $N_{\eta}$ used for discretizing the set $\mathfrak{I}$ in reciprocal space. As a reference calculation, we computed the electronic structure of the system with the finest mesh ($h = 0.23$ Bohr) and the highest value of $N_{\eta}$ ($=23$) that we could afford under computational resource constraints. First, for the mesh convergence study, we fix $N_{\eta} = 23$ and carry out a series of calculations with $h \approx 0.30, 0.40, 0.50, 0.60, 0.65, 0.70$ Bohr.  Next, for studying convergence with respect to $N_{\eta}$, we fix $h = 0.23$ Bohr and carry out a series of calculations with  $N_{\eta} = 1, 3, 7, 11, 15, 19$. We plot the errors in the energy per atom and the atomic forces in each of the above cases in  Figure \ref{fig:convergence}.

From the figures, it is clear that the code converges to the reference calculations systematically. Using straight-line fits to the convergence data with respect to $h$, we find slopes of $5.8$ and $7.6$ for the energies and forces, respectively. These numbers are very nearly identical to the convergence rates observed in finite difference calculations involving cylindrical coordinates \citep{ghosh2019symmetry} and are also comparable to finite difference calculations in affine coordinate systems \citep{ghosh2017sparc_1, ghosh2017sparc_2}. From the data, we are also able to estimate that the parameters $h = 0.40$ Bohr and $N_{\eta} = 11$ are more than sufficient to reach chemically accurate energies and forces (the exact errors for these choices with respect to the reference calculation were $2\times 10^{-4}$ Ha/atom and $8\times10^{-5}$ Ha/Bohr in the energies and forces, respectively). For computational efficiency therefore, we use this set of parameters for all relaxation calculations described subsequently, and switch to the reference calculation parameters (i.e., $h = 0.23$ Bohr, $N_{\eta} = 23$)  only when more accurate energies / band gaps are required at the end of a relaxation procedure.\footnote{The convergence thresholds (in terms of relative residuals) for the SCF iterations and the Poisson problem had been set to $10^{-6}$ and $5\times10^{-9}$ for the above calculations, and we continue to use these values for all subsequent simulations.}

Verifying the accuracy of the Helical DFT code with respect to standard plane-wave codes can be challenging since the plane-wave codes may be required to include an enormous number of atoms in the periodic unit cell in order to replicate the exact system being studied by Helical DFT. In case of the above chiral nanotube system for example, over $50,000$ atoms would be needed, making the planewave calculation unfeasible. Therefore, we carry out this accuracy check in two steps. First, we set up a Helical DFT calculation for a zigzag nanotube (i.e., $\alpha = 0$) of radius $0.94$ nanometers. For this tube, the cyclic group order $\mathfrak{N} = 16$. Then, we simulate this tube using the ABINIT code by employing a $64$ atom unit cell (periodicity was enforced along the Z axis, Dirichlet boundary conditions were enforced along X and Y axes by padding with a large amount of vacuum). We converged both codes to the extent allowed by computational resources, and observed that the energies (in Ha/atom) and the forces (in Ha/Bohr)  from these two calculations agreed with each other to $1 \times 10^{-4}$  or better. Next to study a case for which $\alpha \neq 0$, we set up an artificial system consisting of atoms along a single helix (similar to the configuration in Figure \ref{fig:Helical_Bloch_Theorem}). This system was generated using a single generator helical group ($\mathfrak{N} = 1$) and used $\alpha = 0.01$. The helical unit cell had $2$ atoms, while  the periodic unit cell in ABINIT involved $200$. Once again, upon convergence with respect to their respective discretization parameters, the codes produced results that differed from each other by about $1 \times 10^{-4}$ Ha/atom or Ha/Bohr. This completes the accuracy tests.\footnote{It is likely that Helical DFT can be made to agree with ABINIT results to finer levels of accuracy. However, the quasi-one-dimensional nature of the systems being studied, and the slow convergence of the electrostatics  requires the use of large amounts of vacuum padding in the ABINIT supercell, and this tends to cause serious convergence issues as the energy cutoff is increased.}

Through the above examples, we were also able to observe that for realistic helical nanostructure simulations,  the wall time for Helical DFT can be up to orders of magnitude smaller compared to a well optimized plane-wave code like ABINIT, making it a powerful first principles tool in the study of such systems.
\begin{figure}[ht]
\centering
\subfloat[Convergence with respect to discretization in real space.]{\scalebox{0.9}
{\includegraphics[width=0.49\textwidth]{./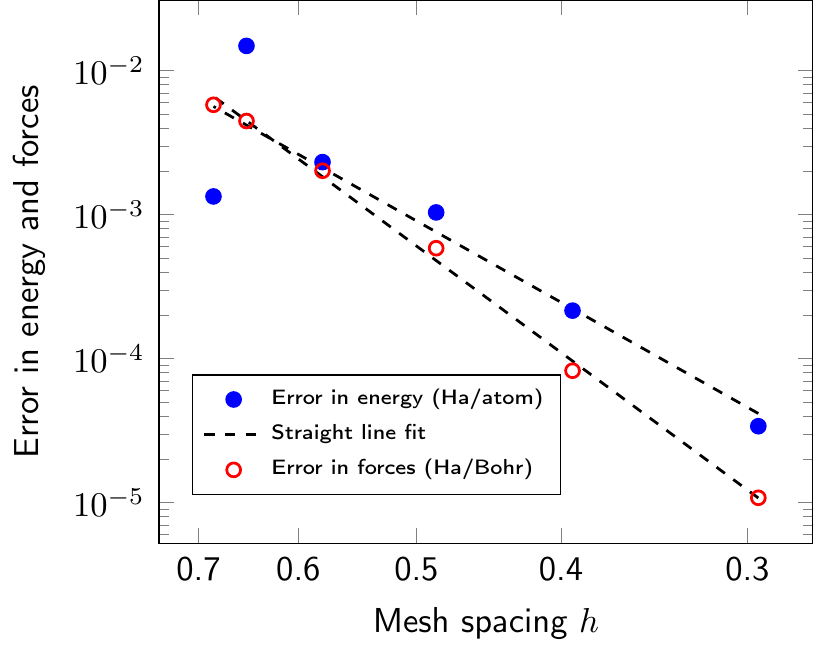}}
} 
\quad$\;$
\subfloat[Convergence with respect to number of points used to discretize $\mathfrak{I}$.]{\scalebox{0.9}{
\includegraphics[width=0.49\textwidth]{./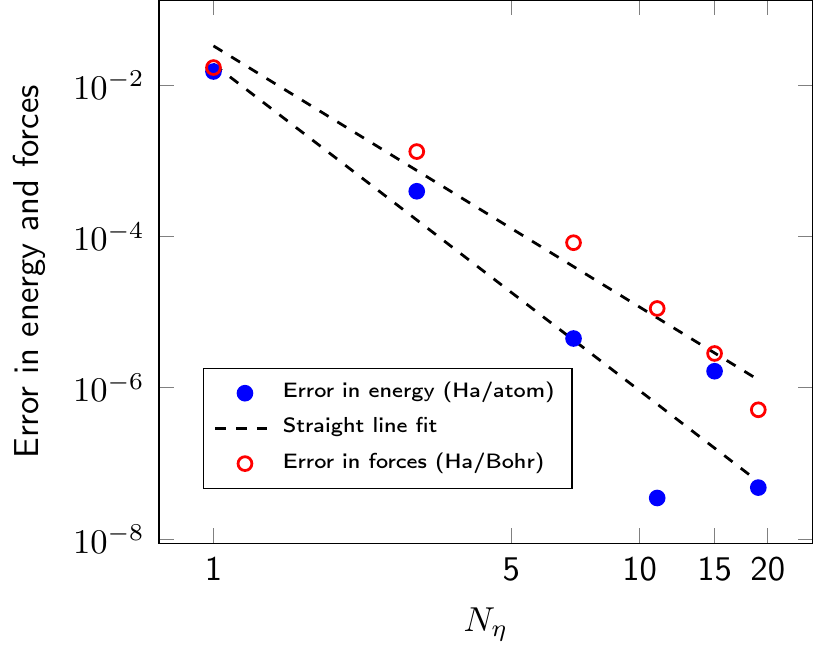}}
}
\caption{{Convergence properties of the Helical DFT code with respect mesh size $h$ and the number of points $N_{\eta}$ used to discretize the set $\mathfrak{I}$. The error in the forces is computed by considering the magnitude of the maximum difference in all the force components of all the atoms.}}
\label{fig:convergence}
\end{figure}
\subsection{Simulation of twisting: Ab initio computation of torsional stiffness}
\label{subsec:twisting}
By using the two-generator helical group $\calG_2, $ it is possible to describe torsional deformations in nanostructures \citep{Dumitrica_James_OMD, Dumitrica_Tight_Binding1, Dumitrica_Tight_Binding2, Pekka_Twisted_GNR, Pekka_Efficient_Approach}. We use this procedure here\footnote{Due to the use of helical symmetries, the structures being modeled in the simulations are infinite. In practice, nanotube structures have finite extent, though some can have lengths of the order of macroscopic sizes \citep{zheng2004ultralong}. The edge effects in these materials are expected to decay as one moves towards the interior of the material, both in the continuum elasticity sense \citep{toupin1965saint, mielke1988saint}  and at the level of the electronic structure \citep{prodan2005nearsightedness}. This offers a justification for the conceptual correctness of the simulations.} to illustrate the utility of Helical DFT in extracting mechanical properties of nanomaterials \textit{ab initio}. Specifically, we investigate the behavior of a zigzag phosphorene nanotube (nanotube radius $3.4$ nanometer, cyclic group order $\mathfrak{N} = 64$.) under twisting deformations. We begin with the untwisted structure ($\alpha = 0$) and relax the atoms in the simulation cell till all force components on all atoms are below $10^{-3}$ Ha/Bohr.\footnote{Due to the relatively large nanotube radius, the atoms in the untwisted nanotube are in an environment similar to that in the phosphorene sheet. Since the phosphorene sheet itself had been relaxed well, the forces on the atoms in the nanotube were relatively small to begin with, and structural relaxation was usually completed in just a few steps.} Starting from this relaxed structure, we apply twisting deformations to the nanotube by prescribing non-zero values of the twist angle parameter $\alpha$ in the two-generator helical group $\calG_2$ used for describing the nanotubes. We varied $\alpha$ in steps of $0.001$ and relaxed the resulting nanotube structure in each case, till all components of the forces on all the atoms in the simulation cell dropped below $10^{-3}$ Ha/Bohr once again. The largest twist we considered is about $5$ degrees per nanometer length of the tube, which corresponds to $\alpha = 0.006$. Anticipating a quadratic dependence of the nanotube twist energy for small values of alpha \citep{Dumitrica_James_OMD}, we plot the results as shown in Figure \ref{fig:nanotube_twist_energy}. In that figure, we have also plotted the twisting energies obtained when the atomic  relaxation effects are not considered.

We write the twist energy per unit length of the tube as $U_{\text{twist}} = \half k_{\text{twist}}\,\beta^2 $, with $\beta$ denoting the twist per unit length of the tube ($\beta = \frac{2\pi\alpha}{\tau}$) and $k_{\text{twist}}$ denoting the torsional stiffness. From the figure, we use the straight line fits near zero twist to estimate the value of $k_{\text{twist}}$ as $13.66$ eV nm $\text{(deg)}^{-2}$ and $27.33$ eV nm $\text{(deg)}^{-2}$, for the relaxed and unrelaxed cases respectively\footnote{These values implicitly use $\beta$ in degrees per nanometer. They need to be multiplied with a factor of $(180/\pi)^2$ to be stated in units conventionally used in other work (e.g. \citep{Dumitrica_James_OMD, Helical_DFT_Paper_2}) i.e.,  eV nm.}. It is also evident from the figure that non-linear effects start to play a noticeable role in this nanotube at around $4$ degrees of twist per nanometer.  These simulations serve as an example of determining constitutive parameters directly from quantum mechanics using Helical DFT. 

\begin{figure}[ht]
\centering
\subfloat{\scalebox{1.0}
{\includegraphics[width=0.50\textwidth]{./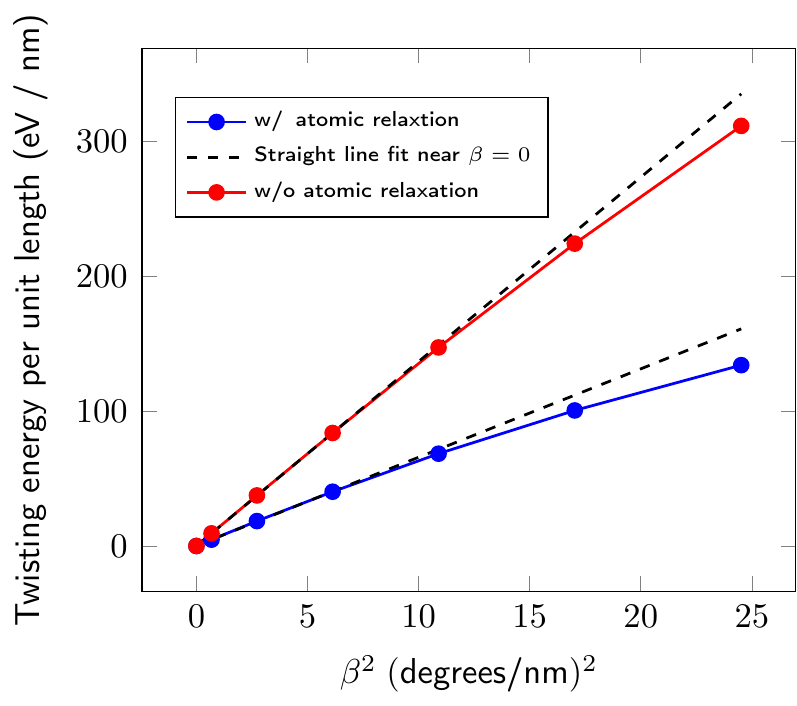}}}
\caption{Twist energy of a zigzag phosphorene nanotube as obtained via Helical DFT.}
\label{fig:nanotube_twist_energy}
\end{figure}
\subsection{Electronic properties: Behavior of zigzag tubes under torsional deformations}
\label{subsec:Electronic_Properties_Twisting}
Since Helical DFT is an \textit{ab initio} simulation tool, it can give us insights into the electronic, optical and transport properties of materials under study. In particular, it may be used to shed light into the coupling of mechanical strains with these properties. To illustrate these points, we consider again the case of the zigzag phosphorene nanotube discussed above (nanotube radius $3.4$ nanometer, cyclic group order $\mathfrak{N} = 64$.). Figure \ref{fig:Helical_Band_Diagram} displays examples of \textit{helical band structure diagrams}, which like their periodic counterparts, can be used to illustrate the electronic levels in the system. A unique feature of these diagrams however, is that they can be used to display the variation of Kohn-Sham eigenvalues with respect to $\eta$ \textit{and} $\nu$. This makes them significantly easier to interpret than traditional periodic band diagrams for quasi-one-dimesional (nanotube-like) structures, which have only one index (i.e., the wave vector $k_z$ in the axial direction) labeling the  Kohn-Sham states.\footnote{Similar band diagrams have been considered in \citep{aghaei2012symmetry, aghaei2011symmetry} in the context of phonon calculations of carbon nanotubes.} We anticipate that helical band diagrams like the ones shown are likely to emerge as powerful tools in understanding instabilities in optical and electronic materials.
\begin{figure}[ht]
\centering
\subfloat[Helical band diagram in $\eta$, along $\nu = 0$.]{\scalebox{0.9}
{\includegraphics[width=0.49\textwidth]{./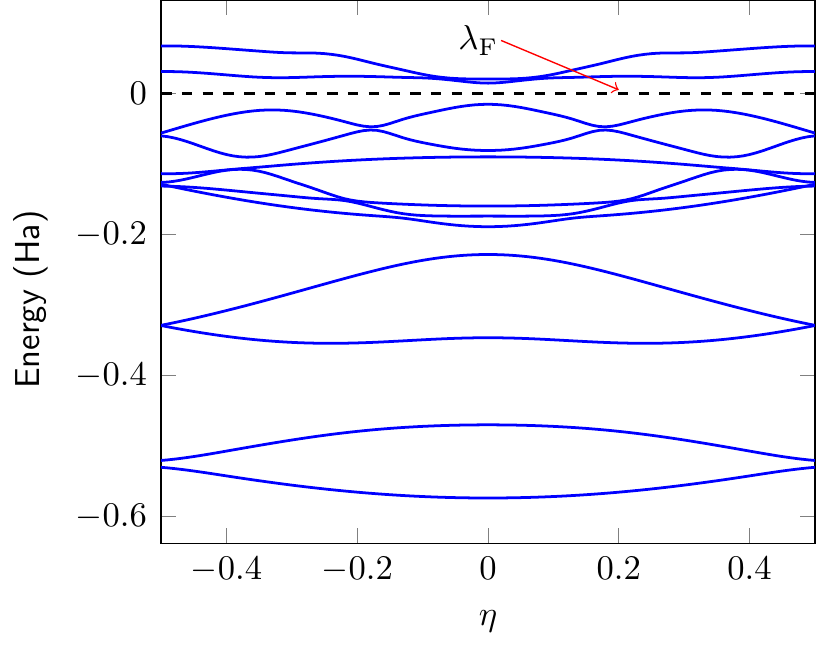}}}
\quad$\;$
\subfloat[Helical band diagram in $\nu$, along $\eta = 0$.]{\scalebox{0.9}
{\includegraphics[width=0.49\textwidth]{./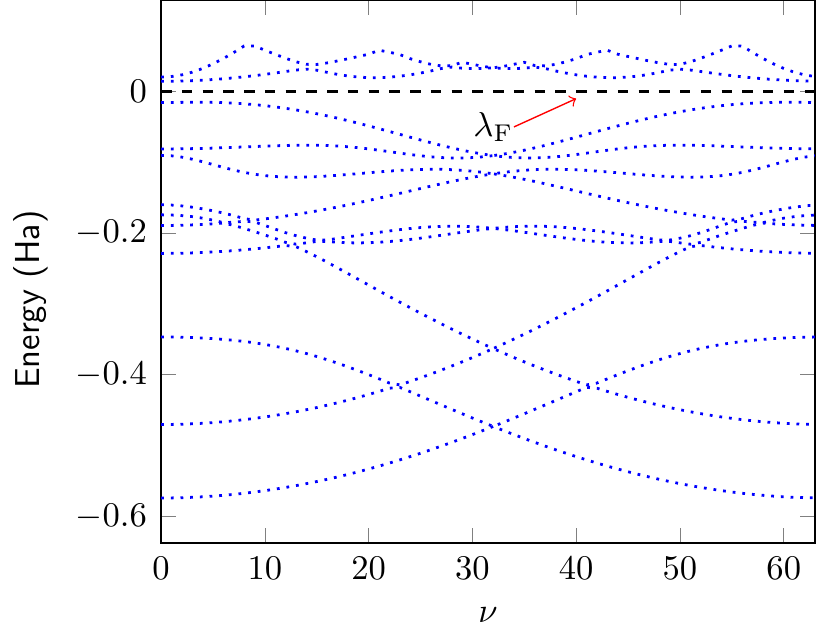}} 
}
\caption{Helical band structure diagrams for an untwisted zigzag phosphorene nanotube (radius $=3.4$ nanometers). $\lambda_{\text{F}}$ denotes the Fermi level (shifted to $0$).}
\label{fig:Helical_Band_Diagram}
\end{figure}
The helical band diagrams described above can be used to compute the size of the band gap of the  system and infer whether it is conducting, semiconducting or insulating.\footnote{For an extended system, such as a helical structure, the band gap is defined as the difference between the conduction band minimum (CBM) and the valence band maximum (VBM). Within Helical DFT, this can be computed as the difference between the smallest eigenvalue above the Fermi level and the largest eigenvalue below the Fermi level, as $\eta$ is varied in $\mathfrak{I}$ and $\nu$ in $0,1,2,\ldots\mathfrak{N}-1$.} We plot in Figure \ref{fig:bandgap_vs_alpha}  the variation in the band gap of the above nanotube, as it undergoes torsional deformations. The variation with and without atomic relaxation effects are both displayed. From the figure, we observe that the nanotube has a semiconducting behavior overall, and can be made to go from an insulating state at no twist (direct band gap of about $0.81$ eV), to a practically  conducting one, once the twist reaches about $5$ degrees per nanometer. This is a rather significant change in the electronic properties of the material, although in a mechanical sense, its deviation from simple linear elastic behavior (Figure \ref{fig:nanotube_twist_energy}) is still fairly modest at this level of twist.\footnote{Although it is well known that LDA is often unable to predict quantitatively accurate band gaps, the qualitative trends observed here are likely to be representative of actual physical behavior \citep{sham1985density, van2006quasiparticle, perdew1983physical, hybertsen1986electron, hybertsen1985first} in these nanotube systems. In any case, the formulation presented here does not have issues with regard to the use of more quantitatively accurate hybrid exchange correlation functionals \citep{marsman2008hybrid, heyd2003hybrid} whose implementation within our framework is a subject worthy of further investigation.}  
\begin{figure}[ht]
\centering
\subfloat{\scalebox{1.0}
{\includegraphics[width=0.50\textwidth]{./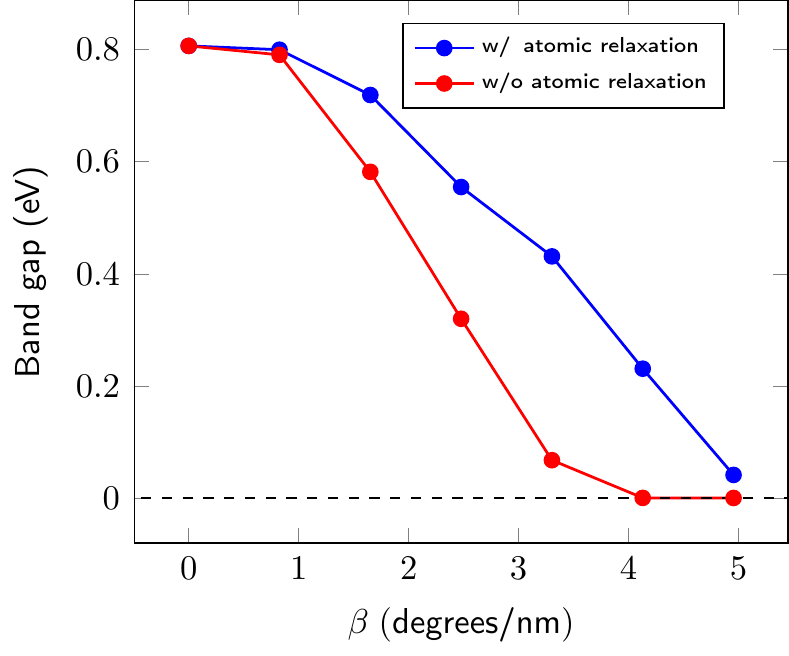}
}}
\caption{Effect of torsional deformation on the band gap of a zigzag phosphorene nanotube (radius $= 3.4$ nanometers).}
\label{fig:bandgap_vs_alpha}
\end{figure}

To further illustrate the above electronic transition, we compute the (electronic) density of states of the nanotube without and with twist ($\beta \approx 5$ degrees/nanometer). Following \citep{Defranceschi_LeBris}, we write this at a given energy level $E$ and an electronic temperature of $T_{\text{e}}$ as:
\begin{align}
\aleph_{T_{\text{e}}}(E) =2\,\int_{\mathfrak{I}} \frac{1}{\mathfrak{N}}\sum_{\nu = 0}^{\mathfrak{N}-1}\bigg(\sum_{j=1}^{\infty} f^\prime_{T_{\text{e}}}\big(E - \lambda_j(\eta,\nu)\big)\bigg)\,d\eta, 
\end{align}
and evaluate it along a fine mesh of values of $E$ in the range $[-1,1]$. The results are shown in Figure \ref{fig:Density_of_States}. It is apparent from the figure that the electronic states in the system undergo significant change as the nanotube is subjected to twisting. In particular, the number of states at or near the Fermi level $\lambda_{\text{F}}$, increases to values well above $0$, indicating onset of metallic behavior when the tube is twisted.
\begin{figure}[ht]
\centering
\subfloat[Density of states of untwisted nanotube.]{\scalebox{0.9}
{\includegraphics[width=0.49\textwidth]{./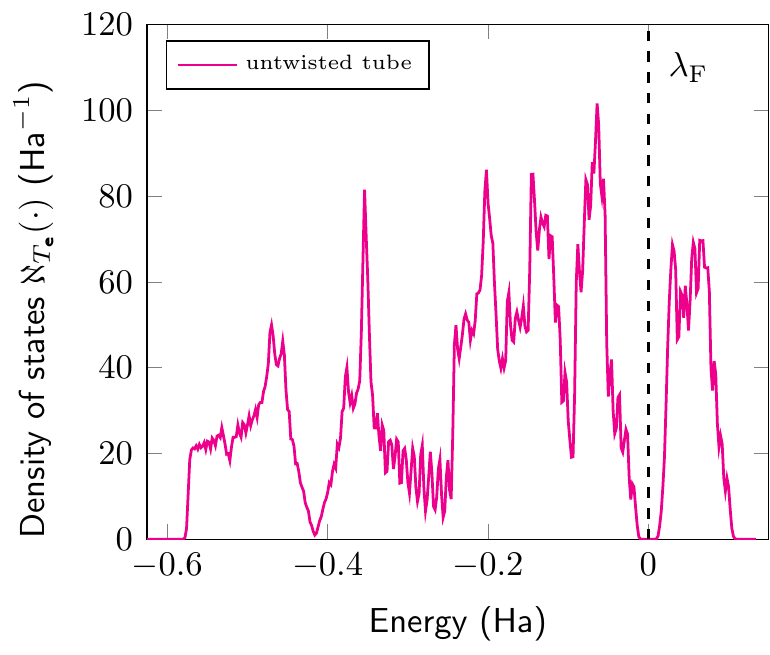}
}}\quad
\subfloat[Density of states of twisted nanotube.]{\scalebox{0.9}
{\includegraphics[width=0.49\textwidth]{./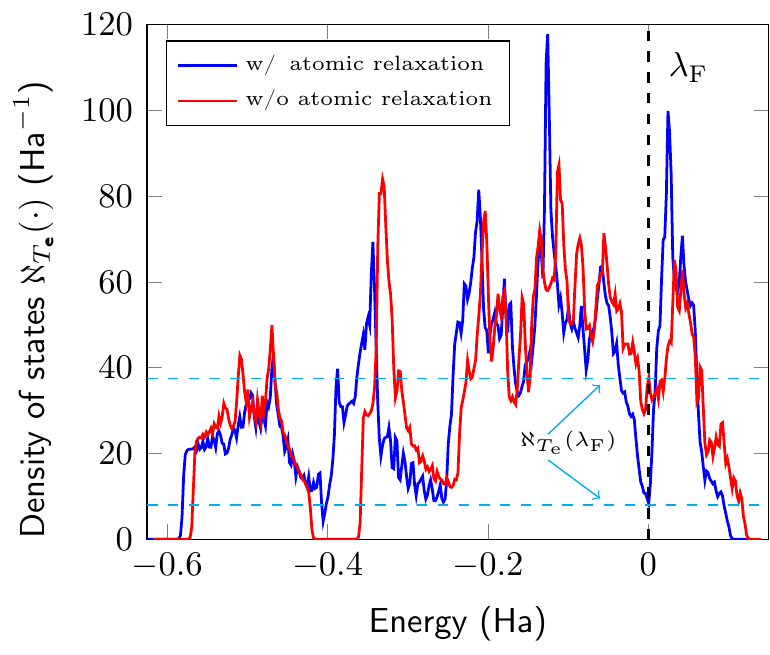}} 
}
\caption{{Electronic density of states plot for a zigzag phosphorene nanotube (radius $=3.4$ nanometers), without and with twist. $\lambda_{\text{F}}$ denotes the Fermi level (shifted to $0$).}}
\label{fig:Density_of_States}
\end{figure}
\subsection{Electronic properties: Behavior of chiral tubes under axial strains}
\label{subsec:Electronic_Properties_Chiral_Axial}
Finally, we use Helical DFT to study a chiral phosphorene nanotube. We choose a tube with $\mathfrak{N} = 64$ and $\tau = 8.244$ Bohr as before, and also set $\alpha = 0.005$. We relax the positions of the atoms within the simulation cell and use the resulting nanotube (observed to have an indirect band gap of about $0.24$ eV) as the starting structure for subsequent simulations. We subject this tube to (both tensile and compressive) axial strains by varying $\tau$, and relax the atomic positions in each case. Figure \ref{fig:band_gap_chiral} shows the variation in the band gap of this tube at different values of axial strain. The effect of not including atomic relaxation after the tube is subjected to the strains is also shown. 
\begin{figure}[ht]
\centering
\subfloat{\scalebox{1.0}
{\includegraphics[width=0.49\textwidth]{./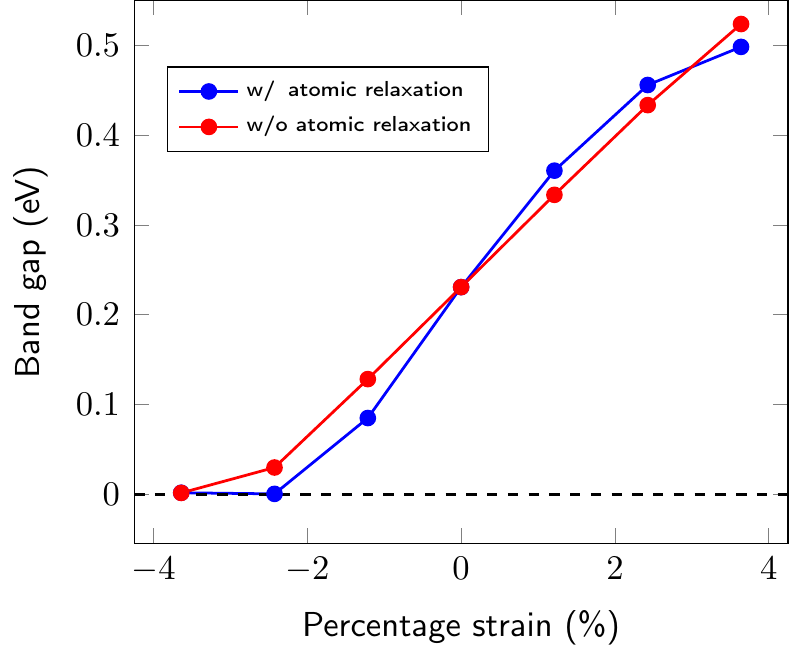}}
}
\caption{Effect of axial strain on the band gap of a chiral phosphorene nanotube ($\mathfrak{N} = 64$, $\alpha = 0.005$).}
\label{fig:band_gap_chiral}
\end{figure}
From the figure, we see that in the range of strains considered, compressive strains tend to reduce the band gap, while tensile strains appear to increase it. The tube appears to transition into a metallic state at about $4 \%$ compressive strain. Based on the nature of the plot and motivated by earlier work on chiral carbon nanotubes \citep{yang1999band, vercosa2010torsional}, we fitted a one-term Fourier series to this band gap data, and found that this produces a high quality fit for both the relaxed and unrelaxed cases. This suggests that like the case of carbon nanotubes, it might be possible to build (approximate) tight-binding type models of the band gap behavior \citep{endo2013carbon} for phosphorene nanotubes. This is a topic that warrants further investigation. Notably, the parameters in such models can be provided through high-quality ab initio simulations based on Helical DFT. 

The above {ab initio} simulations of the phosphorene nanotubes (both zigzag and chiral) suggests that these materials have highly adjustable electronic states. Therefore, they might find future applications as nanomaterials with tunable electronic/optical/transport properties. The simulations also highlight the possibility of using inhomogeneous strain modes for altering these properties, instead of homogeneous strain modes which are considered in the strain-engineering literature \citep{pereira2009strain, ghassemi2012field, fei2014strain} more commonly. Further investigations of this material and others, along these lines, is the scope of future work.

Finally, we find it worthwhile to point out that the phosphorene nanotube simulations described above (both electronic and mechanical)  would be very challenging, or  well-nigh impossible to carry out using conventional first principles techniques, even with the aid of massively parallel high-performance computing resources. In contrast, our MATLAB implementation of Helical DFT often  allows such simulations to be carried out within a few hours of simulation wall time on a desktop workstation or a single node of a supercomputing cluster, thus serving to reinforce the novelty and practical utility of the approach.
\section{Conclusions and Future Directions}
\label{sec:Conclusion}
In summary, we have presented Helical DFT --- a novel, systematic first principles simulation framework for  systems with helical symmetries. We have presented a derivation of the equations of Kohn-Sham theory, as they apply to the case of helical structures. Our derivation is systematic, self-contained and for the most part, mathematically rigorous. We have then solved these governing equations numerically by using a finite difference method based on helical coordinates. Using this working realization of the proposed approach, we have carried out simulations involving phosphorene nanotubes, extracted their mechanical behavior \textit{ab initio}, and identified changes in the electronic properties of this material as it undergoes twisting. 

Having laid out this foundational work on Helical DFT, we now discuss a number of avenues for  future investigation:
\begin{itemize}
\item {Development of an efficient spectral solution scheme:} While the current finite difference based implementation of Helical DFT  allows us to investigate a number of materials systems of interest, it also suffers a number of computational deficiencies (Section \ref{sec:Numerical_Implementation}). Following our earlier work on the ab initio simulations of cluster systems with arbitrary point group symmetries, we have already formulated, and are currently in the process of implementing a spectral scheme for solving the governing equations \citep{My_PhD_Thesis, agarwal2021spectral, My_Shivang_HelicES_paper}. This is expected to completely resolve the issues with the finite difference formulation and provide a more efficient numerical implementation, thus opening the door to the simulation of  more complex helical structures.
\item {Mechanistic simulations of quasi-one-dimensional systems:} The materials science literature is rich with examples of quasi-one-dimensional structures that have been discovered and/or synthesized through experimental means. The simulation tools developed here can be used to characterize these materials computationally. In particular,  the use of helical symmetry adapted ab initio molecular dynamics \citep{Hutter_abinitio_MD} can be used to study the mechanical behavior of these materials under axial and torsional strains, as well as their instabilities and defects \citep{Dayal_James_viscometry1, Dumitrica_James_OMD, mukherjee2020symmetry}.
\item {Helical Wannier states and the helical Berry phase:} The helical Bloch Floquet transform $\calU$ introduced in eqs.~\ref{eq:operatorU} and \ref{eq:operatorU_G2} allows us to consider the notion of helical Wannier states. These, like their periodic counterparts, are defined through the action of $\calU^{-1}$ on the helical Bloch states \citep{Odeh_Keller}, and can be expected to be exponentially localized for insulating systems \citep{kohn1959analytic}. Additionally, like the case of periodic systems, a geometric phase (i.e., the Berry phase \citep{berry1984quantal}) associated with the helical Bloch phase factor may be defined. These observations are likely to spur the development of novel computational analysis methods \citep{damle2018disentanglement, damle2019variational}  for helical materials, as well as the discovery of novel topological materials \citep{lindner2011floquet, ando2013topological}.
\item {Search for exotic materials properties and study of multi-physics coupling:} Since Helical DFT is a first principles simulation technique, it allows investigation of the effect of torsional deformations on a material's optical, electronic, magnetic and transport properties. Like the case of the phospherene nanotube considered in this work, it is possible that torsional or axial deformations in certain helical  structures might induce a significant redistribution of electronic states in the material, leading to the appearance of interesting collective properties. In this regard, the investigation of the nanoscale flexoelectric effect \citep{deng2014flexoelectricity, ahmadpoor2015flexoelectricity, kalinin2008electronic, nguyen2013nanoscale, dumitricua2002curvature, chandratre2012coaxing} in helical systems would be of particular interest, since owing to the quasi-one-dimensional nature of these materials, as well as the appearance of strain gradients in connection with torsional deformations, a significant polarization may appear along the axis of a nanostructure when twisted, thus leading to a strong flexoelectric effect. 
\item {Search for new phases of matter and coherent phase transformations:} Finally, as remarked in \citep{feng2018phase, feng2019phase}, the first principles techniques developed here might be instrumental in the discovery of novel phases of matter. Additionally, the discovery of transformations between such phases \citep{feng2019phase, ganor2016zig} --- particularly, coherent ones \citep{song2013enhanced, zhang2009energy} --- are likely to lead to new classes of active materials. The methods developed here are likely to be very useful in the characterization of energy barriers of such transformations and help in their design.
\end{itemize}
\begin{center}
---
\end{center}
\appendix
\section{Verifying that the operator \texorpdfstring{$\mathfrak{h}_{\eta}^{\textsf{aux}}$}{} in Eq.~\ref{eq:phi_equation_1} is symmetric}
\label{appendix:h_eta_symmetric}
The boundary conditions associated with the operator $\mathfrak{h}^{\textsf{aux}}_{\eta}$ introduced in Eq.~\ref{eq:phi_equation_1} are quite non-conventional. Here, we work through the steps of verifying that the operator is symmetric when augmented with these boundary conditions. For this, we consider smooth functions $f_1, f_2$ over $\calD$ (the interior of the fundamental domain $\calD_{\calG_1}$), obeying the same boundary conditions as $\phi$ in the proof of Theorem \ref{Thm:Existence_Helical_Bloch}, i.e.,  $f_{1,2}(\bfx) = f_{1,2}(\Upsilon_{\mathsf{h}} \circ \bfx)$, $\nabla f_{1,2}(\bfx)=\bfR_{2\pi\alpha}^{-1}\nabla f_{1,2}(\Upsilon_{\mathsf{h}} \circ \bfx)$ for $\bfx \in \partial\calD^{\,z=0}$ and $ f_{1,2}(\bfx) = 0$ for $\bfx \in \partial\calD^{\,r=R}$. Then we have:
\begin{align}
\nonumber
&\innprod{\mathfrak{h}^{\textsf{aux}}_{\eta} f_1}{f_2}{\Lpspc{2}{}{\calD}} =\\ &-\half \int_{\calD} \Delta f_1 \overline{f_2}\,d\bfx -  i\frac{2\pi\eta}{\tau}\int_{\calD} \pd{f_1}{x_3} \overline{f_2}\,d\bfx - \frac{2\pi^2\eta^2}{\tau^2}\int_{\calD}f_1 \overline{f_2} \,d\bfx + \int_{\calD} V f_1  \overline{f_2}\,d\bfx\;.
\label{eq:symmetric_Bloch_hamiltonian_1}
\end{align}
On using integration by parts \citep{Evans_PDE}, the first term on the right-hand side becomes:
\begin{align}
-\half \int_{\calD} \overline{f_2}\Delta f_1 \,d\bfx = \half \bigg(\int_{\calD} \nabla \overline{f_2}\cdot \nabla f_1 \,d\bfx - \int_{\partial \calD}\overline{f_2} \nabla f_1 \cdot d\bfs\bigg)\,,
\label{eq:symmetric_Bloch_hamiltonian_2}
\end{align}
where $d\bfs$ is the oriented surface measure. The boundary terms on the right-hand side of eq.~\ref{eq:symmetric_Bloch_hamiltonian_2} can be split as:
\begin{align}
\int_{\partial \calD}\overline{f_2} \nabla f_1 \cdot d\bfs =   \int_{\partial \calD^{\,r=R}}\!\overline{f_2} \nabla f_1 \cdot d\bfs + \int_{\partial \calD^{\,z=\tau}}\!\overline{f_2} \nabla f_1 \cdot d\bfs + \int_{\partial \calD^{\,z=0}}\!\overline{f_2} \nabla f_1 \cdot d\bfs\,.
\label{eq:symmetric_Bloch_hamiltonian_3}
\end{align}
The first term on the right-hand side of eq.~\ref{eq:symmetric_Bloch_hamiltonian_3} goes to zero due to the Dirichlet boundary condition $f_2(\bfx) = 0$ for $\bfx \in \partial \calD^{\,r=R}$. To evaluate the second and the third terms, collectively denoted as $I_1$ henceforth, we write the surface measures in terms of the local unit normals to get:
\begin{align}
\nonumber
I_1 &= \int_{\partial \calD^{\,z=\tau}}\!\overline{f_2} \nabla f_1 \cdot (\bfe_3\,d\text{s}) + \int_{\partial \calD^{\,z=0}}\!\overline{f_2} \nabla f_1 \cdot (-\bfe_3\, d\text{s}) \\ &= \int_{\partial \calD^{\,z=\tau}}\!\overline{f_2} \nabla f_1 \cdot (\bfe_3\,d\text{s}) - \int_{\partial \calD^{\,z=0}}\!\overline{f_2} \nabla f_1 \cdot (\bfe_3\, d\text{s})\,.
\label{eq:symmetric_Bloch_hamiltonian_4}
\end{align}
We now express $\bfy \in \calD^{\,z=\tau}$ as $\Upsilon_{\mathsf{h}} \circ \bfx$ with $\bfx \in \calD^{\,z=0}$, so that the first integral on the right-hand side of eq.~\ref{eq:symmetric_Bloch_hamiltonian_4} above can be rewritten as (with the dependence on $\bfx$ and $\bfy$ shown explicitly):
\begin{align}
\int_{\bfy \in \partial \calD^{\,z=\tau}}\!\overline{f_2(\bfy)} \nabla f_1(\bfy) \cdot \big(\bfe_3\,d\text{s}_{\bfy}\big)  = \int_{\bfx \in \partial \calD^{\,z=0}}\!\overline{f_2(\Upsilon_{\mathsf{h}} \circ \bfx)} \nabla f_1(\Upsilon_{\mathsf{h}} \circ \bfx) \cdot \big(\bfe_3\,d\text{s}_{\Upsilon_\mathsf{h} \circ \bfx}\big)\,.
\label{eq:symmetric_Bloch_hamiltonian_5}
\end{align}
Here, the notation $d\text{s}_{\bfy}$ is used to denote the surface measure centered at the point $\bfy$, and similarly $d\text{s}_{\Upsilon_{\mathsf{h}} \circ \bfx}$ denotes the surface measure centered at the point $\Upsilon_\mathsf{h} \circ \bfx$. Since $\Upsilon_{\mathsf{h}}$ is an isometry, $d\text{s}_{\Upsilon_{\mathsf{h}} \circ \bfx}$ has the same magnitude as $d\text{s}_{\bfx}$. Furthermore, the boundary conditions imply that $\overline{f_2(\Upsilon_{\mathsf{h}} \circ \bfx)} = \overline{f_2(\bfx)}$, and:
\begin{align}
\nonumber
\nabla f_1(\Upsilon_{\mathsf{h}} \circ \bfx) \cdot \bfe_3 &= \big(\bfR_{2\pi\alpha}\bfR_{2\pi\alpha}^{-1}\nabla f_1(\Upsilon_{\mathsf{h}} \circ \bfx)\big) \cdot \bfe_3 \\
&= \big(\bfR_{2\pi\alpha}^{-1}\nabla f_1(\Upsilon_{\mathsf{h}} \circ \bfx)\big) \cdot \big(\bfR_{2\pi\alpha}^{T}\bfe_3\big) = \nabla f_1(\bfx) \cdot \bfe_3\,,
\label{eq:symmetric_Bloch_hamiltonian_6}
\end{align}
since $\bfR_{2\pi\alpha}$ (and hence $\bfR_{2\pi\alpha}^{T}$) has axis $\bfe_3$. Combining the above results, we arrive at:
\begin{align}
I_1 &= \int_{\partial \calD^{\,z=0}}\!\overline{f_2(\bfx)} \nabla f_1(\bfx) \cdot \big(\bfe_3\,d\text{s}_{\bfx}\big) - \int_{\partial \calD^{\,z=0}}\!\overline{f_2(\bfx)}\!\nabla f_1(\bfx) \cdot \big(\bfe_3\,d\text{s}_{\bfx}\big) = 0\,,
\label{eq:symmetric_Bloch_hamiltonian_7}
\end{align}
and therefore, eq.~\ref{eq:symmetric_Bloch_hamiltonian_2} reduces to:
\begin{align}
-\half \int_{\calD}\!\overline{f_2}\Delta f_1 \,d\bfx = \half \bigg(\int_{\calD}\!\nabla \overline{f_2}\cdot \nabla f_1 \,d\bfx\bigg)\,.
\label{eq:symmetric_Bloch_hamiltonian_8}
\end{align}
The second term on the right-hand side of eq.~\ref{eq:symmetric_Bloch_hamiltonian_1} is:
\begin{align}
-i\frac{2\pi\eta}{\tau}\bigg[\int_{\calD}\pd{f_1}{x_3}\overline{f_2}\,d\bfx \bigg ]= -i\frac{2\pi\eta}{\tau}\bigg[\int_{\calD}\!\bigg(\pd{f_1\overline{f_2}}{x_3} - f_1\pd{\overline{f_2}}{x_3} \bigg)\,d\bfx\,\bigg]\,.
\label{eq:symmetric_Bloch_hamiltonian_9}
\end{align}
We now consider the first integral on the right-hand side of eq.~\ref{eq:symmetric_Bloch_hamiltonian_9}, rewrite the integrand in divergence form, and use the Divergence Theorem \citep{Evans_PDE} to get:
\begin{align}
\int_{\calD}\!\pd{f_1\overline{f_2}}{x_3}\,d\bfx = \mathlarger{\int}_{\calD}\!\nabla \cdot 
\begin{pmatrix}
0 \\ 0 \\ f_1\overline{f_2}
\end{pmatrix}\,d\bfx\, =  \mathlarger{\int}_{\partial\calD}\!\begin{pmatrix}
0 \\ 0 \\ f_1\overline{f_2} 
\end{pmatrix} \cdot d\bfs\,,
\label{eq:symmetric_Bloch_hamiltonian_10}
\end{align}
with $d\bfs$ denoting the oriented surface measure (as earlier). The surface integral can be split as:
\begin{align}
\nonumber
&\mathlarger{\int}_{\partial\calD}\!\begin{pmatrix}
0 \\ 0 \\ f_1\overline{f_2} 
\end{pmatrix} \cdot d\bfs\, 
\\= &\mathlarger{\int}_{\partial\calD^{\,r=R}}\!\begin{pmatrix} 
0 \\ 0 \\ f_1\overline{f_2} 
\end{pmatrix} \cdot d\bfs + 
\mathlarger{\int}_{\partial\calD^{\,z=\tau}}\!\begin{pmatrix} 
0 \\ 0 \\ f_1\overline{f_2} 
\end{pmatrix} \cdot d\bfs +
\mathlarger{\int}_{\partial\calD^{\,z=0}}\!\begin{pmatrix} 
0 \\ 0 \\ f_1\overline{f_2} 
\end{pmatrix} \cdot d\bfs\,,
\label{eq:symmetric_Bloch_hamiltonian_11}
\end{align}
from which the first term on the right-hand side  vanishes due to Dirichlet boundary conditions on $\partial\calD^{\,r=R}$. For the second and the third terms, denoted collectively as $I_2$ henceforth, we write the oriented surface measures in terms of the local unit normals to get:
\begin{align}
\nonumber
I_2 = &\mathlarger{\int}_{\partial\calD^{\,z=\tau}}\!\begin{pmatrix} 
0 \\ 0 \\ f_1\overline{f_2} 
\end{pmatrix} \cdot \bfe_3\,d\text{s} -
\mathlarger{\int}_{\partial\calD^{\,z=0}}\!\begin{pmatrix} 
0 \\ 0 \\ f_1\overline{f_2} 
\end{pmatrix} \cdot \bfe_3\,d\text{s} \\
=&\int_{\partial\calD^{\,z=\tau}}\! f_1\overline{f_2}\,d\text{s} - \int_{\partial\calD^{\,z=0}}\! f_1\overline{f_2}\,d\text{s}
\label{eq:symmetric_Bloch_hamiltonian_12}
\end{align}
As earlier, we express $\bfy \in \calD^{\,z=\tau}$ as $\Upsilon_{\mathsf{h}} \circ \bfx$ with $\bfx \in \calD^{\,z=0}$, so that the first integral on the right-hand side of eq.~\ref{eq:symmetric_Bloch_hamiltonian_12} above can be rewritten by use of the boundary conditions as (the dependence on $\bfx$ and $\bfy$ has been shown explicitly):
\begin{align}
\label{eq:symmetric_Bloch_hamiltonian_13}
\nonumber
\int_{\bfy \in \partial\calD^{\,z=\tau}}\! f_1(\bfy)\overline{f_2}(\bfy)\,d\text{s}_{\bfy} &= \int_{\bfx \in \partial\calD^{\,z=0}}\!f_1(\Upsilon_{\mathsf{h}}\circ \bfx)\overline{f_2}(\Upsilon_{\mathsf{h}}\circ\bfy)\,d\text{s}_{\Upsilon_\mathsf{h}\circ\bfx} \\
&=  \int_{\bfx \in \partial\calD^{\,z=0}}\!f_1(\bfx)\overline{f_2}(\bfx)\,d\text{s}_\bfx\,.
\end{align}
It follows that,
\begin{align}
I_2 = \int_{\partial\calD^{\,z=0}}\! f_1\overline{f_2}\,d\text{s} - \int_{\partial\calD^{\,z=0}}\! f_1\overline{f_2}\,d\text{s} = 0\,,
\label{eq:symmetric_Bloch_hamiltonian_14}
\end{align}
and therefore, the second term on the right-hand side of  eq.~\ref{eq:symmetric_Bloch_hamiltonian_1} is:
\begin{align}
-i\frac{2\pi\eta}{\tau}\bigg[\int_{\calD}\pd{f_1}{x_3}\overline{f_2}\,d\bfx \bigg ]= i\frac{2\pi\eta}{\tau}\bigg[\int_{\calD}\!f_1\pd{\overline{f_2}}{x_3}\,d\bfx\,\bigg]\,.
\label{eq:symmetric_Bloch_hamiltonian_15}
\end{align}
Combining eqs.~\ref{eq:symmetric_Bloch_hamiltonian_1}, \ref{eq:symmetric_Bloch_hamiltonian_8} and \ref{eq:symmetric_Bloch_hamiltonian_15}, we get:
\begin{align}
\nonumber
&\innprod{\mathfrak{h}^{\textsf{aux}}_{\eta} f_1}{f_2}{\Lpspc{2}{}{\calD}} =\\ &\half \int_{\calD} \nabla \overline{f_2} \cdot \nabla {f_1}\,d\bfx +  i\frac{2\pi\eta}{\tau}\int_{\calD} f_1 \pd{\overline{f_2}}{x_3}\,d\bfx - \frac{2\pi^2\eta^2}{\tau^2}\int_{\calD}f_1 \overline{f_2} \,d\bfx + \int_{\calD} V f_1  \overline{f_2}\,d\bfx\;.
\label{eq:symmetric_Bloch_hamiltonian_16}
\end{align}
On the other hand, we can express $\innprod{f_1}{\mathfrak{h}^{\textsf{aux}}_{\eta} f_2}{\Lpspc{2}{}{\calD}}$ as:
\begin{align}
\nonumber
&\int_{\calD} f_1 \overline{(-\half \Delta f_2)} \,d\bfx + \int_{\calD} f_1 \overline{\bigg(-i\frac{2\pi\eta}{\tau}\pd{{f_2}}{x_3}\bigg)}\,d\bfx +  \int_{\calD}f_1 \overline{\bigg(-\frac{2\pi^2\eta^2}{\tau^2} f_2\bigg)} \,d\bfx + \int_{\calD}f_1  \overline{\big(V f_2\big)}\,d\bfx\;\\
&= -\half \int_{\calD} f_1  \Delta \overline{f_2}\,d\bfx +  i\frac{2\pi\eta}{\tau}\int_{\calD} f_1 \pd{\overline{f_2}}{x_3}\,d\bfx - \frac{2\pi^2\eta^2}{\tau^2}\int_{\calD}f_1 \overline{f_2} \,d\bfx + \int_{\calD} V f_1  \overline{f_2}\,d\bfx\;.
\label{eq:symmetric_Bloch_hamiltonian_17}
\end{align}
Integrating by parts the first term on the right-hand side of eq.~\ref{eq:symmetric_Bloch_hamiltonian_17}, we have:
\begin{align}
\label{eq:symmetric_Bloch_hamiltonian_18}
-\half \int_{\calD}{f_1}\Delta \overline{f_2} \,d\bfx = \half \bigg(\int_{\calD} \nabla {f_1}\cdot \nabla \overline{f_2} \,d\bfx - \int_{\partial \calD} f_1 \nabla \overline{f_2} \cdot d\bfs\bigg)\,.
\end{align}
The second term on the right-hand side of eq.~\ref{eq:symmetric_Bloch_hamiltonian_18} can be sent to zero by application of the boundary conditions (using a procedure similar to the one outlined in eqs.~\ref{eq:symmetric_Bloch_hamiltonian_3} - \ref{eq:symmetric_Bloch_hamiltonian_7}). This leaves us with:
\begin{align}
\nonumber
&\innprod{f_1}{\mathfrak{h}^{\textsf{aux}}_{\eta} f_2}{\Lpspc{2}{}{\calD}} \\\nonumber &=
\half \int_{\calD} \nabla {f_1}\cdot \nabla \overline{f_2} + i\frac{2\pi\eta}{\tau}\int_{\calD} f_1 \pd{\overline{f_2}}{x_3}\,d\bfx -\frac{2\pi^2\eta^2}{\tau^2}\int_{\calD}f_1 \overline{f_2} \,d\bfx + \int_{\calD} V f_1  \overline{f_2}\,d\bfx\;\\
&= \innprod{\mathfrak{h}^{\textsf{aux}}_{\eta} f_1}{f_2}{\Lpspc{2}{}{\calD}}\,, \label{eq:symmetric_Bloch_hamiltonian_19}
\end{align}
which implies that the operator $\mathfrak{h}^{\textsf{aux}}_{\eta}$ is symmetric on smooth functions obeying the  boundary conditions outlined above. Since such functions are dense in the domain of $\mathfrak{h}^{\textsf{aux}}_{\eta}$ (the boundary conditions being interpreted in the trace sense in that case), the symmetry of the operator follows. 
\section{Direct integral decomposition of the single electron Hamiltonian}
\label{appendix:direct_integrals}
The formalism of direct integrals \citep{Reed_Simon4, Garrett_Spectral_Theorem} generalizes the idea of direct sums in Hilbert spaces and it allows us to make the idea of diagonalizing or block-diagonalizing an unbounded self-adjoint operator mathematically precise. Here we provide a brief summary of some of the key ideas associated with direct integrals. We then demonstrate how the helical  Bloch-Floquet transform can be employed to effectively ``block-diagonalize'' the single electron Hamiltonian in the sense of direct integrals. This can be viewed as a natural extension of the block-diagonal decomposition of the single-electron Hamiltonian in the sense of direct sums, that applies when structures associated with finite symmetry groups are considered \citep{My_PhD_Thesis, banerjee2016cyclic}.

If $\calH'$ is a (separable) Hilbert space and $(M, \mu)$ is a ($\sigma$-finite) measure space, then the Hilbert space $\calH = \mathsf{L}^2(M, d\mu;\calH')$ of $\calH'$ valued functions which are square integrable (against the measure $\mu$), is defined to be a (constant fiber) direct integral\footnote{This definition subsumes the notion of direct sums: if $\mu$ is a sum of point measures at a finite set of points $s_1,s_2,\ldots,s_K$, then any $f \in \mathsf{L}^2(M, d\mu;\calH')$ is determined by the set of $K$ values $\{f(s_1),f(s_2),\ldots,f(s_K)\}$. Thus, $\calH = \mathsf{L}^2(M, d\mu;\calH')$ is isomorphic to the direct sum $\displaystyle\bigoplus_{i=1}^{K} \calH'$.}, and we denote this relationship as:
\begin{align}
\label{eq:calH}
\calH = \int_{M}^{\oplus} \calH'\,d\mu\,.
\end{align}
Vector addition and scalar multiplication are defined pointwise in this space, i.e., for $s \in M$, $f_1,f_2 \in \calH$ and $z\in \cz$, we have $\big(f_1 + f_2\big)(s) = f_1(s) + f_2(s)$ and $\big(z\,f\big)(s)=z\,f(s)$, while the inner product is defined as:
\begin{align}
\label{eq:innprod_direct_integral}
\innprod{f_1}{f_2}{\calH} = \int_{M}\innprod{f_1(s)}{f_2(s)}{\calH'}\,d\mu\,. 
\end{align}
The above definitions allows us to decompose operators on $\calH$ in terms of operators on $\calH'$ in a particular sense. Let $\calL(\calH')$ denote the set of the bounded linear operators on the space $\calH'$. A bounded linear operator $\calA$ on the space $\displaystyle \calH \bigg(= \int_{M}^{\oplus} \calH'\,d\mu \bigg)$ is said to be decomposed in the sense of a direct integral decomposition if there exists a function $A(\cdot)$ in $\mathsf{L}^{\infty}\big(M, d\mu;\calL(\calH')\big)$ such that\footnote{A function $A(\cdot)$ from $M$ to $\calL(\calH')$ is called measurable if for every $f_1,f_2 \in \calH'$, the map $s \mapsto \innprod{f_1}{A(s) f_2}{\calH'}$ is measurable \citep{Reed_Simon4}.} for all $f \in \calH$, the relationship $(\calA f)(s) = A(s)f(s)$ holds. We then call $\calA$ \textit{decomposable}, we refer to the operators ${A}(s)$ as the \textit{fibers} of $\calA$, and we denote this relationship as:
\begin{align}
\calA = \int_{M}^{\oplus} A(s)\,d\mu\,.
\end{align}

Conversely,  it is also possible to ``build'' operators on the space $\calH$, starting from operators on the space $\calH'$. Specifically, given a measurable function $A(\cdot)$ from $M$ to the set of (bounded or unbounded) self-adjoint operators\footnote{A function $A(\cdot)$ from $M$ to the set of (bounded or unbounded) self-adjoint operators on $\calH'$ is called measurable if the function $(A(\cdot) + i)^{-1}$ is measurable \citep{Reed_Simon4}.} the on $\calH'$, we may define an operator $\calA$ on $\calH$ with domain:
\begin{align}
\label{eq:domainA}
\text{Dom.}(\calA) = \bigg\{f\in \calH: f(s) \in \text{Dom.}(A(s))\;\text{a.e.}; \int_{M} \norm{A(s)f(s)}{\calH'}^2\,d\mu < \infty \bigg \}\,,
\end{align}
as:
\begin{align}
\label{eq:operatorA}
(\calA f)(s) = A(s)f(s)\,.
\end{align}
As before, we will use the notation:
\begin{align}
\label{eq:operatorA_direct_integral}
\calA = \int_{M}^{\oplus} A(s)\,d\mu\,,
\end{align}
to denote the above relationship between the operator $A$ and its fibers $A(s)$.

With these definitions in place, we now apply the above apparatus to carry out a suitable decomposition of the single-electron Hamiltonian associated with a helical structure. We will consider the case of a structure associated with a helical group $\calG_1$ that is generated by a single element $\Upsilon_\mathsf{h}$. As discussed in \ref{subsec:Kohn_Sham_Helical}, the single-electron Hamiltonian in this case is the operator $\hamil = - \half \Delta + V(\bfx)$ over the space $\widetilde{\calH} = \Lpspc{2}{}{\calC}$. The relevant boundary condition is $\psi(\bfx) = 0$ for $\bfx \in \partial\calC$ and the potential $V(\bfx)$ is group invariant, i.e., $V(\bfx) = V(\widetilde{\Upsilon} \circ \bfx)$ for every $\widetilde{\Upsilon} \in \calG_1$. 

We may choose $M = [-\half, \half) = \mathfrak{I}$, $\calH' = \Lpspc{2}{}{\calD}$, $\mu$ as the Lebesgue measure on $\mathfrak{I}$ (denoted as $d\eta$ henceforth). Then,
\begin{align}
\calH = \int_{\mathfrak{I}}^{\oplus} \calH'\,d\eta\,.
\label{eq:calH_eta}
\end{align}
is the space $\Lpspc{2}{}{\calD \times \mathfrak{I}}$. The helical Bloch-Floquet transform $\calU$ introduced in eq.~\ref{eq:operatorU} allows us to map functions and operators in between the spaces $\widetilde{\calH} = \Lpspc{2}{}{\calC}$ and $\calH =  \Lpspc{2}{}{\calD \times \mathfrak{I}}$. Specifically, for each $\eta \in \mathfrak{I}$, let $\hamil_{\eta}$ be the restriction of $\hamil$ to $\calD$ along with the boundary conditions $\psi(\Upsilon_{\mathsf{h}} \circ \bfx)=e^{-i2\pi\eta}\,\psi(\bfx)$, $\bfR_{2\pi\alpha}^{-1}\nabla\psi(\Upsilon_{\mathsf{h}} \circ \bfx)=e^{-i2\pi\eta}\,\nabla\psi(\bfx)$ for $\bfx \in \partial\calD^{\,z=0}$ and, $\psi(\bfx) = 0$ for $\bfx \in \partial\calD^{\,r=R}$. Then, the operator $\hamil$ can be decomposed into the set of operators $\big\{\hamil_{\eta}\big\}_{\eta \in \mathfrak{I}}$ (referred to as the fibers of $\hamil$) using the direct integral decomposition, in the sense:
\begin{align}
\label{eq:Bloch_block_diagonal}
\calU\,\hamil\,\calU^{-1} =  \int_{\mathfrak{I}}^{\oplus} \hamil_{\eta}\,d\eta\,,
\end{align}
To demonstrate this result, we will establish direct integral decompositions of the Laplacian and the potential operator individually as operators on $\Lpspc{2}{}{\calC}$, i.e., we will show successively that:
\begin{align}
\label{eq:Bloch_block_diagonal_Laplacian}
\calU\,\big(-\Delta\big)\,\calU^{-1} =  \int_{\mathfrak{I}}^{\oplus} (-\Delta)_{\eta}\,d\eta\,,
\end{align}
and,
\begin{align}
\label{eq:Bloch_block_diagonal_Potential}
\calU\, V \,\calU^{-1} =  \int_{\mathfrak{I}}^{\oplus} V_{\eta}\,d\eta\,.
\end{align}
The sought result will then follow from Theorem XIII.85g of \citep{Reed_Simon4}.

{To perform the direct integral decomposition of the Laplacian on $\Lpspc{2}{}{\calC}$, subject to zero Dirichlet boundary condition for $\bfx \in \calC$, we first define the fibers $(-\Delta)_{\eta}$ as the restriction of $-\Delta$ to $\calD$ along with the boundary conditions $p(\Upsilon_{\mathsf{h}} \circ \bfx)=e^{-i2\pi\eta}\,p(\bfx)$, $\bfR_{2\pi\alpha}^{-1}\nabla p(\Upsilon_{\mathsf{h}} \circ \bfx)=e^{-i2\pi\eta}\,\nabla p(\bfx)$ for $\bfx \in \partial\calD^{\,z=0}$ and, $p(\bfx) = 0$ for $\bfx \in \partial\calD^{\,r=R}$. Now, let $\calB$ be the operator on the right-hand side of eq.~\ref{eq:Bloch_block_diagonal_Laplacian} acting on the space $\calH =  \Lpspc{2}{}{\calD \times \mathfrak{I}}$, and let $f$ be a Schwartz class function that obeys $f(\bfx) = 0$ for $\bfx \in \calC$. Note that since $f$ is a Schwartz class function, so are all its derivatives. With this setup, it suffices to show that $\calU f \in \text{Dom.}(\calB)$, and that $\calU(-\Delta f) = \calB(\calU f)$, in order to establish eq.~\ref{eq:Bloch_block_diagonal_Laplacian} (see e.g. Theorem XIII.87 and the lemma following that theorem in \citep{Reed_Simon4}).

By definition, $\calU f$ is given as:
\begin{align}
\label{eq:operatorU_again}
(\calU f)(\bfx,\eta) = \sum_{m\in \gz} {f}(\Upsilon^m_{\mathsf{h}} \circ \bfx)\,e^{i 2 \pi m\eta}\,,
\end{align}
and due to the properties of $f$, it is a smooth function in $\bfx$ over $\calD$ that obeys $(\calU f)(\bfx,\eta) = 0$ for $\bfx \in \partial\calD^{\,r=R}$ for any $\eta \in \mathfrak{I}$. Furthermore, by evaluating the above expression at $\Upsilon_{\mathsf{h}} \circ \bfx$, it is clear that (also see Footnote  \ref{footnote:U_translation}),the relationship:  
\begin{align}
\label{eq:calU_translation}
(\calU f)(\Upsilon_{\mathsf{h}} \circ \bfx, \eta)=e^{-i2\pi\eta}\,(\calU f)(\bfx, \eta)
\end{align}
holds. Computing the gradient on both sides, we get:
\begin{align}
\bfR_{2\pi\alpha}^{-1}\nabla(\calU f)(\Upsilon_{\mathsf{h}} \circ \bfx, \eta)=e^{-i2\pi\eta}\,\nabla(\calU f)(\bfx, \eta)\,.
\label{eq:grad_calU_translation}
\end{align}
Now, evaluating the above expressions at $\bfx \in \partial\calD^{\,z=0}$, we see that $(\calU f)(\bfx,\eta)$ obeys all the boundary conditions necessary for it to be in $\text{Dom.}(\calB)$. Next, we recall that the Laplacian is invariant under isometries\footnote{{Specifically, if $\Upsilon$ is an isometry on $\rz^3$, $u(\bfx)$ is a $\mathsf{C}^2$ function and $v(\bfx) = u(\Upsilon \circ \bfx)$, then it holds that $\Delta v(\bfx) = \Delta u|_{\Upsilon \circ \bfx}$.}}. Consequently, it holds that:
\begin{align}
\nonumber
\Delta\big((\calU f)(\bfx,\eta)\big) &= \sum_{m\in \gz} \Delta \big({f}(\Upsilon^m_{\mathsf{h}} \circ \bfx)\big)\,e^{i 2 \pi m\eta} \\
&= \sum_{m\in \gz} \Delta{f}\big|_{\Upsilon^m_{\mathsf{h}} \circ \bfx}\,e^{i 2 \pi m\eta} = \big(\calU(\Delta f)\big)(\bfx,\eta)\,,
\end{align}
and so, $\calU(-\Delta f) = \calB(\calU f)$, as required. This establishes\footnote{{We would like to thank an anonymous reviewer for helping fix certain technical aspects of this presentation.}}  eq.~\ref{eq:Bloch_block_diagonal_Laplacian}.}

Next, to establish \ref{eq:Bloch_block_diagonal_Potential}, we set the fibered operator $V_{\eta}$ on $\Lpspc{2}{}{\calD}$ as:
\begin{align}
(V_{\eta}f)(\bfx) = V(\bfx)f(\bfx)\,.
\end{align}
Then for any Schwartz class function $f$ over $\calC$, we have that:
\begin{align}
\calU\,(Vf)(\bfx;\eta) = \sum_{m\in \gz} {V}(\Upsilon^m_{\mathsf{h}} \circ \bfx)\,{f}(\Upsilon^m_{\mathsf{h}} \circ \bfx)\,e^{i 2 \pi m\eta}\,,
\end{align}
which, using the invariance of the potential under $\calG_{1}$, gives:
\begin{align}
\calU\,(Vf)(\bfx;\eta) = {V}(\bfx)\sum_{m\in \gz}{f}(\Upsilon^m_{\mathsf{h}} \circ \bfx)\,e^{i 2 \pi m\eta} = {V}(\bfx)\,\calU(\bfx;\eta) = V_{\eta}(\calU\,f)(\bfx,\eta)\,.
\end{align}
Thus \ref{eq:Bloch_block_diagonal_Potential} is established.

With \ref{eq:Bloch_block_diagonal} established, some conclusions can be immediately drawn regarding the structure of the spectrum of $\hamil$. Using Theorem XIII.85 in \citep{Reed_Simon4} for example, it follows that the collection of eigenvalues of the operators $\hamil_{\eta}$ together form the spectrum of $\hamil$, i.e., $\Lambda = \mathsf{spec.}(\hamil)$. Furthermore, Theorem XIII.86 of \citep{Reed_Simon4} implies that $\hamil$ has a purely continuous spectrum.

Finally, a result made use of in Section \ref{subsubsec:Governing_Equations} is that if $\calA$ is an operator that is invariant under $\calG_1$, i.e., it commutes with the unitary operators in the set: \begin{align}
\mathcal{T} = \big\{T_{\widetilde{\Upsilon}}: T_{\widetilde{\Upsilon}}f(\bfx) = f(\widetilde{\Upsilon}^{-1}\circ\bfx)\big\}_{\widetilde{\Upsilon} \in \calG_1}\,,
\end{align}
and it is is locally trace-class, we may assign meaning to the trace per unit fundamental domain as:
\begin{align}
\underline{\text{Tr.}}[\calA] = \int_{\mathfrak{I}}\!\text{Tr.}[\calA_{\eta}]\,d\eta\,.
\end{align}
Discussion and rigorous proofs of this result appear in \citep{de2011exponentially, panati2009geometric} for the case of periodic symmetries. To see why this result also holds true for the case of a helical symmetry group, we denote $\mathbb{I}_{\calD}$ as the indicator function of the fundamental domain. Let $\{f_j:j\in \nz \}$ be an orthonormal basis of $\Lpspc{2}{}{\calC}$. Due to the fact that $\calA$ is invariant under the group $\calG_1$, we may write:
\begin{align}
\calU\,\calA\,\calU^{-1} =  \int_{\mathfrak{I}}^{\oplus} \calA_{\eta}\,d\eta\,,
\label{eq:direct_integral_A_again}
\end{align}
Now, the trace per unit fundamental domain is:
\begin{align}
\underline{\text{Tr.}}[\calA] = \sum_{j=1}^{\infty}\innprod{\mathbb{I}_{\calD}\,\calA\,\mathbb{I}_{\calD}\,f_j}{f_j}{\Lpspc{2}{}{\calC}}\, = \sum_{j=1}^{\infty}\innprod{\calA\,\mathbb{I}_{\calD}\,f_j}{\mathbb{I}_{\calD}\,f_j}{\Lpspc{2}{}{\calC}}
\end{align}
Using eq.~\ref{eq:direct_integral_A_again}, this becomes:
\begin{align}
\nonumber
\underline{\text{Tr.}}[\calA] &= \sum_{j=1}^{\infty}\bigg\langle{\calU^{-1}\bigg( \int_{\mathfrak{I}}^{\oplus} \calA_{\eta}\,d\eta\bigg)\calU\,\mathbb{I}_{\calD}\,f_j},{\mathbb{I}_{\calD}\,f_j}\bigg\rangle_{\Lpspc{2}{}{\calC}}\\
&= \sum_{j=1}^{\infty}\bigg\langle{\bigg( \int_{\mathfrak{I}}^{\oplus} \calA_{\eta}\,d\eta\bigg)\calU\,\mathbb{I}_{\calD}\,f_j},{\calU\,\mathbb{I}_{\calD}\,f_j}\bigg\rangle_{\Lpspc{2}{}{\calC}}\,,
\end{align}
which, upon using \ref{eq:operatorA} and \ref{eq:innprod_direct_integral}, can be written as:
\begin{align}
\underline{\text{Tr.}}[\calA] = \sum_{j=1}^{\infty} \int_{\mathfrak{I}}\innprod{\calA_{\eta}\,\calU\,\mathbb{I}_{\calD}\,f_j}{\calU\,\mathbb{I}_{\calD}\,f_j}{\Lpspc{2}{}{\calD}}\,d\eta
\end{align}
Now, recognizing that the indicator function $\mathbb{I}_{\calD}$ acts as a projection operator on $\Lpspc{2}{}{\calC}$, we see that $\calU\,\mathbb{I}_{\calD}\,f_j$ are simply basis functions of 
$\Lpspc{2}{}{\calD}$ for every $\eta \in \mathfrak{I}$. Denoting $\tilde{f}_j(\bfx;\eta) = (\calU\,\mathbb{I}_{\calD}\,f_j)(\bfx;\eta)$ and exchanging the summation and the integral, the above equation can be re-written as:
\begin{align}
\underline{\text{Tr.}}[\calA] =   \int_{\mathfrak{I}}  \sum_{j=1}^{\infty} \innprod{\calA_{\eta}\tilde{f}_j(\cdot; \eta)}{\tilde{f}_j(\cdot; \eta)}{\Lpspc{2}{}{\calD}} \,d\eta = \int_{\mathfrak{I}}\!\text{Tr.}[\calA_{\eta}]\,d\eta\,,
\end{align}
as claimed.
\section{Helical structure associated with a group generated by two elements: Expressions for important physical quantities and governing equations}
\label{appendix:two_element_group_expressions}
For a helical structure associated with a group $\calG_2$ that is generated by two group elements (i.e., a screw transformation of the form $(\bfR_{2\pi m \alpha} | m\tau\bfe_3)$ and a pure rotation of the form $(\bfR_{n\Theta} | \mathbf{0})$ with $\Theta = \frac{2\pi}{\mathfrak{N}}$, and both rotations $\bfR_{2\pi m \alpha}, \bfR_{n\Theta}$ having $\bfe_3$ as the common axis) the form of the governing equations as well as the expressions for the various quantities of interest can be easily deduced by using the discussion in Section \ref{subsec:Kohn_Sham_Helical} as a starting point, and using the following rules of substitution. The helical Bloch phase factors are of the form $\displaystyle e^{-i2\pi (m\eta + \frac{n\nu}{\mathfrak{N}})}$ with $n,\nu \in \{0,1,\ldots,\mathfrak{N}-1\}$ and $\eta \in \mathfrak{I}$, instead of being $e^{-i2\pi m\eta}$ only. Consequently, helical Bloch states and the helical bands are labeled as $\psi_j(\bfx;\eta,\nu)$ and $\lambda_j(\eta,\nu)$, respectively. Integrals over  $\eta \in \mathfrak{I} = [-\half,\half)$ need to be replaced with integrals over $\eta \in \mathfrak{I}$ \textit{and}  summations over $\nu$ of the form $\displaystyle \frac{1}{\mathfrak{N}}\sum_{\nu = 0}^{\mathfrak{N}-1}$. Taking into account the effect of all the symmetry operations of the group (i.e., computing the group orbit of a point for example) amounts to summing over all group elements of the form $\Upsilon^{m}_{\mathsf{h}}\bullet\Upsilon^{n}_{\mathsf{c}} =  (\bfR_{2\pi m \alpha + n\Theta} | m\tau\bfe_3)$ with $m \in \gz$ and $n \in \{0,1,\ldots,\mathfrak{N}-1\}$. Also, in the expression for the forces (eq.~\ref{eq:force_expression_2}), the rotation matrix $(\bfR_{2\pi m \alpha})^{-1}$ needs to be replaced with $(\bfR_{2\pi m \alpha + n\Theta})^{-1}$. Finally, spatial integrals over the fundamental domain $\calD_{\calG_1}$ (or its interior $\calD$) have to be replaced by integrals over the fundamental domain $\calD_{\calG_2}$ (or its interior $\widetilde{\calD}$). The reciprocal space (or more specifically, the Brillouin zone of the reciprocal space) for the symmetry group $\calG_2$ will be denoted by the set $\mathfrak{B} = \displaystyle\mathfrak{I} \times \{0,{1},2,\ldots,{\mathfrak{N}-1}\}$.

The mathematical reasons behind the above substitution rules are as follows: The structure of the helical group generated by two elements is such that it can be expressed as the direct product of 
the helical group generated by a single element and a cyclic group. Therefore, the Bloch theorem for a structure associated with such a symmetry group can be established by first using Theorem \ref{Thm:Existence_Helical_Bloch}, and then using an appropriate version\footnote{Theorem 2.6 from \citep{banerjee2016cyclic} was established for the single electron Hamiltonian operator. However, it can be easily extended to any other linear elliptic self-adjoint operator that commutes the cyclic symmetries of the system, e.g. $\mathfrak{h}^{\text{aux}}_{\eta}$ defined in eq.~\ref{eq:phi_equation_1}, with $V(\bfx)$ for that operator invariant under $\calG_2$.} of Theorem 2.6 from \citep{banerjee2016cyclic}. The characters (i.e., one-dimensional complex irreducible representations) of the helical group generated by two elements can be obtained by multiplying out the  characters of the helical group generated by a single element and the cyclic group. This leads to the helical Bloch states in this case to obey the condition:
\begin{align}
\label{eq:helical_Bloch_two_generator}
\psi_j\big((\Upsilon^{m}_{\mathsf{h}}\bullet\Upsilon^{n}_{\mathsf{c}})\circ\bfx;\eta,\nu\big) = e^{-i2\pi (m\eta + \frac{n\nu}{\mathfrak{N}})} \psi_j(\bfx;\eta,\nu)\,.
\end{align}
The completeness of these states follows by a combination of Theorem \ref{Thm:Completeness_Helical_Bloch} and the properties of the Peter-Weyl projectors \citep{Folland_Harmonic, Barut_Reps, banerjee2016cyclic} for the cyclic group. This completeness result paves the way for a suitable Helical Bloch-Floquet transform $\calU:\Lpspc{2}{}{\calC} \to \Lpspc{2}{}{\widetilde{\calD}\times\mathfrak{B}}$:
\begin{align}
\label{eq:operatorU_G2}
(\calU f)(\bfx,\eta, \nu)=\sum_{m\in \gz} \bigg(\frac{1}{\mathfrak{N}}\sum_{n = 0}^{\mathfrak{N}-1}{f}\big((\Upsilon^{m}_{\mathsf{h}}\bullet\Upsilon^{n}_{\mathsf{c}})\circ\bfx\big)\,e^{i 2 \pi (m\eta + \frac{n\nu}{\mathfrak{N}})}\,\bigg)\,,
\end{align}
as well as a direct integral representation of the single electron Hamiltonian. From these, the key quantities of interest and the governing equations may be systematically deduced (as demonstrated for the case of the helical group generated by a single element in Section \ref{subsec:Kohn_Sham_Helical}).

Keeping the above discussion in mind, we now present the final mathematical expressions for key quantities of interest for a helical structure generated by $\calG_2$. For the sake of definiteness, we assume that the helical structure, is embedded in the infinite cylinder $\calC$ (eq.~\ref{eq:Infinite_Cylinder}), and as in eq.~\ref{eq:Cylindrical_Sector}, we let $\calD_{\calG_2} = \big\{(r,\vartheta,z):0 \leq r<R , 0 \leq \vartheta < \frac{2\pi}{\mathfrak{N}} , 0 \leq z < \tau\big\}$ denote a cylindrical sector that acts as the fundamental domain of the helical structure. Let $\widetilde{D}$ denote the interior of this fundamental domain. Furthermore, let the positions of the atoms within the fundamental domain be $\calP_{\calG_2} = \big \{\bfx_k \big\}_{k=1}^{M_{\calG_2}}$ and let $\big\{b_k(\bfx;\bfx_k)\big\}_{k=1}^{M_{\calG_2}}$ denote the corresponding nuclear pseudocharges. The helical Bloch states presented above in eq.~\ref{eq:helical_Bloch_two_generator} allow the single electron problem for the entire helical structure to be reduced to the fundamental domain $\calD_{\calG_2}$ (or its interior  $\widetilde{D}$).

Let $g_j(\eta,\nu) = f_{T_{\text{e}}}\big(\lambda_j(\eta, \nu)\big)$ denote the electronic occupation numbers at electronic temperature $T_{\text{e}}$. The electron density for $\bfx \in \widetilde{D}$ can be expressed as:
\begin{align}
\rho(\bfx) = 2\int_{\mathfrak{I}}\frac{1}{\mathfrak{N}}\sum_{\nu = 0}^{\mathfrak{N}-1}\sum_{j=1}^{\infty}\,g_j(\eta,\nu)\,\lvert \psi_j(\bfx; \eta, \nu)\rvert^2\,d\eta\,.
\label{eq:electron_density_G2}
\end{align}
The electron density needs to obey the constraint of integrating to a fixed number of electrons in the fundamental domain, i.e.,
\begin{align}
\int_{\widetilde{\calD}} \rho(\bfx)\,d\bfx = N_{\text{e}}\, \implies 2\int_{\mathfrak{I}}\frac{1}{\mathfrak{N}}\sum_{\nu = 0}^{\mathfrak{N}-1}\sum_{j=1}^{\infty} g_j(\eta, \nu) = N_{\text{e}}\,.
\label{eq:occupation_constraint_G2}
\end{align}
The electronic free energy per unit fundamental domain can be expressed as:
\begin{align}
\calF(\Lambda, \Psi, \calP_{\calG_2}, \widetilde{D}, \calG_2) = T_{\text{s}}&(\Lambda, \Psi,\widetilde{D}, \calG_2) \,+\,E_{\text{xc}}(\rho,\widetilde{D})\,+\,K(\Lambda, \Psi, \calP_{\calG_2}, \widetilde{D}, \calG_2)\\
&+ E_{\text{el}}(\rho, \calP_{\calG_2}, \widetilde{D}, \calG_2)\,-\,T_{\text{e}}\,S(\Lambda)\,.
\label{eq:Free_Energy_G2}
\end{align}
The terms on the right-hand side of the above equation are as follows. The first term is the kinetic energy of the electrons per unit fundamental domain and is expressible as:
\begin{align}
T_{\text{s}}(\Lambda, \Psi,\widetilde{\calD}, \calG_2) = \int_{\mathfrak{I}}\frac{1}{\mathfrak{N}}\sum_{\nu = 0}^{\mathfrak{N}-1}\bigg(\sum_{j=1}^{\infty}g_j(\eta,\nu)\big\langle{\Delta \psi_j(\cdot; \eta,\nu)},{\psi_j(\cdot; \eta,\nu)}\big\rangle_{\Lpspc{2}{}{\widetilde{D}}}\,\bigg)\,d\eta\,.
\label{eq:KE_G2}
\end{align}
The second term is the exchange correlation energy per unit fundamental domain and is expressible as:
\begin{align}
E_{\text{xc}}(\rho, \widetilde{\calD}) = \int_{\widetilde{\calD}}\varepsilon_{\text{xc}}[\rho(\bfx)]\,\rho(\bfx)\,d\bfx\,.
\end{align}
The third term is the nonlocal pseudopotential energy per unit fundamental domain and is expressible as:
\begin{align}
\nonumber
&K(\Lambda, \Psi, \calP_{\calG_2}, \widetilde{\calD}, \calG_2)\\
&= 2\,\sum_{k=1}^{M_{\calG_2}}\sum_{p \in \calN_k}\sum_{j=1}^{\infty}\gamma_{k,p}\int_{\mathfrak{I}}\frac{1}{\mathfrak{N}}\sum_{\nu = 0}^{\mathfrak{N}-1}\bigg(g_j(\eta, \nu)\bigg\lvert{\big\langle{\hat{\chi}_{k,p}(\cdot;\eta, \nu;\bfx_k)},{\psi_j(\cdot; \eta, \nu)}\big\rangle}_{\Lpspc{2}{}{\widetilde{\calD}}}\bigg\rvert^2\bigg)\,d\eta\,,
\end{align}
with the functions $\hat{\chi}_{k,p}(\bfx;\eta, \nu;\bfx_k) $ being related to the helical Bloch Floquet transform associated with the group $\calG_2$ (eq.~\ref{eq:operatorU_G2}) i.e.,:
\begin{align}
\hat{\chi}_{k,p}(\bfx;\eta, \nu;\bfx_k) = \sum_{m\in \gz}\sum_{n=0}^{\mathfrak{N}-1}{\chi}_{k;p}\big( (\Upsilon^{m}_{\mathsf{h}}\bullet\Upsilon^{n}_{\mathsf{c}})\circ \bfx;\bfx_k\big)\,e^{i2\pi (m\eta + \frac{n\nu}{\mathfrak{N}})}\,.
\end{align}
Here, as in eq.~\ref{eq:Kleinman_Bylander_form}, ${\chi}_{k;p}(\cdot;\bfp_k)$ denote the atom centered nonlocal pseudopotential operator projection functions.
The fourth term on the right-hand side of eq.~\ref{eq:Free_Energy_G2} is the net electrostatic interaction energy per unit fundamental domain, and can be written as:
\begin{align}
\nonumber
E_{\text{el}}(\rho, \calP_{\calG_2}, \widetilde{\calD}, \calG_2) =  \max_{\Phi}\bigg\{-&\frac{1}{8\pi}\int_{\widetilde{\calD}}\abs{\nabla \Phi}^2\,d\bfx + \int_{\widetilde{\calD}}\bigg(\rho(\bfx) + b(\bfx, ,\calP_{\calG_2},\calG_2)\bigg)\Phi(\bfx)\,d\bfx\bigg\}\\ 
&+ E_{\text{sc}}(\calP_{\calG_2},\calG_2, \widetilde{\calD})\,.
\label{eq:ES_G2}
\end{align}
Here, $b(\bfx, ,\calP_{\calG_2},\calG_2)$ is the net nuclear pseudocharge:
\begin{align}
b(\bfx,\calP_{\calG_2},\calG_2) = \sum_{m \in \gz}\sum_{n = 0}^{\mathfrak{N}-1}\sum_{k = 1}^{M_{\calG_2}}b_{k}\big(\bfx; (\Upsilon^{m}_{\mathsf{h}}\bullet\Upsilon^{n}_{\mathsf{c}})\circ \bfx_k\big)\,,
\label{eq:Net_pseudocharge_G2}
\end{align}
and $E_{\text{sc}}(\calP_{\calG_2},\calG_2, \widetilde{\calD})$ represents self interaction and correction terms. The net electrostatic potential $\Phi$ can be expressed as the Newtonian potential:\footnote{As in the main text (Section \ref{subsubsec:Governing_Equations}), we have, with a minor abuse of notation, used $\Phi$ to denote the ``trial''  electrostatic potentials involved in the maximization problem in eq.~\ref{eq:ES_G2} as well as the actual potential that achieves this maximum, i.e., eq.~\ref{eq:ES_phi_formula_finite_G2}.}
\begin{align}
\Phi(\bfx) = \int_{\calC} \frac{\rho(\bfy) +b(\bfy,\calP_{\calG_2},\calG_2)}{\norm{\bfx - \bfy}{\rz^3}}\,d\bfy\,.
\label{eq:ES_phi_formula_finite_G2} 
\end{align}
The last term on the right-hand side of eq.~\ref{eq:Free_Energy_G2} is the electronic entropy  contribution:
\begin{align}\nonumber
&S(\Lambda) \\&= -2\,k_{\text{B}}\int_{\mathfrak{I}}\frac{1}{\mathfrak{N}}\sum_{\nu=0}^{\mathfrak{N}-1}\bigg[\sum_{j=1}^{\infty}g_j(\eta, \nu)\,\log\big(g_j(\eta, \nu)\big) + \big(1 - g_j(\eta, \nu)\big)\,\log\big(1 - g_j(\eta, \nu)\big)\bigg]\,d\eta
\end{align}

The symmetry adapted Kohn-Sham equations for the system, as posed over the fundamental domain are:
\begin{align}
\hamil^{\text{KS}}_{[\Lambda, \Psi, \calP_{\calG_2}, \widetilde{D}, \calG_2]}\,\psi_j(\cdot;\eta, \nu) = \lambda_j(\eta, \nu)\,\psi_j(\cdot;\eta, \nu)\,,
\label{eq:Euler_Lagrange_G2}
\end{align}
for $j\in\nz,\eta \in\mathfrak{I}, \nu \in \{0,1,\ldots,\mathfrak{N}-1\}$. Here,
\begin{align}
\hamil^{\text{KS}}_{[\Lambda, \Psi, \calP_{\calG_2}, \widetilde{D}, \calG_2]}\equiv -\half \Delta + V_{\text{xc}} + \Phi + \widehat{\calV}^{\text{nl}}_{\widetilde{\calD}}\,,
\label{eq:Euler_Lagrange_G2_2}
\end{align}
is the Kohn-Sham operator, with its dependence on the helical Bloch sates, the helical bands, etc., made explicit\footnote{As in the discussion in Section \ref{subsubsec:Governing_Equations}, we may change notation and express the free energy per unit cell in terms of the occupations as $\widetilde{\calF}(\mathfrak{G}, \Psi, \calP_{\calG_2}, \widetilde{\calD}, \calG_2) = \calF(\Lambda, \Psi, \calP_{\calG_2}, \widetilde{\calD}, \calG_2)$, instead of using the eigenvalues. With this change in notation, the problem of determination of the Kohn-Sham ground state for a given helical structure, can be expressed as: 
\begin{align}
\widetilde{\calF}_{0}(\calP_{\calG_2}, \widetilde{\calD}, \calG_2) = \text{inf.}_{{\Psi,\mathfrak{G}}}\,\widetilde{\calF}(\mathfrak{G}, \Psi, \calP_{\calG_2}, \widetilde{\calD}, \calG_2)\,,
\end{align}
subject to the condition that $\Psi$ consists of helical Bloch states and eq.~\ref{eq:occupation_constraint_G2}. Similarly, the Kohn Sham operator in eqs.~\ref{eq:Euler_Lagrange_G2} and \ref{eq:Euler_Lagrange_G2_2} may be denoted as as $\hamil^{\text{KS}}_{[\mathfrak{G}, \Psi, \calP_{\calG_2}, \widetilde{D}, \calG_2]}$.} . As before, $ V_{\text{xc}}$ represents the exchange correlation potential, while the nonlocal pseudopotential operator $\widehat{\calV}^{\text{nl}}_{\widetilde{\calD}}$ has fibers (in coordinate representation):
\begin{align}
\widehat{\calV}^{\text{nl}}_{\widetilde{\calD}}(\bfx,\bfy;\eta,\nu) = \sum_{k=1}^{M_{\calG_2}}\sum_{p \in \calN_k}\gamma_{k,p}\,\hat{\chi}_{k,p}(\bfx;\eta, \nu;\bfx_k)\,\otimes\,\overline{\hat{\chi}_{k,p}(\bfy;\eta, \nu;\bfx_k)}
\end{align}
The helical Bloch states obey (for $i,j\in\nz,$):
\begin{align}
\innprod{\psi_i(\cdot;\eta, \nu)}{\psi_j(\cdot;\eta, \nu)}{\Lpspc{2}{}{\widetilde{\calD}}} = \delta_{i,j}\,.
\end{align}
The net electrostatic potential obeys:
\begin{align}
-\Delta \Phi &= 4\pi\,\big(\rho + b(\cdot,\calP_{\calG_2},\calG_2)\big)\,,
\label{eq:Poisson_Equation_G2}
\end{align}
and is invariant under $\calG_2$. Finally, the Harris-Foulkes energy functional is:
\begin{align}
\nonumber
{\calF}^{\text{HF}}(\Lambda, \Psi, \calP_{\calG_2}, \widetilde{\calD}, \calG_2) &=  2\int_{\mathfrak{I}}\frac{1}{\mathfrak{N}}\sum_{\nu = 0}^{\mathfrak{N}-1}\sum_{j=1}^{\infty}\lambda_j(\eta, \nu)\,g_j(\eta, \nu)\,d\eta + E_{\text{xc}}(\rho,  \widetilde{\calD}) \\\nonumber &-\int_{ \widetilde{\calD}}V_{\text{xc}}(\rho(\bfx))\rho(\bfx)\,d\bfx
+\half\int_{ \widetilde{\calD}}\bigg(b(\bfx, ,\calP_{\calG_2},\calG_2) - \rho(\bfx) \bigg)\Phi(\bfx)\,d\bfx \\&+ E_{\text{sc}}(\calP_{\calG_2},\calG_2, \widetilde{\calD}) - T_{\text{e}}\,S(\Lambda)\,.
\end{align}
and the Hellmann-Feynman forces on the atoms in the fundamental domain are:
\begin{align}
\nonumber
\mathbf{f}_k &=\sum_{m \in \gz}\sum_{n = 0}^{\mathfrak{N}-1}(\bfR_{2\pi m \alpha + n\Theta})^{-1} \int_{\widetilde{\calD}}\nabla b_{k}\big(\bfx;(\Upsilon^{m}_{\mathsf{h}}\bullet\Upsilon^{n}_{\mathsf{c}})\circ \bfx_k \big)\Phi(\bfx)\,d\bfx - \pd{E_{\text{sc}}(\calP_{\calG_2},\calG_2, \widetilde{\calD})}{\bfx_k}\\\nonumber
&-4\sum_{j=1}^{\infty}\Bigg(\int_{\mathfrak{I}} \frac{1}{\mathfrak{N}}\sum_{\nu = 0}^{\mathfrak{N}-1} g_j(\eta, \nu)\sum_{p \in \mathcal{N}_k}\gamma_{k;p}\,\text{Re.}\Bigg\{\bigg[\int_{\widetilde{\calD}} \hat{\chi}_{k,p}(\bfx;\eta, \nu;\bfx_k)\,\overline{\psi_j(\bfx;\eta, \nu)}\,d\bfx\bigg]\\
&\quad\quad\quad\quad\quad\quad\quad\quad\quad\quad\;\,\quad\quad\quad\quad\quad\times\bigg[\int_{\widetilde{\calD}} \psi_j(\bfx;\eta, \nu)\,\overline{\pd{\hat{\chi}_{k,p}(\bfx;\eta, \nu;\bfx_k)}{\bfx_k}}\,d\bfx\bigg]
\Bigg\}\Bigg)\,d\eta
\end{align}
A more comprehensive account of the above equations as well as extensive numerical simulations involving them appear in a follow-up contribution \citep{Helical_DFT_Paper_2}.
\section{Cartesian gradient operator, Laplacian operator and integrals in helical coordinates}
\label{appendix:grad_laplace_integral_helical}
We derive expressions in helical coordinates for the Cartesian gradient operator (useful in expressing forces on the atoms), the Laplacian operator (useful for expressing the Schrodinger operator), and integrals (useful for computing energies) in this Appendix. The calculations involved are straight forward applications of the chain rule and they  originally appear in \citep{My_PhD_Thesis}. We include these here for the sake of completeness.

Let $\xi(x_1,x_2,x_3)$ be a generic scalar quantity expressed in Cartesian coordinates. In helical coordinates, this can be expressed as:
\beq
\label{eqn_for_xi}
\xi(x_1,x_2,x_3)=\xihat \big(r(x_1,x_2,x_3),\theta_1(x_1,x_2,x_3),\theta_2(x_1,x_2,x_3) \big)\,,
\eeq
We wish to evaluate the Cartesian gradient:
\begin{align}
\label{nabla_xi}
\nabla \xi = \pd{\xi}{x_1} \bfe_1 +  \pd{\xi}{x_2} \bfe_2 + \pd{\xi}{x_3} \bfe_3\,,
\end{align}
and the Laplacian:
\begin{align}
\label{Laplacian_xi}
\Delta \xi = \hpd{\xi}{x_1}{2} +  \hpd{\xi}{x_2}{2} + \hpd{\xi}{x_3}{2}\,,
\end{align}
in terms of the function $\xihat$ and the coordinates $r,\theta_1,\theta_2$. Let $i,j=1,\ldots,3$. We then have by the chain rule:
\begin{align}
\label{chain_rule}
\pd{\xi}{x_i}=\pd{\xihat}{r}\pd{r}{x_i}+
\pd{\xihat}{\theta_1}\pd{\theta_1}{x_i}+
\pd{\xihat}{\theta_2}\pd{\theta_2}{x_i}\,,
\end{align}
and further,
\begin{align}
\label{chain_rule_again}
\nonumber
\frac{\partial^2 \xi}{\partial x_i \partial x_j}=& \pd{\xihat}{r}
\frac{\partial^2 r}{\partial x_i \partial x_j}+\pd{\xihat}{\theta_1}
\frac{\partial^2 \theta_1}{\partial x_i \partial x_j}+\pd{\xihat}{\theta_2}
\frac{\partial^2 \theta_2}{\partial x_i \partial x_j}\\
&+\pd{\big(\pd{\xihat}{r}\big)}{x_j}\pd{r}{x_i}
+\pd{\big(\pd{\xihat}{\theta_1}\big)}{x_j}\pd{\theta_1}{x_i}
+\pd{\big(\pd{\xihat}{\theta_2}\big)}{x_j}\pd{\theta_2}{x_i}\,.
\end{align}
The chain rule applied again to the last 3 terms gives us:
\begin{align}
\nonumber
\pd{\big(\pd{\xihat}{r}\big)}{x_j}\pd{r}{x_i}=\pd{r}{x_i}\bigg(
\hpd{\xihat}{r}{2}\pd{r}{x_j}+\frac{\partial^2 \xihat}{\partial r \partial \theta_1}\pd{\theta_1}{x_j}+
\frac{\partial^2 \xihat}{\partial r \partial \theta_2}\pd{\theta_2}{x_j}\bigg)\quad.\\\nonumber
\pd{\big(\pd{\xihat}{\theta_1}\big)}{x_j}\pd{\theta_1}{x_i}=\pd{\theta_1}{x_i}\bigg(
\frac{\partial^2 \xihat}{\partial \theta_1 \partial r}\pd{r}{x_j}+\hpd{\xihat}{\theta_1}{2}\pd{\theta_1}{x_j}+
\frac{\partial^2 \xihat}{\partial \theta_1 \partial \theta_2}\pd{\theta_2}{x_j}\bigg)\quad.\\
\pd{\big(\pd{\xihat}{\theta_2}\big)}{x_j}\pd{\theta_2}{x_i}=\pd{\theta_2}{x_i}\bigg(
\frac{\partial^2 \xihat}{\partial \theta_2 \partial r}\pd{r}{x_j}+
\frac{\partial^2 \xihat}{\partial \theta_2 \partial \theta_1}\pd{\theta_1}{x_j}+
\hpd{\xihat}{\theta_2}{2}\pd{\theta_2}{x_j}\bigg)\,.
\end{align}
For $i=j$, the above expressions can be combined to yield:
\begin{align}
\nonumber
\hpd{\xihat}{x_i}{2}&=
\hpd{\xihat}{r}{2}\bigg(\pd{r}{x_i}\bigg)^2+
\hpd{\xihat}{\theta_1}{2}\bigg(\pd{\theta_1}{x_i}\bigg)^2+
\hpd{\xihat}{\theta_2}{2}\bigg(\pd{\theta_2}{x_i}\bigg)^2\\
&+2\,\bigg(\frac{\partial^2 \xihat}{\partial r \partial\theta_1}\pd{r}{x_i}\pd{\theta_1}{x_i}+
\frac{\partial^2 \xihat}{\partial \theta_1 \partial\theta_2}\pd{\theta_1}{x_i}\pd{\theta_2}{x_i}+
\frac{\partial^2 \xihat}{\partial \theta_2 \partial r}\pd{\theta_2}{x_i}\pd{r}{x_i}\bigg)\quad.
\label{i_eq_j_laplace}
\end{align}
The helical coordinates and their first derivatives with respect to the Cartesian coordinates $x_1,x_2,x_3$ are as follows:
\begin{align}
\nonumber
r(x_1,x_2,x_3) &=\sqrt{x_1^2+x_2^2}\,,\,\theta_1(x_1,x_2,x_3) =\frac{x_3}{\tau}\,,\,\theta_2(x_1,x_2,x_3) =\frac{1}{2\pi}\arctan{(\frac{x_2}{x_1})}-\alpha \frac{x_3}{\tau}\,,\\\nonumber
\pd{r}{x_1} &=\frac{x_1}{\sqrt{x_1^2+x_2^2}}=
\frac{x_1}{r}\,,\,\pd{r}{x_2} =\frac{x_2}{\sqrt{x_1^2+x_2^2}}=\frac{x_2}{r}
\,,\,\pd{r}{x_3}=0\,,
\\\nonumber
\pd{\theta_1}{x_1} &= 0\,,\,\pd{\theta_1}{x_2}=0\;,\;\pd{\theta_1}{x_3}=
\frac{1}{\tau}\,,\\
\pd{\theta_2}{x_1} &= -\frac{1}{2\pi}\frac{x_2}{x_1^2+x_2^2}=
-\frac{1}{2\pi}\frac{x_2}{r^2}\,,\,
\pd{\theta_2}{x_2}=\frac{1}{2\pi}\frac{x_1}{x_1^2+x_2^2}
=\frac{1}{2\pi}\frac{x_1}{r^2}
\,,\,\pd{\theta_2}{x_3}=-\frac{\alpha}{\tau}\,.
\label{helical_1}
\end{align}
Using these, we get:
\begin{align}
\nonumber
\pd{\xi}{x_1} &= \pd{\xihat}{r}\pd{r}{x_1} + \pd{\xihat}{\theta_1}\pd{\theta_1}{x_1} + \pd{\xihat}{\theta_2}\pd{\theta_2}{x_1}\,\\
\label{grad_x1}
&=  \xihat_r \cos{\big(2\pi(\alpha\theta_1 + \theta_2)\big)} - \xihat_{\theta_2}\frac{\sin{\big(2\pi(\alpha\theta_1 + \theta_2)\big)}}{2\pi r}\,,\\\nonumber
\pd{\xi}{x_2} &= \pd{\xihat}{r}\pd{r}{x_2} + \pd{\xihat}{\theta_1}\pd{\theta_1}{x_2} + \pd{\xihat}{\theta_2}\pd{\theta_2}{x_2}\,\\
\label{grad_x2}
&= \xihat_r \sin{\big(2\pi(\alpha\theta_1 + \theta_2)\big)}+\xihat_{\theta_2}\frac{\cos{\big(2\pi(\alpha\theta_1 + \theta_2)\big)}}{2\pi r}\,,\\\nonumber
\pd{\xi}{x_3} &= \pd{\xihat}{r}\pd{r}{x_3} + \pd{\xihat}{\theta_1}\pd{\theta_1}{x_3} + \pd{\xihat}{\theta_2}\pd{\theta_2}{x_3}\,\\
\label{grad_x3}
&= \frac{1}{\tau}\big(\xihat_{\theta_1} - \alpha \xihat_{\theta_2}\big)\,,
\end{align}
which completes the calculation of the Cartesian gradient.

We now compute the second derivatives of the helical coordinates with respect to the Cartesian coordinates, but using eq.~\ref{i_eq_j_laplace}, we restrict ourselves only to the quantities that  would appear in the Laplacian:
\begin{align}
\label{helical_2}
\nonumber
\hpd{r}{x_1}{2}&=\frac{1}{r}-\frac{x_1^2}{r^3}\,,\,\hpd{\theta_1}{x_1}{2}=0\,,
\,\hpd{\theta_2}{x_1}{2}=\frac{1}{2\pi}\frac{2x_1x_2}{r^4}\,,\\\nonumber
\hpd{r}{x_2}{2}&=\frac{1}{r}-\frac{x_2^2}{r^3}\,,\,\hpd{\theta_1}{x_2}{2}=0\,,
\,\hpd{\theta_2}{x_2}{2}=-\frac{1}{2\pi}\frac{2x_1x_2}{r^4}\,,\\
\hpd{r}{x_3}{2} &= 0\,,\,\hpd{\theta_1}{x_3}{2}=0\,,\,\hpd{\theta_2}{x_3}{2}=0\,.
\end{align}
We are now ready to evaluate \eqref{chain_rule_again} through \eqref{i_eq_j_laplace}, \eqref{helical_1} and \eqref{helical_2}:
\begin{align}
\label{helical_3}
\nonumber
\hpd{\xi}{x_1}{2} &= \pd{\xihat}{r}\frac{\partial^2 r}{\partial x_1^2}+\pd{\xihat}{\theta_1}
\frac{\partial^2 \theta_1}{\partial x_1^2}+\pd{\xihat}{\theta_2}
\frac{\partial^2 \theta_2}{\partial x_1^2}+
\hpd{\xihat}{r}{2}\bigg(\pd{r}{x_1}\bigg)^2+
\hpd{\xihat}{\theta_1}{2}\bigg(\pd{\theta_1}{x_1}\bigg)^2+
\hpd{\xihat}{\theta_2}{2}\bigg(\pd{\theta_2}{x_1}\bigg)^2
\\\nonumber &+ 2\,\bigg(\frac{\partial^2 \xihat}{\partial r \partial\theta_1}\pd{r}{x_1}\pd{\theta_1}{x_1}+
\frac{\partial^2 \xihat}{\partial \theta_1 \partial\theta_2}\pd{\theta_1}{x_1}\pd{\theta_2}{x_1}+
\frac{\partial^2 \xihat}{\partial \theta_2 \partial r}\pd{\theta_2}{x_1}\pd{r}{x_1}\bigg)\\
&= \xihat_r\bigg(\frac{1}{r}-\frac{x_1^2}{r^3}\bigg)+\xihat_{\theta_2}\bigg(\frac{x_1x_2}{\pi r^4}\bigg)+\xihat_{rr}\frac{x_1^2}{r^2}+
\xihat_{\theta_1 \theta_2}\bigg(\frac{x_2^2}{4\pi^2r^4}\bigg)-\xihat_{\theta_2 r}\bigg(\frac{x_1x_2}{\pi r^3}\bigg)\,.
\end{align}
Similarly,
\begin{align}
\label{helical_4}
\nonumber
\hpd{\xi}{x_2}{2}&= \pd{\xihat}{r}\frac{\partial^2 r}{\partial x_2^2}+\pd{\xihat}{\theta_1}
\frac{\partial^2 \theta_1}{\partial x_2^2}+\pd{\xihat}{\theta_2}
\frac{\partial^2 \theta_2}{\partial x_2^2}+
\hpd{\xihat}{r}{2}\bigg(\pd{r}{x_2}\bigg)^2+
\hpd{\xihat}{\theta_1}{2}\bigg(\pd{\theta_1}{x_2}\bigg)^2+
\hpd{\xihat}{\theta_2}{2}\bigg(\pd{\theta_2}{x_2}\bigg)^2
\\\nonumber &+ 2\, \bigg(\frac{\partial^2 \xihat}{\partial r \partial\theta_1}\pd{r}{x_2}\pd{\theta_1}{x_2}+
\frac{\partial^2 \xihat}{\partial \theta_1 \partial\theta_2}\pd{\theta_1}{x_2}\pd{\theta_2}{x_2}+
\frac{\partial^2 \xihat}{\partial \theta_2 \partial r}\pd{\theta_2}{x_2}\pd{r}{x_2}\bigg)\\
&= \xihat_r\bigg(\frac{1}{r}-\frac{x_2^2}{r^3}\bigg)-\xihat_{\theta_2}\bigg(\frac{x_1x_2}{\pi r^4}\bigg)+\xihat_{rr}\frac{x_2^2}{r^2}
+\xihat_{\theta_2 \theta_2} \frac{x_1^2}{4 \pi^2r^4}+\xihat_{\theta_2 r}\frac{x_1x_2}{\pi r^3}\,.
\end{align}
and,
\begin{align}
\label{helical_5}
\nonumber
\hpd{\xi}{x_3}{2}&= \pd{\xihat}{r}\frac{\partial^2 r}{\partial x_3^2}+\pd{\xihat}{\theta_1}
\frac{\partial^2 \theta_1}{\partial x_3^2}+\pd{\xihat}{\theta_2}
\frac{\partial^2 \theta_2}{\partial x_3^2}+
\hpd{\xihat}{r}{2}(\pd{r}{x_3})^2+
\hpd{\xihat}{\theta_1}{2}\bigg(\pd{\theta_1}{x_3}\bigg)^2+
\hpd{\xihat}{\theta_2}{2}\bigg(\pd{\theta_2}{x_3}\bigg)^2
\\\nonumber &+ 2\,\bigg(\frac{\partial^2 \xihat}{\partial r \partial \theta_1}\pd{r}{x_3}\pd{\theta_1}{x_3}+
\frac{\partial^2 \xihat}{\partial \theta_1 \partial \theta_2}\pd{\theta_1}{x_3}\pd{\theta_2}{x_3}+
\frac{\partial^2 \xihat}{\partial \theta_2 \partial r}\pd{\theta_2}{x_3}\pd{r}{x_3}\bigg)\\
&= \xihat_{\theta_1 \theta_1}\frac{1}{\tau^2}+\xihat_{\theta_2 \theta_2}\frac{\alpha^2}{\tau^2}
-2\xihat_{\theta_1\theta_2}\frac{\alpha}{\tau^2}\,.
\end{align}
So, we have: 
\begin{align}
\nonumber
\Delta \xi &= \hpd{\xi}{x_1}{2}+\hpd{\xi}{x_2}{2}+\hpd{\xi}{x_3}{2}\\
\label{laplace_in_helical}
&= \xihat_{rr}+\frac{1}{r}\xihat_{r}+\frac{1}{\tau^2}\xihat_{\theta_1\theta_1}-
\frac{2\alpha}{\tau^2}\xihat_{\theta_1 \theta_2}+
\frac{1}{4\pi^2}\bigg(\frac{1}{r^2}+\frac{4\pi^2\alpha^2}{\tau^2}\bigg)\xihat_{\theta_2 \theta_2}\,.
\end{align}

Finally, we compute the Jacobian determinant of the transformation to helical coordinates as:
\begin{align}
\nonumber
&= \text{Det.}
\begin{bmatrix}
\pd{x_1}{r} & \pd{x_1}{\theta_1} & \pd{x_1}{\theta_2}\\
\pd{x_2}{r} & \pd{x_2}{\theta_1} & \pd{x_2}{\theta_2}\\
\pd{x_3}{r} & \pd{x_3}{\theta_1} & \pd{x_3}{\theta_2}
\end{bmatrix}\\\nonumber
&= \text{Det.}
\begin{bmatrix}
\cos(2\pi(\alpha \theta_1 + \theta_2)) & -2\pi\alpha r\sin(2\pi(\alpha\theta_1 + \theta_2)) & -2\pi r\sin(2\pi(\alpha\theta_1 + \theta_2))\\
\sin(2\pi(\alpha \theta_1 + \theta_2)) & 2\pi\alpha r\cos(2\pi(\alpha \theta_1 + \theta_2))  & 2\pi r\cos(2\pi(\alpha \theta_1 + \theta_2))\\
0 & \tau & 0
\end{bmatrix}
\\&= 2\pi\tau r
\end{align}
Thus, the integral of a scalar function $\xi(\bfx)$ over the simulation cell $\Omega$ (expressed in cylindrical coordinates as $\Omega = \big\{(r,\vartheta,z) \in \rz^3:R_{\text{in}} \leq r\leq R_{\text{out}}, 0 \leq \vartheta \leq \Theta, 0 \leq z \leq \tau\big\}$) can be written as:
\begin{align}
\nonumber
\int_{\Omega}\xi(\bfx)\,d\bfx &= \iiint_{(x_1,x_2,x_3) \in \Omega}\xi(x_1,x_2,x_3)\,dx_1\,dx_2\,dx_3\\
&=\int_{r=R_{\text{in}}}^{r=R_{\text{out}}}\int_{\theta_1 = 0}^{\theta_1 = 1}\int_{\theta_2 = 0}^{\theta_2 = \frac{1}{\mathfrak{N}}}\widetilde{\xi}(r,\theta_1,\theta_2)\,2\pi\tau r\,dr\,d{\theta_1}\,d{\theta_2}\,,
\end{align}
with $\widetilde{\xi}(r,\theta_1,\theta_2) = \xi(x_1(r,\theta_1,\theta_2) , x_2(r,\theta_1,\theta_2), x_3(r,\theta_1,\theta_2))$.
\begin{center}
---
\end{center}
\section*{Acknowledgement}
Some of the theoretical and computational foundations of this work were laid out while ASB was a graduate student at the University of Minnesota, and later a postdoctoral fellow at the Lawrence Berkeley National Laboratory. ASB would like to acknowledge support of the Scientific Discovery through Advanced Computing (SciDAC) program funded by U.S. Department of Energy, Office of Science, Advanced Scientific Computing Research and Basic Energy Sciences during his time at the Berkeley Lab, as well as the support of the following grants while at the University of Minnesota: AFOSR FA9550-15-1-0207, NSF-PIRE OISE-0967140, ONR N00014-14-1-0714 and the MURI project FA9550-12-1-0458 (administered by AFOSR).

ASB would like to acknowledge informative discussions and email communications with Carlos Garcia Cervera (Univ.~of California, Santa Barbara), Eric Cances (Ecole des Ponts ParisTech), Richard James (Univ.~of Minnesota), Ryan Elliott (Univ. of Minnesota), Phanish Suryanarayana (Georgia Institute of Technology), Paul Garrett (Univ.~of Minnesota), Kaushik Bhattacharya (Caltech), Lin Lin (Univ.~of California, Berkeley) and Chao Yang (Lawrence Berkeley National Lab). ASB would also like to thank Neha Bairoliya (Univ.~of Southern California) for her help in preparing some of the figures,  and also for providing encouragement and support during the preparation of this manuscript. Help from  Hsuan Ming Yu (UCLA)  in generating some of the simulation data in the paper is also gratefully acknowledged. 

Finally, ASB would like to acknowledge the anonymous reviewers for suggestions which helped in   improving the manuscript, as well as the Minnesota Supercomputing Institute (MSI) and UCLA's Institute for Digital Research and Education (IDRE) for making available the computing resources used in this work. 
\begin{center}
---
\end{center}
\bibliographystyle{elsarticle-num}
\bibliography{main}
\end{document}